\definecolor{supcol}{RGB}{10,50,180}
\DeclareMathOperator{\Tan}{Tan}
\newcommand{\mca}{\mathcal}
\newcommand{\mbb}{\mathbb}
\newcommand{\msf}{\mathsf}
\newcommand{\sop}[1]{\llbracket #1 \rrbracket}
\newcommand{\dket}[1]{| #1 \rangle\!\rangle}
\newcommand{\peq}{p^{\rm eq}}
\newcommand{\Var}[1]{{\rm var}[#1]}
\newtheorem{theorem}{Theorem}
\newtheorem{proposition}[theorem]{Proposition}
\newtheorem{lemma}[theorem]{Lemma}
\newtheorem{corollary}[theorem]{Corollary}
\begin{document}
\title{Thermodynamic Unification of Optimal Transport: Thermodynamic Uncertainty Relation, Minimum Dissipation, and Thermodynamic Speed Limits}

\author{Tan Van Vu}
\email{tanvu@rk.phys.keio.ac.jp}

\affiliation{Department of Physics, Keio University, 3-14-1 Hiyoshi, Kohoku-ku, Yokohama 223-8522, Japan}

\author{Keiji Saito}
\email{saitoh@rk.phys.keio.ac.jp}

\affiliation{Department of Physics, Keio University, 3-14-1 Hiyoshi, Kohoku-ku, Yokohama 223-8522, Japan}

\begin{abstract}
Thermodynamics serves as a universal means for studying physical systems from an energy perspective.
In recent years, with the establishment of the field of stochastic and quantum thermodynamics, the ideas of thermodynamics have been generalized to small fluctuating systems. 
Independently developed in mathematics and statistics, the optimal transport theory concerns the means by which one can optimally transport a source distribution to a target distribution, deriving a useful metric between probability distributions, called the Wasserstein distance.
Despite their seemingly unrelated nature, an intimate connection between these fields has been unveiled in the context of continuous-state Langevin dynamics, providing several important implications for nonequilibrium systems.
In this study, we elucidate an analogous connection for discrete cases by developing a thermodynamic framework for discrete optimal transport. 
We first introduce a novel quantity called dynamical state mobility, which significantly improves the thermodynamic uncertainty relation and provides insights into the precision of currents in nonequilibrium Markov jump processes.
We then derive variational formulas that connect the discrete Wasserstein distances to stochastic and quantum thermodynamics of discrete Markovian dynamics described by master equations.
Specifically, we rigorously prove that the Wasserstein distance equals the minimum product of irreversible entropy production and dynamical state mobility over all admissible Markovian dynamics.
These formulas not only unify the relationship between thermodynamics and the optimal transport theory for discrete and continuous cases but also generalize it to the quantum case.
In addition, we demonstrate that the obtained variational formulas lead to remarkable applications in stochastic and quantum thermodynamics, such as stringent thermodynamic speed limits and the finite-time Landauer principle.
These bounds are tight and can be saturated for arbitrary temperatures, even in the zero-temperature limit. Notably, the finite-time Landauer principle can explain finite dissipation even at extremely low temperatures, which cannot be explained by the conventional Landauer principle.
\end{abstract}

\date{\today}
\pacs{}
\maketitle

\section{Introduction}\label{sec:intro}
\subsection{Background}

Thermodynamics, which is built upon several axioms, is one of the most successful phenomenological theories for studying energy exchanges in macroscopic systems.
Originally developed for the purpose of understanding the behavior of steam engines, thermodynamics has since been applied to various fields of science and engineering.
The laws of thermodynamics show extraordinary universality and impose fundamental constraints on physical systems.

Beyond the macroscopic regime, the past two decades have witnessed substantial progress in extending the notions of conventional thermodynamics to microscopic systems, resulting in the frameworks of stochastic and quantum thermodynamics \cite{Sekimoto.2010,Seifert.2012.RPP,Vinjanampathy.2016.CP,Goold.2016.JPA,Deffner.2019}.
These comprehensive frameworks provide a means of investigating small nonequilibrium systems subject to significant fluctuations.
Various universal relations have been discovered, including fluctuation theorems \cite{Evans.1993.PRL,Gallavotti.1995.PRL,Crooks.1999.PRE,Jarzynski.2000.JSP,Esposito.2009.RMP,Campisi.2011.RMP}, thermodynamic uncertainty relations \cite{Barato.2015.PRL,Gingrich.2016.PRL,Horowitz.2017.PRE,Horowitz.2020.NP}, thermodynamic speed limits \cite{Shiraishi.2018.PRL,Ito.2018.PRL,Funo.2019.NJP,Ito.2020.PRX,Gupta.2020.PRE,Vo.2020.PRE,Vu.2021.PRL,Yoshimura.2021.PRL,Delvenne.2021.arxiv,Salazar.2022.arxiv,Vo.2022.arxiv}, and refinements of the Landauer principle \cite{Aurell.2011.PRL,Goold.2015.PRL,Proesmans.2020.PRL,Zhen.2021.PRL,Vu.2022.PRL,Lee.2022.arxiv}.
These equalities and inequalities characterize the fundamental limits of small systems and distinguish the possible from the impossible in terms of thermodynamics.
They are not only theoretically important but also lead to practical applications in estimating physically relevant quantities from experimental data, such as free energy \cite{Gore.2003.PNAS} and dissipation \cite{Li.2019.NC,Manikandan.2020.PRL,Vu.2020.PRE,Otsubo.2020.PRE,Kim.2020.PRL,Skinner.2021.PRL,Dechant.2021.PRX}.
In addition, information manipulations such as measurement, feedback, erasure, and copying have been incorporated into thermodynamics, leading to significant developments in several subfields, such as the thermodynamics of information \cite{Sagawa.2012.PTP,Parrondo.2015.NP} and computation \cite{Bennett.1982.IJTP,Wolpert.2019.JPA}.
In parallel, concepts from other fields, such as the resource theory \cite{Chitambar.2019.RMP,Lostaglio.2019.RPP} and information geometry \cite{Salamon.1983.PRL,Ruppeiner.1995.RMP,Amari.2000}, have also been integrated into thermodynamics, generating new avenues of research and offering new tools for analyzing thermodynamic processes \cite{Crooks.2007.PRL,Feng.2008.PRL,Machta.2015.PRL,Rotskoff.2017.PRE.GeoApp,Nicholson.2018.PRE,Scandi.2019.Q,Bryant.2020.PNAS,Abiuso.2020.E,Brandner.2020.PRL}.
Accordingly, the integration of thermodynamics with other disciplines provides new insights into the understanding of nonequilibrium systems.
\begin{figure*}[t]
\centering
\includegraphics[width=1\linewidth]{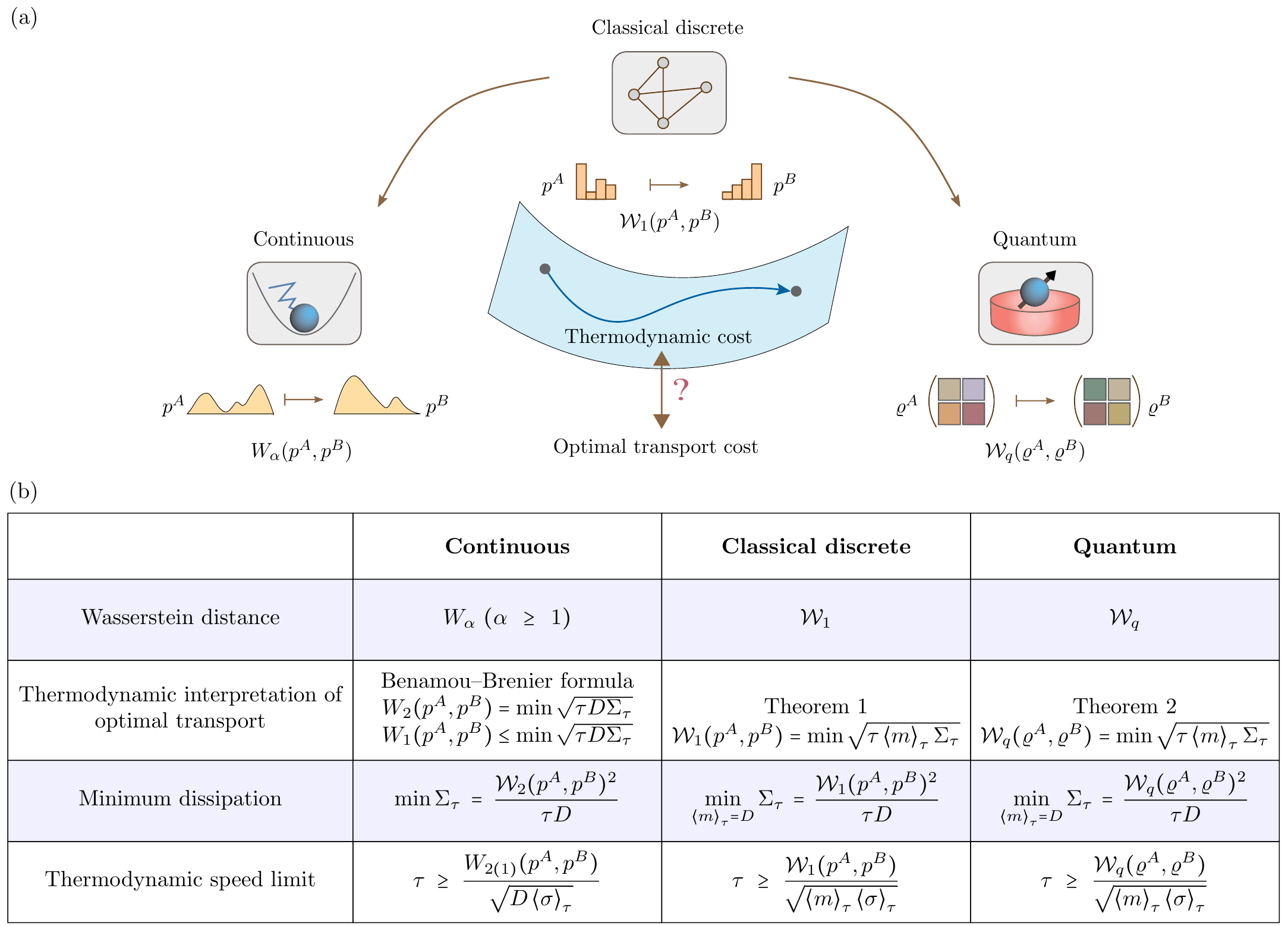}
\protect\caption{(a) Schematic of stochastic and quantum thermodynamics of discrete Markovian dynamics and the optimal transport problem. This study aims to reveal the connection between thermodynamics and the optimal transport theory for discrete cases, including both classical and quantum systems. (b) Our thermodynamic unification of optimal transport and its consequences, including minimum dissipation and thermodynamic speed limits. By introducing dynamical state mobility $m_t$, we provide a unified, discrete generalization of the Benamou--Brenier formula for classical and quantum discrete systems. See the subsection ``Summary of results'' for an explanation.}\label{fig:Cover}
\end{figure*}

A key quantity in thermodynamics is entropy production, which quantifies the degree of irreversibility of thermodynamic processes.
Entropy production plays a central role in the fundamental laws of thermodynamics and provides a quantitative characterization for investigating nonequilibrium processes; a comprehensive review regarding entropy production can be found in Ref.~\cite{Landi.2021.RMP}.
Recently, it has been shown that entropy production must be increased to achieve a high precision of currents \cite{Gingrich.2016.PRL} and fast state transformation \cite{Shiraishi.2018.PRL}.
However, minimizing entropy production is also a particularly relevant issue \cite{Schmiedl.2007.PRL,Chennakesavalu.2022.arxiv} because it is closely related to the energy lost to the environment.
Owing to the critical role of entropy production, great efforts have been made to elucidate its properties and its relationship with other physical quantities \cite{Saito.2016.EPL,Shiraishi.2016.PRL,Neri.2017.PRX,Pigolotti.2017.PRL,Pietzonka.2018.PRL,Manzano.2019.PRL,Falasco.2020.PRL}.

Optimal transport \cite{Villani.2008}, which is developed independently of thermodynamics, is a mature field in mathematics and statistics, and its theory concerns the optimal planning and optimal cost of transporting a distribution.
Specifically, given the individual costs of transporting a unit weight of a resource from one location to another, the optimal transport problem is to determine the optimal means of redistributing the distribution of the resource into the desired distribution to yield the lowest total cost.
Historically, the optimal transport problem was first defined by Monge in 1781 and has since been reformulated in a more general and well-defined form.
Currently, this problem has several theoretical and practical applications in a variety of scientific fields, including statistics and machine learning \cite{Kolouri.2017.SPM}, computer vision \cite{Haker.2004.IJCV}, linguistics \cite{Huang.2016.NIPS}, classical mechanics \cite{Koehl.2019.PRL}, and molecular biology \cite{Schiebinger.2019.Cell}.
It is noteworthy that the solution to this problem not only provides an optimal transport plan between distributions but also defines a useful metric in the space of probability distributions.
Although this metric has been identified by several names in the literature, such as the Monge--Kantorovich distance or earth mover's distance, we refer to it as the Wasserstein distance throughout this paper.

Because the optimal transport theory is concerned with transformations of probability distributions that are commonly used to characterize the state of small systems, whether any connection exists between the two disciplines of optimal transport and stochastic thermodynamics is a natural question.
Indeed, a deep connection between these fields has been elucidated in the context of overdamped Langevin dynamics, revealing that the problem of minimizing entropy production can be mapped to the optimal transport problem \cite{Jordan.1998.JMA,Aurell.2011.PRL,Aurell.2012.JSP,Dechant.2019.arxiv,Nakazato.2021.PRR,Chennakesavalu.2022.arxiv}.
More specifically, the minimum entropy production among all processes that transform the initial into the final distribution can be expressed in terms of the Wasserstein distance between the two distributions.
In addition, the optimal transport plan provides a feasible solution for the optimal control protocol.
The essence of this connection can be intuitively understood through the Benamou--Brenier formula \cite{Benamou.2000.NM}, which is given by the following equality:
\begin{align}
W_2(p^A,p^B)=\min{ \int_0^\tau \sqrt{D\sigma_t}\dd{t}}=\min\sqrt{D\tau\Sigma_\tau}.\label{eq:BB.form.intro}
\end{align}
Here, $W_2(p^A,p^B)$ is the $L^2$-Wasserstein distance [cf. Eq.~\eqref{eq:cWass.dist.def}], $D$ is the diffusion coefficient, $\tau$ is the operational time, $\sigma_t$ is the entropy production rate, $\Sigma_\tau$ is the total entropy production, and the minimum is taken over all the overdamped Langevin processes that transform distribution $p^A$ into $p^B$.
The variational relation \eqref{eq:BB.form.intro} links two apparently unrelated quantities, namely a mathematical metric and a thermodynamic cost.
The metric $W_2(p^A,p^B)$ on the left-hand side of Eq.~\eqref{eq:BB.form.intro} is a static quantity that is determined only by two distributions, whereas the right-hand side represents a dynamical quantity that indicates the thermodynamic cost associated with overdamped Langevin dynamics.
This formula leads to remarkable applications for overdamped Langevin dynamics, such as a thermodynamic speed limit \cite{Dechant.2019.arxiv} and a finite-time Landauer principle of information erasure for classical bits modeled by a continuous double-well potential \cite{Proesmans.2020.PRL,Proesmans.2020.PRE}, to name only two.
The finite-time correction in the Landauer principle indicated by the speed limit expression is consistent with experimental observations \cite{Brut.2012.N}.
Moreover, the bounds obtained from Eq.~\eqref{eq:BB.form.intro} are tight in the sense that, for any two given distributions, there always exists an overdamped Langevin dynamics that transforms the distributions and attains the equality of the bounds.

In contrast to continuous systems, a similar connection between optimal transport and thermodynamics is yet to be unveiled in discrete systems.
We note that stochastic thermodynamics of discrete systems is highly relevant to experiments \cite{Schuler.2005.PRL,Hekking.2013.PRL,Koski.2014.PNAS,Josefsson.2018.NN}.
Even in continuous systems such as biological systems, the dynamics can be represented by effective discrete states \cite{Elowitz.2002.S,Schliwa.2003.N,Stigler.2011.S,Cheong.2011.S}.
In addition, the Landauer principle of information erasure problem is a statement for discrete bit operations \cite{Landauer.1961.JRD}.
Therefore, elucidating the thermodynamic interpretation of the discrete optimal transport problem is essential for an in-depth understanding of the nonequilibrium thermodynamic structure and, particularly, for its application to the state transformation speed.

To reveal this type of relationship for discrete systems, two nontrivial points are worth noting.
First, the formula \eqref{eq:BB.form.intro} cannot be extended directly to discrete cases because no exact correspondence to the diffusion coefficient exists in generic discrete systems.
Even if a proper correspondence to the diffusion coefficient is defined for discrete cases, no guarantee can be given that the discrete Wasserstein distance can be expressed in the same manner as in Eq.~\eqref{eq:BB.form.intro}.
Second, previous studies have shown that without any additional constraint, the distribution of Markov jump processes can always be transformed to the target distribution with arbitrarily small entropy production \cite{Vu.2021.PRL2,Remlein.2021.PRE,Dechant.2022.JPA}.
This implies that entropy production alone is insufficient to characterize the transport cost (i.e., the Wasserstein distance), or equivalently, this implies that another quantity that plays the same role as the diffusion coefficient in continuous cases must be introduced along with entropy production.
These technical remarks are an obstacle to elucidating the relationship between optimal transport and thermodynamics in discrete cases.
Simultaneously, overcoming this obstacle is expected to reveal essential and common thermodynamic structures hidden in nonequilibrium processes.

With this background, we aim to elucidate the deep connection between thermodynamics and optimal transport in discrete cases (see Fig.~\ref{fig:Cover} for illustration).
Specifically, we develop discrete generalizations of the Benamou--Brenier formula in the context of Markovian open classical and quantum dynamics described by the master equations.
Our formulas not only unify the relationship between optimal transport and stochastic thermodynamics for discrete and continuous cases but also generalize to the quantum case.
Moreover, by developing a thermodynamic framework for discrete optimal transport, we can derive fundamental bounds for nonequilibrium systems, including the thermodynamic uncertainty relation, thermodynamic speed limits, and finite-time Landauer principle for both classical and quantum systems.
These bounds are tight and stronger than previously reported results.
Concerning the Landauer principle, finite-time information erasure should generate finite heat dissipation even at zero temperature. 
However, neither the original Landauer bound \cite{Landauer.1961.JRD} nor finite-time corrections that have been proposed thus far for discrete systems \cite{Zhen.2021.PRL,Vu.2022.PRL} can explain heat dissipation at extremely low temperatures. 
By contrast, our expression for the first time can predict heat dissipation even at extremely low temperatures.

\subsection{Summary of results}

The main contributions of this study can be summarized as follows.
\begin{itemize}
	\item[(1)] \emph{Dynamical state mobility and improved thermodynamic uncertainty relation}.---We define a novel kinetic quantity $m_t$ [cf.~Eq.~\eqref{eq:lt.def}], which is essential to our results. The motivation for this definition is derived from Eq.~\eqref{eq:BB.form.intro}, which suggests that a kinetic term is relevant for characterizing the Wasserstein distance. $m_t$ is defined by the microscopic Onsager-like coefficients and reduces to the diffusion coefficient $D$ in the continuous limit; thus, it is referred to as dynamical state mobility. Similar to dynamical activity \cite{Maes.2020.PR}, $m_t$ should be measurable in experiments. Using this kinetic term, we derive an improved thermodynamic uncertainty relation for time-integrated currents in Markov jump processes, which can be expressed as [cf.~Eq.~\eqref{eq:new.TUR}]
	\begin{equation}\label{eq:new.TUR.intr}
		\frac{\ev{J}^2}{\Var{J}}\le\eta\frac{\Sigma_\tau}{2},
	\end{equation}
	where $\ev{J}$ and $\Var{J}$ denote the mean and variance of an arbitrary current $J$, respectively, and $\eta\coloneqq 2\mca{M}_\tau/\mca{A}_\tau\le 1$ is an efficiency defined in terms of dynamical state mobility $\mca{M}_\tau\coloneqq\int_0^\tau m_t\dd{t}$ and dynamical activity $\mca{A}_\tau$.
	The inequality \eqref{eq:new.TUR.intr} indicates that the precision of currents is constrained by the product of the thermodynamic and kinetic costs divided by the timescale.
	Moreover, it provides new insights into the relationship between precision and cost in Markov jump processes; that is, increasing only the thermodynamic cost does not guarantee high precision of currents. Instead, given the same timescale $\mca{A}_\tau$, the product of the thermodynamic and kinetic costs must be increased to achieve high precision.
	Notably, the relation \eqref{eq:new.TUR.intr} is tighter than the conventional thermodynamic uncertainty relation \cite{Gingrich.2016.PRL,Horowitz.2017.PRE}.
	
	\item[(2)] \emph{Variational formulas that connect optimal transport to stochastic and quantum thermodynamics}.---Using the defined state mobility term, we derive variational formulas that relate the discrete Wasserstein distance to the thermodynamic and kinetic costs in Markovian dynamics. More specifically, we prove the following equality for the classical case (cf.~Thm.~\ref{thm:cla.dis.Wass.var}):
	\begin{align}
		\mca{W}_1(p^A,p^B)&=\min{ \int_0^\tau\sqrt{\sigma_t m_t}\dd{t} }=\min\sqrt{\Sigma_\tau\mca{M}_\tau},\label{eq:dis.BB.form.intro}
	\end{align}
	where $\mca{W}_1(p^A,p^B)$ is the discrete $L^1$-Wasserstein distance between two distributions $p^A$ and $p^B$ [cf.~Eq.~\eqref{eq:graph.Wass.dist}], and the minimum is over all admissible Markov jump processes that transform distribution $p^A$ into $p^B$ over a period $\tau$ with a given connectivity.
	The relation \eqref{eq:dis.BB.form.intro} provides a thermodynamic interpretation for the Wasserstein distance, implying that the Wasserstein distance is equal to the minimum product of the thermodynamic and kinetic costs.
	We also analogously generalize the formula \eqref{eq:dis.BB.form.intro} to the quantum case, in which the classical Wasserstein distance $\mca{W}_1(p^A,p^B)$ is replaced with a quantum Wasserstein distance $\mca{W}_q(\varrho^A,\varrho^B)$ between density matrices $\varrho^A$ and $\varrho^B$.
	These formulas can be considered as discrete generalizations of the Benamou--Brenier formula known in continuous cases.
	
	\item[(3)] \emph{Trade-off between irreversibility and state mobility}.---Through the developed variational formulas, we reveal a trade-off relation between the irreversibility and dynamical state mobility in discrete systems, which reads as follows:
		\begin{equation}\label{eq:tradeoff}
			\Sigma_\tau\mca{M}_\tau\ge\mca{W}_1(p^A,p^B)^2.
		\end{equation}
		The inequality \eqref{eq:tradeoff} implies that either the thermodynamic cost $\Sigma_\tau$ or kinetic cost $\mca{M}_\tau$ must be sacrificed (i.e., they cannot be simultaneously small) to evolve the system state.
	
	\item[(4)] \emph{Minimum dissipation and optimal protocol}.---The problem of minimizing entropy production in discrete systems is trivial if no constraints exist on the transition rates. Our results shed new light on this issue. More specifically, the formula \eqref{eq:dis.BB.form.intro} implies that fixing the dynamical state mobility $\mca{M}_\tau$ is a reasonable constraint from which minimum dissipation can be immediately determined through the Wasserstein distance, and the optimal control protocol can be constructed from the optimal transport problem. When additional constraints exist on system dynamics, we show that a lower bound on minimum dissipation can be obtained (see Fig.~\ref{fig:MinDiss} for illustration).
	
	\item[(5)] \emph{Thermodynamic speed limits}.---From the resulting variational formulas, we derive unified and stringent thermodynamic speed limits that place lower bounds on the time required for state transformation for both open classical and quantum systems. The classical bound reads [cf.~Eq.~\eqref{eq:speed.limit}]
	\begin{equation}\label{eq:cla.sl.intro}
		\tau\ge\frac{\mca{W}_1(p^A,p^B)}{\ev{\sqrt{\sigma m}}_\tau}\ge\frac{\mca{W}_1(p^A,p^B)}{\sqrt{\ev{\sigma}_\tau\ev{ m}_\tau}},
	\end{equation}
	where $\ev{x}_\tau$ denotes the time average of a time-dependent variable $x_t$. 
	The quantum bound has the same form, where $\mca{W}_1(p^A,p^B)$ is replaced with $\mca{W}_q(\varrho^A,\varrho^B)$.
	The inequality \eqref{eq:cla.sl.intro} implies that the speed of state transformation is constrained by the time average of the product of the thermodynamic and kinetic costs.
	Because we start from the equality relations, these bounds are tight and can always be saturated for {\it arbitrary} temperatures. 
	In other words, for an arbitrary pair of distributions or density matrices, there always exist Markovian dynamics that saturate the bounds.
	They are also stronger than previously known bounds \cite{Shiraishi.2018.PRL,Vu.2021.PRL}.
	
	\item[(6)] \emph{Finite-time Landauer principle}.---From the variational formulas, we derive finite-time lower bounds for heat dissipation $Q$ incurred in classical and quantum information erasure [cf.~Eqs.~\eqref{eq:cla.Landauer.prin} and \eqref{eq:qua.Landauer.prin}]. The bounds characterize both finite-time and finite-error effects on heat dissipation. 
	Several finite-time Landauer bounds have been derived for discrete systems in previous studies \cite{Zhen.2021.PRL,Vu.2022.PRL}. However, these bounds encounter the same problem as the conventional Landauer bound; that is, they lose the predictive power in the low-temperature regime. By contrast, our new bounds are tight for \emph{arbitrary} temperatures, even in the zero-temperature limit. A further simplified bound including a finite erasure error $\epsilon\ge 0$ reads [cf.~Eq.~\eqref{eq:fin.time.Landauer.error}]
	\begin{equation}
	Q\ge T[\ln d - h(\epsilon)] +\frac{(1-1/d-\epsilon)^2}{\tau\beta\ev{ m}_\tau},
	\end{equation}
	where $d$ is the system's dimension, $T$ is the temperature of the heat bath, $\beta$ is the inverse temperature, and $h(\epsilon)\ge 0$ is a function that vanishes as $\epsilon\to 0$.
	In the perfect-erasure ($\epsilon\to 0$) and quasistatic ($\tau\to \infty$) limits, the above bound reduces to the conventional Landauer bound $Q\ge T\ln d$. 
	
\end{itemize}

\subsection{Relevant literature}
Here, we briefly discuss several relevant studies that have attempted to link the (modified) Wasserstein distances to the thermodynamics of Markov jump processes.

The Benamou--Brenier formula has two facets. It not only provides a thermodynamic interpretation but also reveals a geometric structure for the continuous $L^2$-Wasserstein distance.
More specifically, $W_2$ can be interpreted as a Riemannian metric on the manifold of probability distribution functions.
Although the discrete $L^2$-Wasserstein distance is well defined and widely used in the literature, unfortunately, it does not possess a geometric interpretation, and its connection to thermodynamics also remains unclear.
For this reason, many studies have generalized the Wasserstein distance based on the geometric aspect of the Benamou--Brenier formula for discrete cases \cite{Maas.2011.JFA}. 
This modified Wasserstein distance places a lower bound on irreversible entropy production of Markov jump processes \cite{Vu.2021.PRL,Yoshimura.2022.arxiv}. However, it is system dependent because the transition rates are concretely used to define this distance.

In contrast to the previous direction, in this study, we consider the conventional discrete $L^1$-Wasserstein distance and focus on its thermodynamic interpretation. In this regard, Dechant has obtained some interesting results by relating the discrete $L^1$-Wasserstein distance to the entropy production and dynamical activity of Markov jump processes \cite{Dechant.2022.JPA}.
Here, we consider a different approach by introducing the dynamical state mobility and obtain discrete generalizations of the Benamou--Brenier formula. This approach not only unifies the classical discrete and continuous cases but also extends to the quantum case.
Although we focus on the discrete $L^1$-Wasserstein distance, it is noteworthy that the obtained generalizations of the Benamou--Brenier formula are similar to that for the continuous $L^2$-Wasserstein distance.
This suggests that the discrete $L^1$-Wasserstein distance may play the same role as the continuous $L^2$-Wasserstein distance in continuous cases.

The remainder of the paper is organized as follows. 
Section~\ref{sec:opt.tran.prob.cont.space} presents a review of the optimal transport problem and the relevant existing results in the context of continuous-state overdamped Langevin dynamics. We particularly emphasize the Benamou--Brenier formula of the $L^2$-Wasserstein distances and their connections to stochastic thermodynamics.
In Sec.~\ref{sec:sto.the.dis.sys}, we briefly introduce stochastic thermodynamics of classical Markovian dynamics. Next, we define the novel kinetic term $m_t$ and discuss its relevant properties. We then derive the improved thermodynamic uncertainty relation for Markov jump processes and numerically demonstrate it.
In Sec.~\ref{sec:opt.tran.dis.sys}, we describe the optimal transport problem in discrete cases and explain our first theorem that links the discrete Wasserstein distance to stochastic thermodynamics of Markov jump processes.
The relationship between the obtained and existing results in continuous cases is also discussed.
In Sec.~\ref{sec:quan.gen}, we define a quantum Wasserstein distance and explain our second theorem that generalizes the variational formula to the quantum case.
From the derived variational formulas, in Sec.~\ref{sec:appl}, we describe two applications: the thermodynamic speed limits and the finite-time Landauer principle.
In Sec.~\ref{sec:num.demon}, we numerically demonstrate our findings.
Finally, Sec.~\ref{sec:conc.disc} presents a conclusion with a discussion and outlook.
All detailed mathematical calculations and derivations can be found in the Appendixes.

\section{Review of continuous optimal transport}\label{sec:opt.tran.prob.cont.space}

In this section, we briefly review the optimal transport problem in continuous spaces and discuss the Benamou--Brenier formula, which provides a thermodynamic interpretation of the Wasserstein distances in the context of overdamped Langevin dynamics.

\subsection{Optimal transport problem}

First, we succinctly introduce the classical transport problem on the continuous space $\mbb{R}^d$ with $d\ge 1$ (see Ref.~\cite{Villani.2008} for details).
The problem of optimal transport---that is, determining how a pile of earth can be optimally transported into another pile of the same volume but with a different shape---was originally introduced by Monge.
The optimality here is interpreted to mean that the total transport cost is minimized with respect to a given cost metric.
Suppose that the source and target piles of earth are characterized by probability distribution functions $p^A(x)$ and $p^B(x)$, respectively, on the space $\mbb{R}^d$, and the cost metric is given by $c:\mbb{R}^d\times\mbb{R}^d\mapsto\mbb{R}_{\ge 0}$.
Then, the Monge optimal transport problem is to identify a one-to-one map $\varphi:\mbb{R}^d\to\mbb{R}^d$ that minimizes the objective function
\begin{equation}
\min_{\varphi}\int_{\mbb{R}^d}c(x,\varphi(x))p^A(x)\dd{x},
\end{equation}
where the minimum is over all $\varphi$ satisfying $p^A(x)=p^B(\varphi(x))|\det(\nabla\varphi(x))|$.
However, this formulation presents an issue regarding the non-existence of a valid transport map; that is, the map $\varphi$ might not exist in discrete cases because no mass can be split.
Fortunately, this issue was previously resolved by the relaxation of Kantorovich, which led to a more well-defined problem.
Instead of a transport map $\varphi(x)$, Kantorovich considered a transport plan $\pi(x,y)$ that is a joint probability distribution function and represents a coupling of $p^A$ and $p^B$.
This transport plan allows us to split a single mass and transport it to multiple target locations.
The Kantorovich problem can be formulated as an optimization of the following objective function:
\begin{equation}
\min_{\pi\in\Pi(p^A,p^B)}{\int_{\mbb{R}^d\times\mbb{R}^d}c(x,y)\pi(x,y)\dd{x}\dd{y}},
\end{equation}
where $\Pi(p^A,p^B)$ denotes the coupling set of joint probability distribution functions whose marginal distributions coincide with $p^A$ and $p^B$:
\begin{equation}
\int_{\mbb{R}^d}\pi(x,y)\dd{y}=p^A(x)~\text{and}~\int_{\mbb{R}^d}\pi(x,y)\dd{x}=p^B(y).
\end{equation}

The concept of optimal transport provides a means for defining useful metrics on continuous spaces of probability distribution functions.
By employing the cost metric of the Euclidean norm (i.e., $c(x,y)=\|x-y\|^\alpha$ for a positive number $\alpha\ge 1$), the Kantorovich problem reduces exactly to the $L^\alpha$-Wasserstein distance, which is defined as
\begin{equation}\label{eq:cWass.dist.def}
W_\alpha(p^A,p^B)^\alpha\coloneqq\min_{\pi\in\Pi(p^A,p^B)}{\int_{\mbb{R}^d\times\mbb{R}^d}\|x-y\|^\alpha\pi(x,y)\dd{x}\dd{y}}.
\end{equation}
The $L^\alpha$-Wasserstein distance is a genuine metric and satisfies the triangle inequality.
Applying H{\"o}lder's inequality, we can derive a hierarchical relationship, that is, $W_\alpha\le W_{\alpha'}$ for $\alpha\le\alpha'$.
Of the several that exist, the $L^1$- and $L^2$-Wasserstein distances are particularly relevant from the thermodynamic and geometric perspectives.
In the following, we discuss some remarkable properties of these two distances.

\subsection{Benamou--Brenier formula}
The Wasserstein distance can be expressed in a variational form in several ways.
Interestingly, Benamou and Brenier developed a variational formula for the $L^2$-Wasserstein distance in terms of fluid mechanics \cite{Benamou.2000.NM}.
The Benamou--Brenier formula casts the $L^2$-Wasserstein distance as a minimization problem of a time-integrated cost in terms of probability distribution functions and velocity fields:
\begin{equation}\label{eq:BB.form}
W_2(p^A,p^B)^2=\min_{v_t}{\tau\int_0^\tau\int_{\mbb{R}^d}\|v_t(x)\|^2p_t(x)\dd{x}\dd{t}},
\end{equation}
where the minimum is over all smooth paths $\{v_t\}_{0\le t\le\tau}$ subject to the continuity equation 
\begin{equation}\label{eq:cont.eq}
\dot p_t(x)+\nabla\vdot[v_t(x)p_t(x)]=0
\end{equation}
with the initial and final conditions $p_0(x)=p^A(x)$ and $p_\tau(x)=p^B(x)$, respectively.
Here, $\nabla$ is the del operator, and $\vdot$ denotes the standard Euclidean inner product between vectors.
Note that given any absolutely continuous curve $\{p_t\}$, we can always find a velocity field $\{v_t\}$ that satisfies Eq.~\eqref{eq:cont.eq}.
The formulation \eqref{eq:BB.form} not only enables us to find a numerical scheme for computing $W_2$ but also provides the thermodynamic and geometric interpretations of the $L^2$-Wasserstein distance.

Next, we discuss a thermodynamic interpretation of the $L^2$-Wasserstein distance (see Appendix \ref{app:W2.geo.prop} for a geometric interpretation). Consider an overdamped system on the continuous space $\mbb{R}^d$, which is constantly subject to a time-dependent force $F_t(x)$ and weakly coupled to a single heat bath.
The system state at time $t$ can be characterized by the probability distribution $p_t(x)$, the time evolution of which is described by the Fokker--Planck equation:
\begin{align}
\dot p_t(x)&=-\nabla\vdot[v_t(x)p_t(x)],\\
v_t(x)&=F_t(x)-D\nabla\ln p_t(x).\label{eq:velo.def}
\end{align}
Note that the velocity field $v_t(x)$ of the system in Eq.~\eqref{eq:velo.def} is an exact solution to the continuity equation~\eqref{eq:cont.eq}, which drives the source distribution $p_0(x)$ to the target distribution $p_\tau(x)$.
Based on the framework of stochastic thermodynamics, the irreversible entropy production during period $\tau$ can be calculated as \cite{Seifert.2012.RPP}
\begin{equation}
\Sigma_\tau=\frac{1}{D}\int_0^\tau\int_{\mbb{R}^d}\|v_t(x)\|^2p_t(x)\dd{x}\dd{t}.
\end{equation}
Irreversible entropy production clearly coincides with the time-integrated cost in the integration in Eq.~\eqref{eq:BB.form}, ignoring the scaling factor.
Therefore, we can rewrite the Benamou--Brenier formula as
\begin{equation}\label{eq:BB.form2}
W_2(p^A,p^B)=\min_{v_t}\sqrt{D\tau\Sigma_\tau}.
\end{equation}
From this, the following inequality can be immediately derived:
\begin{equation}\label{eq:Wass.bound.cont}
\Sigma_\tau\ge\frac{W_2(p_0,p_\tau)^2}{D\tau}.
\end{equation}
Inequality \eqref{eq:Wass.bound.cont} refines the second law of thermodynamics by providing a stronger bound on irreversible entropy production solely in terms of the initial and final distributions, given that the operational time and diffusion coefficient are fixed.
The bound can be interpreted as a thermodynamic speed limit:
\begin{equation}\label{eq:W2.csl}
\tau\ge\frac{W_2(p^A,p^B)}{\sqrt{D\ev{\sigma}_\tau}}.
\end{equation}
Moreover, it can also be applied to derive a finite-time Landauer principle \cite{Aurell.2012.JSP,Proesmans.2020.PRL}.
It is noteworthy that the bound can be saturated for any pair of the initial and final distributions and is tight even in the zero-temperature limit.

Because $v_t(x)$ and $F_t(x)$ can be considered a one-to-one correspondence, we can obtain the following equality between irreversible entropy production and the Wasserstein distance:
\begin{equation}\label{eq:Wass.ent.eq.cont}
\min_{F_t}\Sigma_\tau=\frac{W_2(p_0,p_\tau)^2}{D\tau}.
\end{equation}
This relation implies that the minimum entropy production in all overdamped processes that transform one distribution into another can be determined exactly by the Wasserstein distance between the two distributions.
In Refs.~\cite{Benamou.2000.NM,Aurell.2012.JSP}, it has been demonstrated that the minimum in Eq.~\eqref{eq:BB.form} can be achieved with a velocity field of the form $v_t(x)=-\nabla\phi_t(x)$, where $\phi_t(x)$ is a time-dependent potential.
Thus, the minimum entropy production can always be achieved with a conservative force $F_t(x)=-\nabla V_t(x)$, where $V_t(x)=\phi_t(x)-D\ln p_t(x)$ is a time-dependent potential.

\section{Stochastic thermodynamics of discrete systems}\label{sec:sto.the.dis.sys}
In this section, we first briefly introduce the stochastic thermodynamics of classical discrete Markovian dynamics described by the master equation; for a comprehensive review, one can refer to Ref.~\cite{Seifert.2012.RPP}.
We then define a novel physical quantity called dynamical state mobility, discuss its relevant properties, and derive an improved thermodynamic uncertainty relation.

\subsection{Markov jump processes}

We consider a discrete-state system with $N>1$ states, which is weakly attached to single or multiple thermal reservoirs.
Examples of these systems include diffusive processes on a lattice, biomolecular motors, chemical reaction networks, and quantum dots.
The system can be described in terms of a time-dependent probability distribution ${p_t}\coloneqq[p_1(t),\dots,p_N(t)]^\top$, where $p_x(t)$ denotes the probability of finding the system in state $x$ at time $t$.
Assume that the system is modeled by a time-continuous Markov jump process and that the transitions from a state $y$ to a state $x$ occur at a nonnegative rate $w_{xy}(t)$, which can be time dependent according to an external control protocol. 
The time evolution of the probability distribution is described by the master equation:
\begin{equation}\label{eq:Markov.mas.eq}
{\dot p_t}=\msf{W}_t{p_t},
\end{equation}
where dot $\cdot$ denotes the time derivative and $\msf{W}_t=[w_{xy}(t)]$ denotes the matrix of the transition rates with $w_{xx}(t)=-\sum_{y(\neq x)}w_{yx}(t)$.
We consider microscopically reversible dynamics, that is, $w_{xy}(t)>0$ if and only if $w_{yx}(t)>0$.
Hereafter, we assume that the transition rates satisfy the {\it local} detailed balance condition \cite{Seifert.2012.RPP}:
\begin{equation}\label{eq:ldbc}
\ln\frac{w_{xy}(t)}{w_{yx}(t)}=s_{xy}(t),
\end{equation}
where $s_{xy}(t)$ denotes the entropy change in the environment due to the jump from state $y$ to $x$ at time $t$.
If we fix the transition rates at any time, the system relaxes toward a stationary state, which may no longer be an equilibrium state.

In a case wherein the system is attached to a single reservoir at inverse temperature $\beta$ and the transitions between states are induced by the energy difference, the entropy change reads 
\begin{equation}\label{eq:gdbc}
s_{xy}(t)=\beta[\varepsilon_y(t)-\varepsilon_x(t)],
\end{equation} 
where $\varepsilon_x(t)$ denotes the instantaneous energy level of state $x$ at time $t$.
Whenever this occurs, we say that the system satisfies the {\it global} detailed balance condition.
Notably, the thermal state $\peq_x(t)\propto e^{-\beta\varepsilon_x(t)}$ becomes the instantaneous stationary state of the system (i.e., $\msf{W}_t{\peq_t}=0$).
Hereafter, we consider generic dynamics satisfying the local detailed balance condition.
However, dynamics satisfying Eq.~\eqref{eq:gdbc} will be used occasionally for physical interpretation of some quantities.

For convenience, we define the following quantities:
\begin{align}
a_{xy}(t)&\coloneqq w_{xy}(t)p_y(t),\\
j_{xy}(t)&\coloneqq w_{xy}(t)p_y(t)-w_{yx}(t)p_x(t),
\end{align}
which quantify the frequency of jumps and the probability current from state $y$ to $x$ at time $t$, respectively.

\subsection{Entropy production and dynamical activity}

Given the previous setup, we now discuss some relevant thermodynamic quantities.
One central quantity is irreversible entropy production, which quantifies the degree of irreversibility of the thermodynamic process.
Within the framework of stochastic thermodynamics, total entropy production during a period $\tau$ can be defined as
\begin{equation}
\Sigma_\tau\coloneqq\Delta S_{\rm sys}+\Delta S_{\rm env},
\end{equation}
where $\Delta S_{\rm sys}$ and $\Delta S_{\rm env}$ are the changes in the entropy of the system and the environment, respectively, expressed as
\begin{align}
\Delta S_{\rm sys}&=S(p_\tau)-S(p_0),\\
\Delta S_{\rm env}&=\int_0^\tau\sum_{x\neq y} a_{xy}(t) s_{xy}(t)\dd{t}.
\end{align}
Here, system entropy production is quantified via the Shannon entropy $S(p)\coloneqq-\sum_xp_x\ln p_x$, whereas the entropy change of the environment is defined as the sum of entropic contributions from each transition between states.
Simple calculations show that the entropy production rate is always nonnegative:
\begin{equation}\label{eq:ent.prod.def}
\sigma_t\coloneqq\dot\Sigma_t=\sum_{x>y}[a_{xy}(t)-a_{yx}(t)]\ln\frac{a_{xy}(t)}{a_{yx}(t)}\ge 0.
\end{equation}
The non-negativity of irreversible entropy production corresponds to the second law of thermodynamics.
The equality of this zero bound is attained only when the system is in the instantaneous thermal state at all times.

Another essential quantity in nonequilibrium thermodynamics is dynamical activity, quantified by the amplitude of transitions between states as
\begin{equation}
a_t\coloneqq\sum_{x\neq y}a_{xy}(t).
\end{equation}
The average number of jumps during period $\tau$ can be calculated as
\begin{equation}
\mca{A}_\tau\coloneqq\int_0^\tau a_t\dd{t}.
\end{equation}
The time average of dynamical activity characterizes the timescale of thermodynamic processes.
The higher the dynamical activity, the stronger the thermalization.
Entropy production and dynamical activity are the time-antisymmetric and time-symmetric parts, respectively, of the path-integral action with respect to a time-reversed process \cite{Maes.2020.PR}.
Both quantities constrain the fluctuation of currents according to the thermodynamic and kinetic uncertainty relations \cite{Gingrich.2016.PRL,Terlizzi.2019.JPA}.

\subsection{Dynamical state mobility}
\subsubsection{Definition}
Here, we introduce a new quantity called dynamical state mobility, which plays a crucial role in our results.
Before getting into the details, let us briefly recall the linear response relations \cite{Onsager.1931.PR.I,Onsager.1931.PR.II}, which express the equalities between currents and forces in near-equilibrium systems.
Consider an irreversible transport process driven by thermodynamic forces $F=[F_x]^\top$, such as affinities in temperatures or chemical potentials.
Let $J=[J_x]^\top$ be the thermodynamic currents that characterize the response of the system to the applied forces.
In a linear-response regime, the currents depend only on the thermodynamic forces and can be expressed by the following linear relations:
\begin{equation}\label{eq:lin.res.eq}
J_x=\sum_y \mu_{xy} F_y~\text{or}~{J}=\msf{L}{F}.
\end{equation}
These relations \eqref{eq:lin.res.eq} are referred to as linear response equations, where the coefficients $\mu_{xy}$ are known as Onsager kinetic coefficients, and $\msf{L}=[\mu_{xy}]$ is called the Onsager matrix.
Onsager reciprocal relations imply that in the case of time-reversal symmetry, the Onsager matrix is symmetric (i.e., $\mu_{xy}=\mu_{yx}$). 
In addition, the entropy production rate can be expressed in a quadratic form of the forces as 
\begin{equation}\label{eq:ent.lin.res}
\sigma=\sum_x J_x F_x=\sum_{x,y} \mu_{xy}F_x F_y~\text{or}~\sigma=F^\top\msf{L}{F} .
\end{equation}
The non-negativity of the entropy production rate immediately derives that $\msf{L}$ is positive semi-definite.

The Onsager coefficients characterize the response of dynamics close to equilibrium at the {\it macroscopic} level.
Nevertheless, they can be mimicked to dynamics far from equilibrium at the {\it microscopic} level.
To show this, let us focus on local transitions between states.
The generalized thermodynamic force associated with each transition from $y$ to $x$ is defined as \cite{Gingrich.2016.PRL}
\begin{equation}
f_{xy}(t)\coloneqq\ln\frac{a_{xy}(t)}{a_{yx}(t)},
\end{equation}
which is the sum of the entropy changes in the system and environment derived from the jump.
Since $j_{xy}(t)$ is the probability current associated with the transition from $y$ to $x$, defining the following coefficient is logical:
\begin{equation}\label{eq:lmn.def}
m_{xy}(t)\coloneqq\frac{a_{xy}(t)-a_{yx}(t)}{\ln a_{xy}(t)-\ln a_{yx}(t)}=\frac{j_{xy}(t)}{f_{xy}(t)}.
\end{equation}
Intuitively, $\{m_{xy}(t)\}$ characterize the responses of the probability currents against the thermodynamic forces at the transition level.
Notice that $\{m_{xy}(t)\}$ are always nonnegative and symmetric (i.e., $m_{xy}(t)=m_{yx}(t)\ge 0$).
Equation \eqref{eq:lmn.def} can also be rewritten in the form of
\begin{equation}\label{eq:current.force}
j_{xy}(t)=m_{xy}(t)f_{xy}(t),
\end{equation}
which shows that the currents and thermodynamic forces can be linearly related in terms of the coefficients $\{m_{xy}(t)\}$.
Moreover, the entropy production rate can be expressed in a quadratic form of the thermodynamic forces $\{f_{xy}(t)\}$ as
\begin{equation}\label{eq:ent.force}
\sigma_t=\sum_{x>y}m_{xy}(t)f_{xy}(t)^2=\sum_{x>y}\sigma_{xy}(t).
\end{equation}
Here, we define the entropy production rate associated with each transition as $\sigma_{xy}(t)\coloneqq m_{xy}(t)f_{xy}(t)^2$.
Equations ~\eqref{eq:current.force} and \eqref{eq:ent.force} have the same algebraic forms as Eqs.~\eqref{eq:lin.res.eq} and \eqref{eq:ent.lin.res}, respectively, which suggests that the coefficients $\{m_{xy}(t)\}$ play similar roles with the Onsager coefficients for far-from-equilibrium systems.

In a weak-thermodynamic-force limit (i.e., $|f_{xy}(t)|\to 0$), the coefficient $m_{xy}(t)$ reduces to the average dynamical activity between states $x$ and $y$:
\begin{equation}\label{eq:einstein.like.rel}
m_{xy}(t)\to\frac{a_{xy}(t)+a_{yx}(t)}{2}.
\end{equation}
This is somewhat analogous to the Einstein relation on mobility in overdamped Langevin dynamics (see Table \ref{table:mobility.analog}). Note that the weak-thermodynamic-force limit is defined at the microscopic state level and can be achieved via two routes: the completely equilibrium limit in discrete systems and the continuous state limit (e.g., the limit from the discrete hopping particle system to the overdamped Fokker--Planck equation). In the continuous state limit, the difference between neighboring states $x$ and $y$ is infinitesimal, and thus, the force $f_{xy}(t)$ is also infinitesimal because $a_{xy}(t)/a_{yx}(t)\simeq 1$. We discuss the continuous state limit in the following subsection and show that the right-hand side in Eq.~\eqref{eq:einstein.like.rel} is proportional to the diffusion coefficient. In general, the following relation holds for the coefficient $m_{xy}(t)$:
\begin{align}
\sqrt{a_{xy}(t)a_{yx}(t)}\le m_{xy}(t)\le\frac{a_{xy}(t)+a_{yx}(t)}{2}.\label{eq:rel.lmn.amn}
\end{align}
It is thus natural to define the sum of $\{ m_{xy}(t)\}$ over all transitions:
\begin{equation}\label{eq:lt.def}
m_t\coloneqq\sum_{x>y}m_{xy}(t).
\end{equation}
For convenience, we refer to this term as dynamical state mobility throughout this paper.
This nomenclature derives from the analogy between microscopic coefficients $\{m_{xy}(t)\}$ and the macroscopic mobility (see Table \ref{table:mobility.analog}).

\begin{table}[t]
\centering
\caption{Analogy between the dynamical state mobility and macroscopic mobility.}
\label{table:mobility.analog}
\begin{tabular}{c|c}
\hline\hline
Microscopic level & Macroscopic level \\ [0.4ex]
\hline
$j_{xy}=m_{xy}f_{xy}$ & $J=\mu F$\\[0.4ex]
\hline
Einstein-like relation $|f_{xy}|\ll 1$ & Einstein relation $|F|\ll 1$\\
$m_{xy}=(a_{xy}+a_{yx})/2$ & $\mu=\beta D$\\[0.4ex]
\hline\hline
\end{tabular}
\end{table}

To clarify further the identity of $m_t$, let us consider a case in which the system is attached to a single reservoir and satisfies the global detailed balance condition.
In this case, the master equation \eqref{eq:Markov.mas.eq} can be written as \cite{Vu.2021.PRL}
\begin{equation}\label{eq:pt.Kf}
{\dot p_t}=\msf{K}_t{f_t},
\end{equation}
where $\msf{K}_t$ is a symmetric, positive semi-definite matrix given by
\begin{equation}
\msf{K}_t\coloneqq\frac{1}{2}\sum_{x\neq y}m_{xy}(t)\msf{E}_{xy}.
\end{equation}
Here, ${f_t}=[f_1(t),...,f_N(t)]^\top$ with $f_x(t)=-\ln p_x(t)+\ln\peq_x(t)$ and $\msf{E}_{xy}=[e_{uv}]\in\mbb{R}^{N\times N}$ is a matrix with $e_{yy}=e_{xx}=1$, $e_{xy}=e_{yx}=-1$, and zeros in all other elements.
The quantities $\{f_x(t)\}$ are identified as the entropic thermodynamic forces, which characterize how far the system is driven from the instantaneous equilibrium state.
Equation \eqref{eq:pt.Kf} represents the linear relations between the rates ${\dot{p}_t}$ and forces ${f_t}$ through the symmetric matrix $\msf{K}_t$.
Furthermore, the total entropy production rate can be written in a quadratic form as \cite{Vu.2021.PRL} 
\begin{equation}\label{eq:ent.Kf}
\sigma_t=f_t^\top{\msf{K}_t}{f_t}.
\end{equation}
Therefore, Eqs.~\eqref{eq:pt.Kf} and \eqref{eq:ent.Kf} can be viewed as far-from-equilibrium counterparts of Eqs.~\eqref{eq:lin.res.eq} and \eqref{eq:ent.lin.res}, respectively.
The matrix $\msf{K}_t$ is thus identified as the Onsager-like matrix.
Because $\{m_{xy}(t)\}$ are elements of $\msf{K}_t$, they can be regarded as the Onsager-like kinetic coefficients for out-of-equilibrium systems.
From the definition of $\msf{K}_t$, we can easily verify that $m_t$ is exactly the sum of diagonal elements of the Onsager-like matrix:
\begin{equation}
 m_t=\frac{1}{2}\tr{\msf{K}_t}.
\end{equation}
Therefore, $m_t$ can be identified as a kinetic term.
The time integral of the kinetic term $m_t$ can be considered as the kinetic cost of Markov jump processes, defined by
\begin{equation}
\mca{M}_\tau\coloneqq\int_0^\tau  m_t\dd{t}=\tau\ev{m}_\tau,
\end{equation}
where we define the time-averaged quantity for arbitrary time-dependent quantity $x_t$ as
\begin{equation}
\ev{x}_\tau\coloneqq \tau^{-1}\int_0^\tau x_t\dd{t}.
\end{equation}

By summing both sides of Eq.~\eqref{eq:rel.lmn.amn} for all $x>y$, we can prove that dynamical state mobility is upper bounded by dynamical activity as
\begin{equation}\label{eq:lt.ub}
 m_t\le\frac{a_t}{2}.
\end{equation}
The equality of Eq.~\eqref{eq:lt.ub} can be achieved in the equilibrium limit.
It is thus evident that $\mca{M}_\tau\le\mca{A}_\tau/2$.

\subsubsection{Continuous limit}
Next, we investigate $m_t$ in the continuous limit. 
To this end, we consider an overdamped Brownian particle trapped in a one-dimensional potential $V_t(x)$.
Let $x_t$ denote the position of the particle at time $t$. Then, its dynamics is governed by the Langevin equation:
\begin{equation}
\dot{x}_t=F_t(x_t)+\sqrt{2D}\xi_t,
\end{equation}
where $F_t(x)\coloneqq-\partial_xV_t(x)$ is the total force applied on the particle, $\xi_t$ is a zero-mean Gaussian white noise with variance $\ev{\xi_t\xi_{t'}}=\delta(t-t')$, and $D>0$ is the diffusion coefficient.
Let $p_t(x)$ be the probability distribution function of finding the particle in state $x$ at time $t$. Then, its time evolution can be described by the Fokker--Planck equation:
\begin{equation}\label{eq:FP}
\dot p_t(x)=-\partial_x\qty[F_t(x)p_t(x)-D\partial_xp_t(x)],
\end{equation}
where we set $\mu=1$ for simplicity.
We now consider the discretization of the Fokker--Planck equation \eqref{eq:FP} with space interval $\Delta x>0$ and define $x_n\coloneqq n\Delta x$.
By defining the probability distribution and transition rates as
\begin{align}
p_n(t)&\coloneqq p_t(x_n)\Delta x,\\
w_{n+1,n}(t)&\coloneqq\frac{D}{(\Delta x)^2}\exp\qty[\frac{V_t(x_n)-V_t(x_{n+1})}{2D}]\notag\\
&\simeq\frac{F_t(x_n)}{2\Delta x}+\frac{D}{(\Delta x)^2},\label{eq:1d.dis.tran.rate1}\\
w_{n-1,n}(t)&\coloneqq\frac{D}{(\Delta x)^2}\exp\qty[\frac{V_t(x_{n})-V_t(x_{n-1})}{2D}]\notag\\
&\simeq\frac{-F_t(x_n)}{2\Delta x}+\frac{D}{(\Delta x)^2},\label{eq:1d.dis.tran.rate2}
\end{align}
we readily obtain the master equation:
\begin{align}
\dot p_n(t)&=w_{n,n-1}(t)p_{n-1}(t)+w_{n,n+1}(t)p_{n+1}(t)\notag\\
&-[w_{n+1,n}(t)+w_{n-1,n}(t)]p_n(t).
\end{align}
We note here from Eqs.~\eqref{eq:1d.dis.tran.rate1} and \eqref{eq:1d.dis.tran.rate2} that
\begin{align}
a_{n+1,n}(t) &\simeq \qty[ \frac{F_t(x_n)}{2\Delta x}+\frac{D}{(\Delta x)^2 }] p_n(t), \\
{a_{n+1,n}(t)+a_{n,n+1}(t) \over 2} &= Dp_t(x_n)(\Delta x)^{-1} + O(1).
\end{align}
The probability currents and thermodynamic forces can be calculated as
\begin{align}
j_{n+1,n}(t)&=F_t(x_n)p_t(x_n)-D\partial_xp_t(x_n)+{O}(\Delta x),\\
f_{n+1,n}(t)&=\qty[ \frac{F_t(x_n)}{D}-\frac{\partial_xp_t(x_n)}{p_t(x_n)} ]\Delta x+{O}(\Delta x^2).
\end{align}
From these expressions, $m_{n+1,n}(t)$ can be calculated via the definition \eqref{eq:lmn.def}, which gives
\begin{equation}\label{eq:lmn.cont}
m_{n+1,n}(t)\to \frac{a_{n+1,n}(t)+a_{n,n+1}(t)}{2}=Dp_t(x_n)(\Delta x)^{-1}+{O}(1).
\end{equation}
Equation \eqref{eq:lmn.cont} indicates that $m_{n+1,n}(t)$ converges to the value $[a_{n+1,n}(t)+a_{n,n+1}(t)]/2$, and these correspond to the product of the diffusion coefficient and probability distribution (see Table \ref{table:mobility.analog}).
Summing both sides of Eq.~\eqref{eq:lmn.cont} for all $n$, we obtain
\begin{equation}
 m_t\to\frac{a_t}{2}\simeq\sum_nDp_t(x_n)(\Delta x)^{-1}=D(\Delta x)^{-2}.
\end{equation}
This implies that $m_t$ is proportional to $D$ as the scaling factor is ignored. Thus, $m_t$ should play the same role as that of the diffusion coefficient.
It is noteworthy that the diffusion coefficient is exactly the diagonal Onsager coefficient of overdamped Langevin processes in the linear-response regime.

\subsection{Thermodynamic uncertainty relation:\\ State mobility is crucial in nonequilibrium}
Here, we describe an improved thermodynamic uncertainty relation, showing that the kinetic cost of dynamical state mobility plays a critical role in constraining the precision of time-integrated currents.
For simplicity, we consider a steady-state Markov jump process.
The generalization to the case of an arbitrary initial state and arbitrary time-dependent driving is straightforward.

Let $\Gamma=\qty{x_0,(t_1,x_1),\dots,(t_K,x_K)}$ be a stochastic trajectory, in which the system is initially at state $x_0$ and subsequently jumps from state $x_{k-1}$ to $x_k$ at time $t_k$ for each $k=1,\dots,K$.
For each stochastic trajectory $\Gamma$, we consider a time-antisymmetric current $J$, defined as
\begin{equation}
J(\Gamma)\coloneqq\sum_{k=1}^{K}\Upsilon_{x_{k}x_{k-1}}.
\end{equation}
Here, $\{\Upsilon_{xy}\}$ are arbitrary real coefficients satisfying $\Upsilon_{xy}=-\Upsilon_{yx}$ for all $x$ and $y$.
Examples of currents include the entropy flux and heat flux to the environment by specific choices of $\{\Upsilon_{xy}\}$.
The precision of current $J$ can be quantified by the square of the current mean divided by its variance $\ev{J}^2/\Var{J}$.
The conventional thermodynamic uncertainty relation \cite{Barato.2015.PRL,Gingrich.2016.PRL} sets an upper bound on the precision in terms of the total entropy production, given by the following inequality:
\begin{equation}\label{eq:org.TUR}
\frac{\ev{J}^2}{\Var{J}}\le\frac{\Sigma_\tau}{2}.
\end{equation}
Numerous studies have generalized this relation to other dynamics, from classical to quantum \cite{Proesmans.2017.EPL,Brandner.2018.PRL,Hasegawa.2019.PRE,Vu.2019.PRE.UnderdampedTUR,Hasegawa.2019.PRL,Timpanaro.2019.PRL,Guarnieri.2019.PRR,Carollo.2019.PRL,Dechant.2020.PNAS,Hasegawa.2020.PRL,Vu.2020.PRR,Liu.2020.PRL,Koyuk.2020.PRL,Wolpert.2020.PRL,Miller.2021.PRL.TUR,Hasegawa.2021.PRL,Lee.2021.PRE.TUR,Pal.2021.PRR,Vu.2022.PRL.TUR}.

We improve the thermodynamic uncertainty relation by proving that the precision of currents is upper bounded by the product of the thermodynamic and kinetic costs, as follows:
\begin{equation}\label{eq:new.TUR}
\frac{\ev{J}^2}{\Var{J}}\le\eta\frac{\Sigma_\tau}{2},
\end{equation}
where $\eta=2\mca{M}_\tau/\mca{A}_\tau\le 1$ can be regarded as an efficiency of dynamical activity (see Appendix \ref{app:proof.TUR} for the proof).
The new relation \eqref{eq:new.TUR} is tighter than the conventional relation \eqref{eq:org.TUR} and can be saturated in the case of a one-dimensional random walk.
Although the conventional relation \eqref{eq:org.TUR} implies that increasing dissipation is necessary to achieve high precision, it does not ensure the converse; that is, increasing dissipation is not sufficient for obtaining high precision of currents.
This can be explained through our relation, where the kinetic contribution $\eta$ appears in the bound in addition to the thermodynamic contribution.
For systems far from equilibrium, it is tedious that $\eta\ll 1$, which equivalently indicates the unattainability of the conventional bound.

For a case in which the system is in an arbitrary initial state and is driven by a time-dependent protocol, the derived relation can be analogously generalized as
\begin{equation}
\frac{[(\tau\partial_\tau-v\partial_v)\ev{J}]^2}{\Var{J}}\le\eta\frac{\Sigma_\tau}{2},
\end{equation}
where $v$ is a speed parameter of the control protocol \cite{Koyuk.2020.PRL}.
\begin{figure}[!]
\centering
\includegraphics[width=1.0\linewidth]{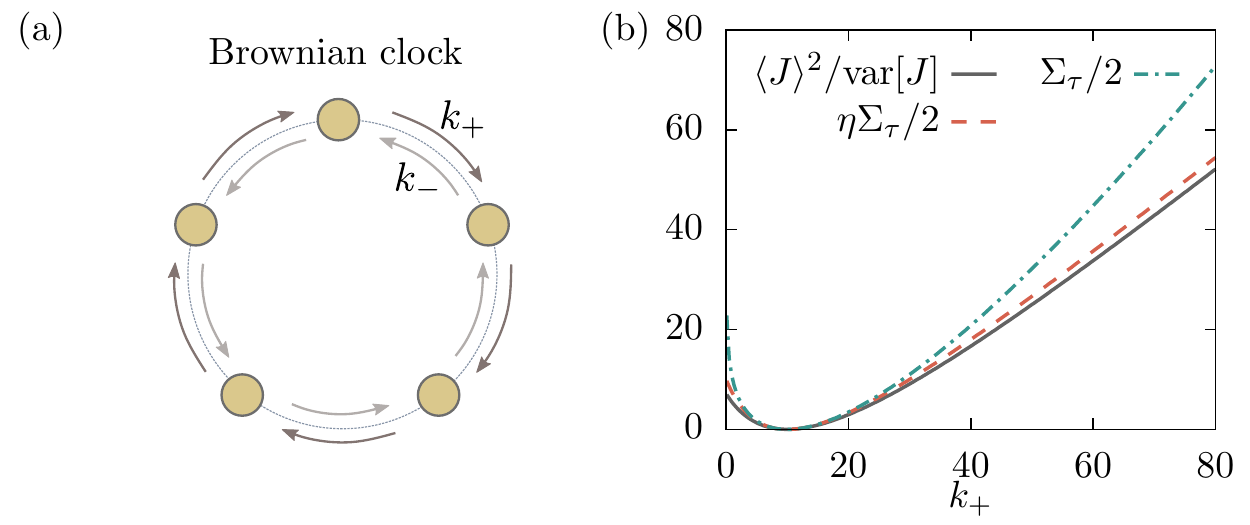}
\protect\caption{Numerical illustration of the thermodynamic uncertainty relations. (a) Schematic of the five-state Brownian clock and (b) numerical verification. The current precision $\ev{J}^2/\Var{J}$, new bound $\eta\Sigma_\tau/2$, and conventional bound $\Sigma_\tau/2$ are indicated by the solid, dashed, and dash-dotted lines, respectively. The forward rate $k_+$ is varied, whereas the backward rate $k_-$ is fixed at $k_-=10$. The operational time is fixed at $\tau=1$.}\label{fig:TUREx}
\end{figure}

In the following, we exemplify the derived thermodynamic uncertainty relation in a five-state Brownian clock \cite{Barato.2016.PRX}.
The Brownian clock is modeled as an inhomogeneous biased random walk on a ring with five states [see Fig.~\ref{fig:TUREx}(a)].
The clock's pointer transits from state $x$ to $x+1~(6\equiv 1)$ at rate $k_+>0$, whereas the backward rate is $k_->0$.
The net number of cycles completed by the pointer characterizes the clock's time.
In other words, time can be counted by a stochastic current $J$ that increases by $1$ for each transition from state $5$ to $1$ and decreases by $1$ for the reverse transition from state $1$ to $5$.
The stochastic current $J$ can be defined by setting $\Upsilon_{15}=1=-\Upsilon_{51}$ and $\Upsilon_{xy}=0$ for others.
Thus, the precision of the clock can be quantified by $\ev{J}^2/\Var{J}$.

We consider the clock operating in the stationary state.
To investigate the quality of the bounds, we fix the backward rate $k_-=10$ and vary the forward rate $k_+\in(0,80]$.
For each parameter setting, we calculate the precision of the clock and the bounds of the conventional and new thermodynamic uncertainty relations using the full counting statistics.
The numerical results are plotted in Fig.~\ref{fig:TUREx}(b), which verify that the new bound is always tighter than the conventional bound.
In particular, the new bound is tight even in the far-from-equilibrium regime.

\section{Results on discrete optimal transport}\label{sec:opt.tran.dis.sys}
Thus far, the problem of optimal transport has been discussed in terms of continuous spaces. In the following, we focus on the case of discrete spaces and explain the discrete Wasserstein distance. We then state our first theorem, which connects the discrete Wasserstein distance to stochastic thermodynamics of Markov jump processes.

\subsection{Optimal transport distance}
The optimal transport problem in the discrete case is analogous to that in the continuous case, except that we now deal with discrete $N$-dimensional distributions.
Given two discrete distributions $p^A=[p_x^A]$ and $p^B=[p_x^B]$, the optimal means of transporting distribution $p^A$ to $p^B$ with respect to a cost matrix $C=[c_{xy}]$ becomes the focus.
Here, $c_{xy}\ge 0$ denotes the cost of transporting a unit probability from $p_y^A$ to $p_x^B$.

The transport problem can be formulated using a coupling $\pi=[\pi_{xy}]$ between the probability distributions $p^A$ and $p^B$.
Specifically, $\pi$ is a joint probability distribution such that its marginal distributions coincide with $p^A$ and $p^B$ (i.e., the following conditions are satisfied for all $x$):
\begin{equation}
p_x^A=\sum_{y=1}^N \pi_{yx}~\text{and}~p_x^B=\sum_{y=1}^N\pi_{xy}.
\end{equation}
Each coupling thus defines a transport plan: for each $x$ and $y$, we transport an amount $\pi_{xy}$ from $p_y^A$ to $p_x^B$.
Thus, the discrete $L^1$-Wasserstein distance can be defined as the minimum transport cost over all admissible couplings:
\begin{equation}
W_1(p^A,p^B)\coloneqq\min_{\pi\in\Pi(p^A, p^B)}\sum_{x,y}c_{xy}\pi_{xy},
\end{equation}
where $\Pi(p^A, p^B)$ denotes a set of couplings between $p^A$ and $p^B$.
Once the cost matrix is provided, the discrete Wasserstein distance can be efficiently computed using a linear programming method.
In addition, as long as the cost matrix satisfies
\begin{equation}
c_{xy}+c_{yz}\ge c_{xz}
\end{equation}
for any $x$, $y$, and $z$, the resulting distance fulfills the triangle inequality:
\begin{equation}
W_1(p^A, p^B)+W_1(p^B, p^C)\ge W_1(p^A, p^C).
\end{equation}
We observe that the definition of the Wasserstein distance depends on the cost matrix. In other words, each matrix of transport costs induces a quantitatively different measure of distance.
Evidently, an infinite number of approaches can be used to choose the cost matrix.
In the following, we consider the cost matrix and corresponding Wasserstein distance defined on the basis of a graph.

Let $\mca{G}(V,E)$ denote an undirected graph, where $V$ and $E$ are the sets of vertices and unordered edges, respectively. Then, any microscopically reversible Markov jump process can be associated with an undirected graph, in which $V=\{1,\dots,N\}$ is the set of all states of the Markov jump process, and two vertices $x$ and $y$ are connected by an edge $(x,y)\in E$ if the transition between $x$ and $y$ is allowed. A jump process that has a unique steady state can be described by a connected graph; that is, for any unordered pair $(x,y)$, a sequence of vertices $P=[v_1,\dots,v_k]$ always exists such that $x=v_1$, $y=v_k$, and $(v_i,v_{i+1})\in E$ for all $1\le i\le k-1$. A subgraph $\tilde{\mca{G}}$ of a graph $\mca{G}$ is one whose edge set is a subset of that of $\mca{G}$. In other words, $\tilde{\mca{G}}$ can be obtained from $\mca{G}$ by removing some edges. This is equivalent to setting some transition rates of the Markov jump process to zero. For convenience, hereafter, the underlying graph structure of a Markov jump process is referred to as its {\it topology}.

\begin{figure}[t]
\centering
\includegraphics[width=1.0\linewidth]{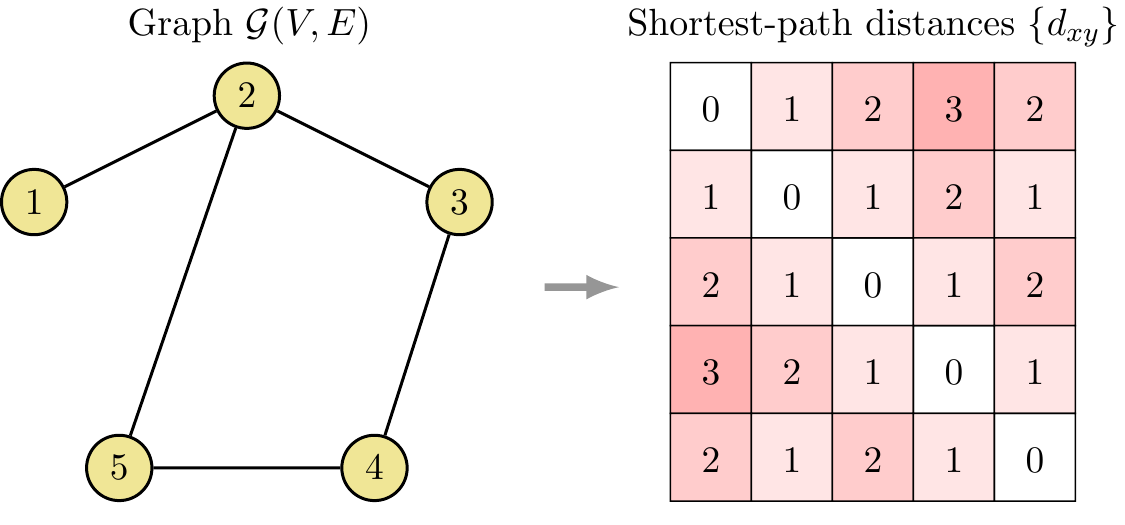}
\protect\caption{Example of the Wasserstein distance defined based on a graph consisting of $5$ vertices and $5$ edges. Given the topology $\mca{G}(V,E)$, the shortest-path distances $\{d_{xy}\}$ can be calculated, from which the Wasserstein distance can be defined.}\label{fig:WassDistIllust}
\end{figure}

Given the topology of a jump process, we now can define the transport cost matrix.
For each path $P$, let ${\rm len}(P)$ denote its length, which is the number of edges contained in the path.
The shortest-path distance from vertex $x$ to vertex $y$ can be defined as
\begin{equation}
d_{yx}\coloneqq\min_{P}\qty{{\rm len}(P)},
\end{equation}
where the minimum is over all paths that connect $x$ to $y$.
For undirected graphs, clearly $d_{xy}=d_{yx}$.
From the definition of the distances $\{d_{yx}\}$, we can easily verify that the triangle inequality is fulfilled, that is,
\begin{equation}
d_{xy}+d_{yz}\ge d_{xz}
\end{equation}
for arbitrary vertices $x$, $y$, and $z$. Employing these shortest-path distances as the transport costs (i.e., $c_{xy}=d_{xy}$), we hereafter exclusively focus on the following discrete Wasserstein distance:
\begin{equation}\label{eq:graph.Wass.dist}
\mca{W}_1(p^A, p^B)\coloneqq \min_{\pi\in\Pi(p^A, p^B)}\sum_{x,y}d_{xy}\pi_{xy}.
\end{equation}
It is noteworthy that only static information (i.e., graph connectivity) is required to define the Wasserstein distance at this time (see Fig.~\ref{fig:WassDistIllust} for illustration).

In a general case, since $d_{xy}\ge 1$ for all $x\neq y$, the Wasserstein distance is always lower bounded by the total variation distance:
\begin{equation}
\mca{W}_1(p^A, p^B)\ge \mca{T}(p^A,p^B)\coloneqq\frac{1}{2}\sum_x|p_x^A-p_x^B|.
\end{equation}
However, when the underlying graph is fully connected (i.e., the transition between any two states is admissible), the shortest-path distance becomes
\begin{equation}
d_{xy}=1-\delta_{xy},
\end{equation}
where $\delta_{xy}$ is the Kronecker delta of $x$ and $y$.
In this case, the Wasserstein distance coincides with the total variation distance (see Appendix \ref{app:Wass.tot.var.equiv} for the proof):
\begin{equation}\label{eq:Wc.tot.var.dist}
\mca{W}_1(p^A, p^B)=\mca{T}(p^A,p^B).
\end{equation}

\subsection{Thermodynamic interpretation}
With the above setup, we can now state the results.
Conventionally, the discrete Wasserstein distance is defined mathematically using the transport cost matrix based on only the shortest-path distances. 
However, in the following theorem, we explicitly show an intimate relationship between the discrete Wasserstein distance defined in Eq.~\eqref{eq:graph.Wass.dist} and the stochastic thermodynamics of Markov jump processes.
\begin{theorem}\label{thm:cla.dis.Wass.var}
The Wasserstein distance based on a topology $\mca{G}(V,E)$ can be written in variational forms as
\begin{align}
\mca{W}_1( p^A,p^B )&=\min_{\msf{W}_t}{\int_0^\tau\sqrt{\sigma_t m_t}\dd{t}}\label{eq:Wc.var.form1}\\
&=\min_{\msf{W}_t}\sqrt{\Sigma_\tau\mca{M}_\tau}.\label{eq:Wc.var.form2}
\end{align}
Here, the minimum is taken over all transition rate matrices $\{\msf{W}_t\}_{0\le t\le\tau}$ which satisfy the master equation \eqref{eq:Markov.mas.eq} with the boundary conditions ${p_0}={p^A}$ and ${p_\tau}={p^B}$ and induce subgraphs of $\mca{G}(V,E)$ for all times.
\end{theorem}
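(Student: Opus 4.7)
The plan is to sandwich both variational expressions between $\mathcal{W}_1(p^A,p^B)$ (as a lower bound valid for every admissible Markov jump process) and the value attained by an explicitly constructed optimizing protocol. The bridge between the thermodynamic side and the transport side is the Beckmann / min-cost-flow representation of the graph $L^1$-Wasserstein distance,
\begin{equation*}
\mathcal{W}_1(p^A,p^B) \;=\; \min_{\{J_{xy}\}} \sum_{\substack{x>y\\(x,y)\in E}} |J_{xy}|,
\end{equation*}
where the minimum runs over antisymmetric edge flows supported on $E$ with divergence $\sum_{y} J_{xy} = p^B_x - p^A_x$. This identity follows from LP duality, or more elementarily by routing an optimal coupling along shortest paths (giving ``$\le$'') and by decomposing any admissible flow into unit-flow paths to read off a coupling (giving ``$\ge$'').

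For the lower bound, two applications of Cauchy--Schwarz suffice. Pointwise in $t$, using $\sigma_t=\sum_{x>y}m_{xy}(t)f_{xy}(t)^2$ and $m_t=\sum_{x>y}m_{xy}(t)$,
\begin{equation*}
\sqrt{\sigma_t\, m_t}\;\ge\;\sum_{x>y} m_{xy}(t)|f_{xy}(t)|\;=\;\sum_{x>y}|j_{xy}(t)|.
\end{equation*}
Integrating in time and using $\int_0^\tau |j_{xy}(t)|\,dt \ge |J_{xy}|$ with $J_{xy}:=\int_0^\tau j_{xy}(t)\,dt$, then invoking the Beckmann identity (the continuity equation together with the boundary conditions supplies the required divergence constraint), yields $\int_0^\tau \sqrt{\sigma_t m_t}\,dt \ge \mathcal{W}_1(p^A,p^B)$. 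A second Cauchy--Schwarz in $t$ gives $\int_0^\tau \sqrt{\sigma_t m_t}\,dt \le \sqrt{\Sigma_\tau \mathcal{M}_\tau}$, so taking the minimum over admissible processes delivers $\mathcal{W}_1 \le \min \int_0^\tau \sqrt{\sigma_t m_t}\,dt \le \min \sqrt{\Sigma_\tau \mathcal{M}_\tau}$.

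To close the sandwich I would exhibit a specific protocol saturating every step. Fix a Beckmann minimizer $\{J^{\star}_{xy}\}$ and set constant-in-time currents $j_{xy}(t)=J^{\star}_{xy}/\tau$, linear interpolation $p_t=(1-t/\tau)p^A+(t/\tau)p^B$ (so the continuity equation holds automatically), and uniform force magnitude $f_{xy}(t)=c\,\operatorname{sgn}(J^{\star}_{xy})$ for some fixed $c>0$. The pair $(j_{xy},f_{xy})$ inverts uniquely to one-way activities via $a_{xy}-a_{yx}=j_{xy}$ and $\ln(a_{xy}/a_{yx})=f_{xy}$, yielding transition rates $w_{xy}(t)=a_{xy}(t)/p_y(t)$. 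By construction, pointwise Cauchy--Schwarz saturates (all active edges share $|f_{xy}|=c$), the triangle inequality on $\int|j_{xy}|\,dt$ saturates (the current has constant sign), Beckmann saturates by choice of $J^{\star}$, and the time-integral Cauchy--Schwarz saturates because $\sigma_t/m_t=c^{2}$ is time-independent. Hence $\sqrt{\Sigma_\tau \mathcal{M}_\tau}=\int_0^\tau \sqrt{\sigma_t m_t}\,dt=\mathcal{W}_1(p^A,p^B)$ for this protocol, completing the chain.

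The main technical obstacle is regularity of the constructed rates. Whenever $p^A$ or $p^B$ lies on the boundary of the simplex, the linear interpolation can make some $p_y(t)$ vanish so that $w_{xy}(t)=a_{xy}(t)/p_y(t)$ diverges; moreover, the inversion formula for $a_{xy},a_{yx}$ is singular along edges where $J^{\star}_{xy}=0$. I would handle both issues by a regularization argument: approximate $p^A,p^B$ by strictly positive distributions $p^A_\epsilon,p^B_\epsilon$, add a vanishing uniform perturbation to $J^{\star}$ that activates all edges of $E$, run the construction there, and pass to the limit $\epsilon\to 0$ using continuity of $\mathcal{W}_1$ and lower semi-continuity of the admissible-process minimum. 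The remaining bookkeeping---that the rates vanish on edges outside $E$---is automatic, since $\operatorname{supp}(J^{\star})\subset E$.
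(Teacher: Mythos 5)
Your lower-bound argument is correct and coincides with the paper's: the pointwise Cauchy--Schwarz step $\sqrt{\sigma_t m_t}\ge\sum_{x>y}|j_{xy}(t)|$, the time-averaged Cauchy--Schwarz step, and the identification of $\mathcal{W}_1$ with the optimal value of a minimum-cost flow problem are exactly what the paper's appendix proof does; your variant of integrating the currents first and applying the flow bound to the net flow $J_{xy}=\int_0^\tau j_{xy}(t)\,dt$ is a slightly cleaner packaging of the paper's time-discretized flow argument.

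The gap is in the attainability construction. Taking constant currents $j_{xy}(t)=J^{\star}_{xy}/\tau$ together with the linear interpolation $p_t=(1-t/\tau)p^A+(t/\tau)p^B$ fails whenever the optimal flow must transit a state $x$ with $p^A_x=p^B_x=0$: then $p_x(t)\equiv 0$ for all $t$, so every current leaving $x$ obeys $j_{yx}(t)=w_{yx}(t)p_x(t)-w_{xy}(t)p_y(t)\le 0$, and the required $j_{yx}(t)=J^{\star}_{yx}/\tau>0$ is unrealizable by \emph{any} choice of rates --- this is an obstruction at the level of currents, not merely a divergence of $w_{xy}=a_{xy}/p_y$. (Concrete instance: the path graph $1$--$2$--$3$ with $p^A=(1,0,0)$ and $p^B=(0,0,1)$, where all mass must pass through state $2$.) Your regularization $p^A_\epsilon,p^B_\epsilon\to p^A,p^B$ changes the boundary conditions, so the perturbed protocols are not admissible for the original problem; passing to the limit can at best establish that the infimum equals $\mathcal{W}_1$, not the attained minimum asserted by the theorem. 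The repair is to give up simultaneity rather than positivity: as in the paper, decompose the optimal flow into elementary single-edge moves and execute them sequentially, so each transit state is populated exactly when it must emit current, while keeping your uniform force magnitude $|f_{xy}(t)|=c$ on whichever edge is active. Then $\sigma_t=c^2 m_t$ holds identically in $t$, so both Cauchy--Schwarz inequalities still saturate even though $\sum_{x>y}|j_{xy}(t)|$ is no longer constant in time, and $\int_0^\tau\sum_{x>y}|j_{xy}(t)|\,dt$ still equals $\mathcal{W}_1(p^A,p^B)$ because each edge current keeps a fixed sign. With that single modification your argument closes and is equivalent to the paper's.
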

Theorem \ref{thm:cla.dis.Wass.var} is the first central result, and its sketch proof is given in the following.
Note that the minimization is over all transition rate matrices which are microscopically reversible and induce subgraphs of $\mca{G}(V,E)$ for all times.
This means that the transition rate between two states $x$ and $y$ must be fixed to zero for all times if no edge exists between vertices $x$ and $y$ in the graph $\mca{G}$.
Otherwise, as long as an edge exists between the vertices, the transition rate can be arbitrarily controlled.
Notably, the equality of Eq.~\eqref{eq:Wc.var.form1} can always be ensured with dynamics that satisfy the global detailed balance condition (see Appendix \ref{app:min.gdbc} for the proof).
\begin{proof}
Here, we provide an outline of the proof; see Appendix \ref{app:proof.thm1} for a detailed derivation.
The proof strategy can be mainly divided into the following two steps. We first prove that
\begin{equation}\label{eq:proof.thm1.1}
\mca{W}_1(p^A,p^B)\le\int_0^\tau\sqrt{\sigma_t m_t}\dd{t}\le\sqrt{\Sigma_\tau\mca{M}_\tau}
\end{equation}
holds for all admissible Markovian dynamics that transform $p^A$ into $p^B$, and we then construct a specific process that attains the equality.
Since the second inequality in Eq.~\eqref{eq:proof.thm1.1} is simply a consequence of the Cauchy--Schwarz inequality, Eq.~\eqref{eq:proof.thm1.1} can be proved after we verify the first inequality.
This can be done by proving the following relation:
\begin{equation}\label{eq:proof.thm1.2}
\mca{W}_1(p^A,p^B)\le\int_0^\tau\sum_{x>y}|j_{xy}(t)|\dd{t}\le\int_0^\tau\sqrt{\sigma_t m_t}\dd{t}.
\end{equation}
The second inequality in Eq.~\eqref{eq:proof.thm1.2} can be derived using the Cauchy--Schwarz inequality.
Thus, we need only show the first inequality in Eq.~\eqref{eq:proof.thm1.2}.
To this end, we map the Wasserstein distance to the minimum cost of the minimum cost flow problem in the field of graph theory.
For this problem, we can show that the Wasserstein distance is exactly the optimal flow cost.
Moreover, the Markov jump process also yields an admissible solution to the flow problem with the cost $\int_0^\tau\sum_{x>y}|j_{xy}(t)|\dd{t}$.
Consequently, the first inequality in Eq.~\eqref{eq:proof.thm1.2} is proved.
Finally, we inversely translate the optimal solution of the minimum cost flow problem to construct a Markov jump process that attains the equality of Eq.~\eqref{eq:proof.thm1.1}.
\end{proof}

Some remarks regarding Thm.~\ref{thm:cla.dis.Wass.var} are in order.
First, Eqs.~\eqref{eq:Wc.var.form1} and \eqref{eq:Wc.var.form2} provide a thermodynamic interpretation of the discrete Wasserstein distance; that is, $\mca{W}_1$ equals the minimum product of the thermodynamic and kinetic costs over all admissible Markovian dynamics that transform the source distribution into the target one.
From a different perspective, it can be regarded as a trade-off between irreversible entropy production and dynamical state mobility; that is, to transform a probability distribution into another one, both $\Sigma_\tau$ and $\mca{M}_\tau$ cannot be simultaneously small:
\begin{equation}
\Sigma_\tau\mca{M}_\tau\ge {\mca{W}_1(p^A,p^B)^2}.
\end{equation}
In other words, either the thermodynamic or kinetic cost must be sacrificed to achieve a feasible state transformation.

Second, we show that the discrete Wasserstein distance has analogous thermodynamic properties with the continuous $L^2$-Wasserstein distance. To this end, we rewrite Eq.~\eqref{eq:Wc.var.form2} in the following form:
\begin{equation}\label{eq:Wc.var.form.simpl}
\mca{W}_1(p^A,p^B)=\min_{\msf{W}_t}\sqrt{\bar{D}\tau\Sigma_\tau },
\end{equation}
where we define time-averaged state mobility $\bar{D}\coloneqq\ev{ m}_\tau$.
As previously shown, the kinetic term $m_t$ reduces to the diffusion coefficient in the continuous limit.
Therefore, its time-averaged quantity $\bar{D}$ here plays the same role as the diffusion coefficient $D$ does in the continuous case.
Consequently, Eq.~\eqref{eq:Wc.var.form.simpl} can be regarded as the discrete analog of the Benamou--Brenier formula \eqref{eq:BB.form2} known for the $L^2$-Wasserstein distance.
Equation \eqref{eq:Wc.var.form.simpl} immediately derives a lower bound on irreversible entropy production:
\begin{equation}\label{eq:Wass.bound.disc}
\Sigma_\tau\ge\frac{\mca{W}_1(p^A,p^B)^2}{\bar{D}\tau}.
\end{equation}
This bound is tight and can always be attained for an arbitrary pair of distributions.
In other words, the minimum entropy production among all feasible dynamics that have the same value of $\bar{D}$ is given by the Wasserstein distance:
\begin{equation}\label{eq:Wass.ent.eq.disc}
\min_{\ev{ m}_\tau=\bar{D}}\Sigma_\tau=\frac{\mca{W}_1(p^A,p^B)^2}{\bar{D}\tau}.
\end{equation}
Equations \eqref{eq:Wass.bound.disc} and \eqref{eq:Wass.ent.eq.disc} can be considered discrete analogs of Eqs.~\eqref{eq:Wass.bound.cont} and \eqref{eq:Wass.ent.eq.cont}, respectively.

\begin{figure}[t]
\centering
\includegraphics[width=0.85\linewidth]{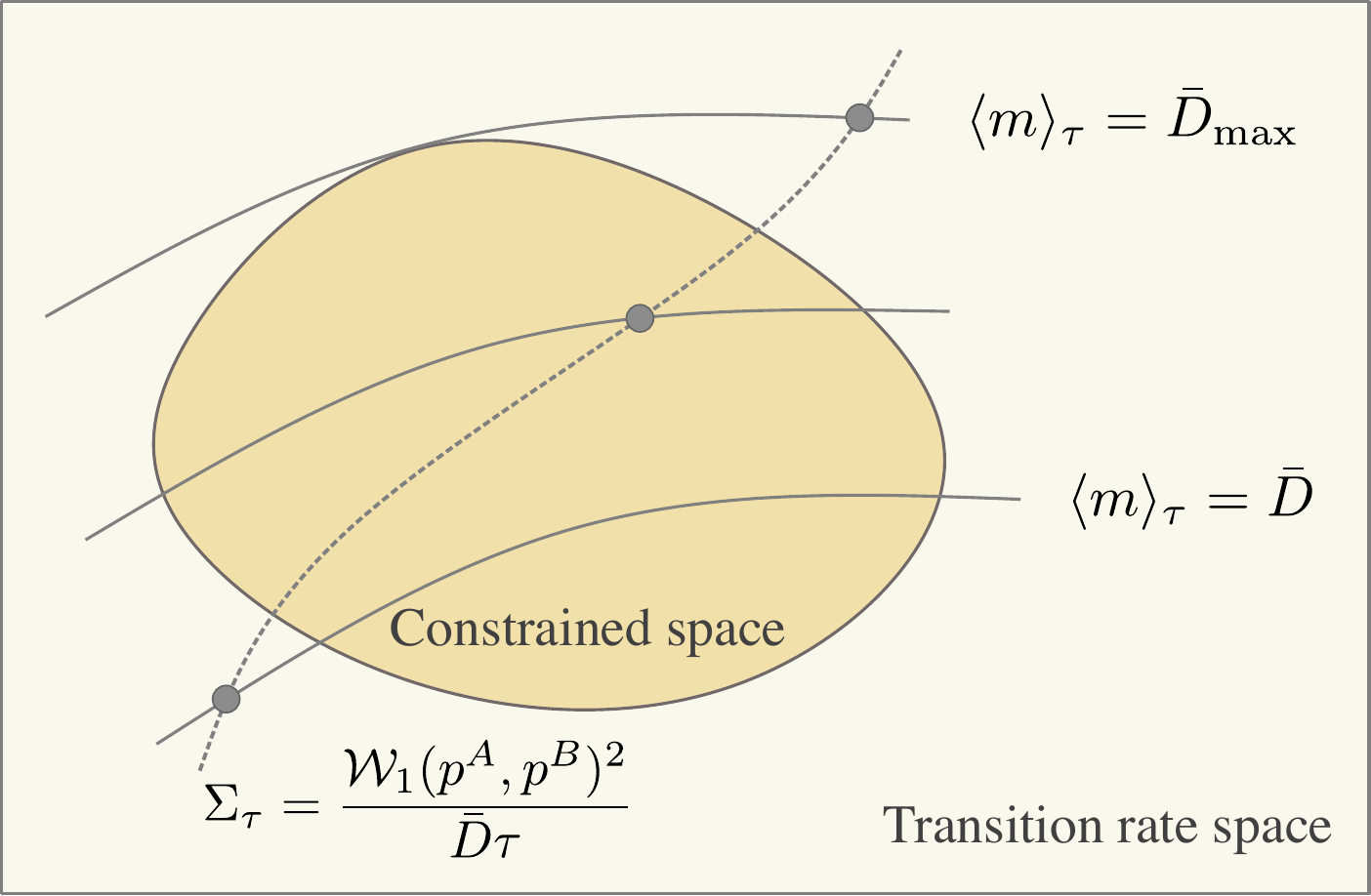}
\protect\caption{Schematic of the thermodynamic structure of minimum dissipation in the space of transition rates. Solid lines represent dynamics that have the same value of the average state mobility $\ev{m}_\tau$. Black circles depict extreme points in which the minimum dissipation is attained, provided that the average state mobility is fixed. In the presence of other constraints on transition rates, the minimum dissipation is lower bounded by the Wasserstein distance and maximum value of $\ev{m}_\tau$.}\label{fig:MinDiss}
\end{figure}

Third, Thm.~\ref{thm:cla.dis.Wass.var} provides insights into the problem of minimizing entropy production in discrete Markovian dynamics.
Previous studies have shown that the irreversible entropy production required to transform the initial into the final distribution can be arbitrarily small if no constraint is placed on the transition rates \cite{Vu.2021.PRL2,Remlein.2021.PRE,Dechant.2022.JPA}.
Equation~\eqref{eq:Wass.ent.eq.disc} also confirms this, where the minimum entropy production depends on $\bar{D}$ and can be arbitrarily adjusted.
Theorem \ref{thm:cla.dis.Wass.var} suggests that fixing $\bar{D}$ is a reasonable constraint under which the minimum entropy production is determined by the Wasserstein distance, as in the continuous case.
Notably, as Appendix \ref{app:mt.prop} shows, $\bar{D}$ can be fixed to an arbitrary positive value, indicating the flexibility of the optimization problem.
The optimal control protocol can also be constructed from optimal coupling, as shown in our proof of Thm.~\ref{thm:cla.dis.Wass.var}.
Moreover, for arbitrary $\bar{D}>0$, the minimum entropy production in Eq.~\eqref{eq:Wass.ent.eq.disc} can always be attained using a system with conservative forces (see Appendix \ref{app:min.ent.prod.cons.force} for the proof).
It is noteworthy that the discussion thus far has not imposed any other constraints on the transition rates, except for fixing $\ev{m}_\tau$. Therefore, if some additional constraints are placed on the transition rates, such as upper or lower bounds on the magnitude of transition rates, it may not be the case. Nevertheless, a lower bound can be derived for the minimum entropy production in this case. Let $\bar{D}_{\rm max}$ be the maximum of $\ev{m}_\tau$ among all processes that transform distribution $p^A$ into $p^B$ under these constraints. Then, the minimum entropy production is lower bounded by the Wasserstein distance and $\bar{D}_{\rm max}$ as
\begin{equation}
\Sigma_\tau\ge\frac{\mca{W}_1(p^A,p^B)^2}{\bar{D}_{\rm max}\tau}.
\end{equation}
The thermodynamic structure of minimum dissipation is illustrated in Fig.~\ref{fig:MinDiss}.

Finally, as shown in Sec.~\ref{sec:appl}, the variational formulas \eqref{eq:Wc.var.form1} and \eqref{eq:Wc.var.form2} have crucial implications for thermodynamic speed limits and thermodynamic cost of information erasure at arbitrary temperatures.

Note that each topology induces a different Wasserstein metric.
In the following, we consider a specific topology and discuss the relevance of the variational formulas \eqref{eq:Wc.var.form1} and \eqref{eq:Wc.var.form2}.
For other common topologies, see Appendix \ref{app:com.topo}.

One common topology is one-dimensional nearest-neighbor, in which a jump between states $x$ and $y$ is admitted if and only if $|x-y|=1$.
This topology is relevant to Brownian random walks and the discretization of a one-dimensional Langevin system.
The shortest-path distances in this topology can be readily calculated as
\begin{equation}
d_{xy}=|x-y|.
\end{equation}
Because this cost matrix is the discrete analog of the cost function $c(x,y)=|x-y|$ used in the definition of the continuous $L^1$-Wasserstein distance $W_1$, the discrete Wasserstein distance $\mca{W}_1$ should be reduced to $W_1$ in the continuous limit.
Let $\Delta x$ be the space interval. Then, $\mca{W}_1$ converges to $W_1$ as $\Delta x\to 0$ and $N\to\infty$:
\begin{equation}\label{eq:1dchain.W1.lim}
\mca{W}_1( p^A,p^B )\Delta x\xrightarrow[\Delta x\to 0]{N\to \infty} W_1(p^A,p^B).
\end{equation}
In addition, as shown in the proof of Thm.~\ref{thm:cla.dis.Wass.var}, $\mca{W}_1( p^A,p^B )$ can be expressed in terms of the probability currents as
\begin{equation}
\mca{W}_1( p^A,p^B )=\min_{\msf{W}_t}{\int_0^\tau\sum_{x>y}|j_{xy}(t)|\dd{t}}.\label{eq:Wc.var.form.cor}
\end{equation}
Equation \eqref{eq:Wc.var.form.cor} implies that the discrete Wasserstein distance is equal to the minimum sum of absolute probability currents.
In the case considered here, the equality \eqref{eq:Wc.var.form.cor} reads
\begin{equation}\label{eq:Wc.1dchain.var.form}
\mca{W}_1( p^A,p^B ) = \min_{\msf{W}_t}{\int_0^\tau\sum_{x=1}^{N-1}|j_{x+1,x}(t)|\dd{t}}.
\end{equation}
Noticing that $\sum_{x=1}^{N-1}|j_{x+1,x}(t)|\Delta x\to \int_{\mbb{R}}|j_t(x)|\dd{x}$ as $\Delta x\to 0$, we obtain the following limit:
\begin{equation}\label{eq:1dchain.W1.var.lim}
\mca{W}_1( p^A,p^B )\Delta x\xrightarrow[\Delta x\to 0]{N\to \infty}\min_{j_t}\int_0^\tau\int_{\mbb{R}}|j_t(x)|\dd{x}\dd{t}.
\end{equation}
Combining Eqs.~\eqref{eq:1dchain.W1.lim} and \eqref{eq:1dchain.W1.var.lim} gives the following relation:
\begin{align}
W_1(p^A,p^B)&=\min_{j_t}{\int_0^\tau\int_{\mbb{R}}|j_t(x)|\dd{x}\dd{t}},\label{eq:Wc.cont.limit}
\end{align}
where $j_t(x)$ is subject to the equation $\dot{p}_t(x)=-\partial_xj_t(x)$.
Notably, Eq.~\eqref{eq:Wc.cont.limit} is exactly the Benamou--Brenier formula for the continuous $L^1$-Wasserstein distance in the one-dimensional case \cite{Chen.2017.CSL}.
Therefore, we can conclude that Eq.~\eqref{eq:Wc.var.form.cor} provides a unified generalization of the Benamou--Brenier formula for the $L^1$-Wasserstein distance.

Let us now consider the discretization of one-dimensional Langevin dynamics, that is, Markov jump processes with transition rates specified as in Eqs.~\eqref{eq:1d.dis.tran.rate1} and \eqref{eq:1d.dis.tran.rate2}.
For these jump processes, the dynamical state mobility reduces to the diffusion coefficient in the continuous limit (i.e., $\mca{M}_\tau(\Delta x)^2\to D\tau$ as $\Delta x\to 0$).
In the continuous case, the Benamou--Brenier formula \eqref{eq:Wc.cont.limit} can be expressed as
\begin{align}
W_1(p^A,p^B)&=\min_{j_t}{\int_0^\tau\int_{\mbb{R}}\sqrt{\sigma_t(x)m_t(x)}\dd{x}\dd{t}}\notag\\
&\le\min_{F_t} \int_0^\tau\sqrt{D\sigma_t}\dd{t}=\min_{F_t} \sqrt{D\tau\Sigma_\tau},\label{eq:ring.tmp1}
\end{align}
where we define $m_t(x)\coloneqq Dp_t(x)$ and the local entropy production rate $\sigma_t(x)\coloneqq j_t(x)^2/[Dp_t(x)]$.
From Thm.~\ref{thm:cla.dis.Wass.var} and Eq.~\eqref{eq:ring.tmp1}, we can conclude that in the continuous limit, the equality in Thm.~\ref{thm:cla.dis.Wass.var} might not be achieved with Markov jump processes whose transition rates are expressed as in Eqs.~\eqref{eq:1d.dis.tran.rate1} and \eqref{eq:1d.dis.tran.rate2}.
This shows the difference between the discrete and continuous cases, where the discrete case has more degrees of freedom than the continuous case.

Theorem \ref{thm:cla.dis.Wass.var} characterizes the discrete Wasserstein distance $\mca{W}_1$ in terms of the thermodynamic and kinetic costs associated with Markovian dynamics.
In Appendix \ref{app:alt.Wc.exp}, we show that Thm.~\ref{thm:cla.dis.Wass.var} has some useful corollaries that not only provide alternative expressions for $\mca{W}_1$ but also lead to stringent bounds for thermodynamic speed limits.
Using other combinations of irreversible entropy production, pseudo entropy production, and dynamical activity, the discrete Wasserstein distance $\mca{W}_1$ can be expressed in similar variational forms.

\section{Quantum generalization}\label{sec:quan.gen}
We next generalize our framework to the quantum case.
We first briefly introduce quantum thermodynamics of Markovian open quantum dynamics described by the Lindblad equations and define a quantum analog of dynamical state mobility.
Then, we define a quantum Wasserstein distance and derive analogous variational formulas for the quantum Wasserstein distance in terms of thermodynamic cost.

\subsection{Markovian open quantum dynamics}

We consider a finite-dimensional open quantum system, which is attached to single or multiple thermal reservoirs.
In the weak-coupling limit, the time evolution of the reduced density matrix can be described by the Lindblad master equation \cite{Lindblad.1976.CMP},
\begin{equation}\label{eq:Lind.eq}
\dot{\varrho}_t=\mca{L}_t(\varrho_t)\coloneqq-i[H_t,\varrho_t]+\sum_k\mca{D}[L_k(t)]\varrho_t,
\end{equation}
where $H_t$ is the time-dependent Hamiltonian, $\mca{D}$ is the dissipator given by $\mca{D}[L]\varrho\coloneqq L\varrho L^\dagger-\qty{L^\dagger L,\varrho}/2$, and $L_k(t)$ are jump operators.
$[\circ,\star]$ and $\{\circ,\star\}$ denote the commutator and anticommutator of the two operators, respectively.
Hereafter, we set the Planck constant to unity $\hbar=1$.
To guarantee thermodynamically consistent dynamics, we assume that the jump operators satisfy the local detailed balance condition \cite{Horowitz.2013.NJP,Manzano.2018.PRX}; that is, they come in pairs $(k,k')$ such that
\begin{equation}
L_k(t)=e^{s_k(t)/2}L_{k'}(t)^\dagger,
\end{equation}
where $s_k(t)=-s_{k'}(t)$ denotes the entropy change in the environment due to the jump operator $L_k(t)$.
In the case of a single reservoir at inverse temperature $\beta$, we can write $s_k(t)=\beta \omega_k(t)$, where $\omega_k(t)$ is the energy change associated with the $k$th jump.

\subsection{Entropy production and dynamical activity}
Given the previous setup, we can now introduce quantum entropy production and dynamical activity.
Similar to the classical case, irreversible entropy production can be defined as the sum of entropy changes in the system and environment as
\begin{equation}\label{eq:qua.ent.prod.def}
\Sigma_\tau\coloneqq \Delta S_{\rm sys}+\Delta S_{\rm env},
\end{equation}
where $\Delta S_{\rm sys}\coloneqq S(\varrho_\tau)-S(\varrho_0)$ is the difference in the von Neumann entropy $S(\varrho)=-\tr{\varrho\ln\varrho}$ of the system and $\Delta S_{\rm env}$ denotes environmental entropy production, given by \cite{Horowitz.2013.NJP,Manzano.2018.PRX}
\begin{equation}
\Delta S_{\rm env}\coloneqq\int_0^\tau\sum_k\tr{L_k(t)\varrho_tL_k^\dagger(t)} s_k(t)\dd{t}.
\end{equation}
With this definition, we can prove that $\Sigma_\tau$ is always nonnegative, which implies the second law of thermodynamics.
For the case of a single reservoir and the jump operators that characterize transitions between energy eigenstates (i.e., $[L_k(t),H_t]=\omega_k(t)L_k(t)$), the entropy production of the environment reduces exactly to the conventional form \cite{Alicki.1979.JPA},
\begin{equation}
\Delta S_{\rm env}=-\beta\int_0^\tau\tr{H_t\dot{\varrho}_t}\dd{t}.
\end{equation}

Quantum dynamical activity can be analogously defined as in the classical case.
The frequency of jumps at time $t$ can be quantified as
\begin{equation}\label{eq:qua.dyn.act.def}
a_t\coloneqq\sum_k\tr{L_k(t)\varrho_tL_k^\dagger(t)},
\end{equation}
and the average total number of jumps can be calculated as $\mca{A}_\tau\coloneqq\int_0^\tau a_t\dd{t}$.
Quantum dynamical activity characterizes the thermalization rate of thermodynamic processes.
In addition, it has been shown that quantum dynamical activity constrains the precision of generic counting observables and their first passage time in quantum jump processes \cite{Hasegawa.2020.PRL,Vu.2022.PRL.TUR}.

It is convenient to alternatively express entropy production and dynamical activity defined in Eqs.~\eqref{eq:qua.ent.prod.def} and \eqref{eq:qua.dyn.act.def}, respectively. Let $\varrho_t=\sum_xp_x(t)\dyad{x_t}$ be the spectral decomposition of the density matrix $\varrho_t$. We then define transition rates between eigenbasis as $w_k^{xy}(t)\coloneqq|\mel{x_t}{L_k(t)}{y_t}|^2\ge 0$.
Notice that $w_k^{xy}(t)=e^{s_k(t)}w_{k'}^{yx}(t)$.
Taking the time derivative of $p_x(t)=\mel{x_t}{\varrho_t}{x_t}$, we obtain the following master equation for the distribution $\{p_x(t)\}$:
\begin{equation}\label{eq:eigen.master.eq}
\dot{p}_x(t)=\sum_k\sum_{y(\neq x)}\qty[w_{k}^{xy}(t)p_y(t)-w_{k}^{yx}(t)p_x(t)].
\end{equation}
Analogous to the classical case, we define
\begin{align}
a_k^{xy}(t)&\coloneqq w_k^{xy}(t)p_y(t),\\
j_k^{xy}(t)&\coloneqq w_k^{xy}(t)p_y(t)-w_{k'}^{yx}(t)p_x(t).
\end{align}
Using these probability currents, we can write the master equation as
\begin{equation}
\dot p_x(t)=\sum_k\sum_{y(\neq x)}j_k^{xy}(t).
\end{equation}
We emphasize that the classical-like master equation \eqref{eq:eigen.master.eq} is rigorously derived from Eq.~\eqref{eq:Lind.eq}. This equation is introduced only for the proof convenience of several properties that the dynamics \eqref{eq:Lind.eq} possesses.
After some simple manipulations, we can prove that the entropy production rate ${\sigma}_t\coloneqq\dot{\Sigma}_t$ can be analytically expressed as (see Appendix \ref{app:qua.ent.prod.exp} for the proof)
\begin{equation}\label{eq:qua.ent.prod.rate}
\sigma_t=\frac{1}{2}\sum_k\sum_{x,y}j_k^{xy}(t)\ln\frac{w_k^{xy}(t)p_y(t)}{w_{k'}^{yx}(t)p_x(t)}.
\end{equation}
Besides, plugging the spectral decomposition of $\varrho_t$ and inserting $\mbb{1}=\sum_{x}\dyad{x_t}$ into Eq.~\eqref{eq:qua.dyn.act.def}, the dynamical activity rate can also be expressed as
\begin{equation}
a_t=\sum_k\sum_{x,y}w_{k}^{xy}(t)p_y(t)=\sum_k\sum_{x,y}a_{k}^{xy}(t).
\end{equation}
Note that both $\sigma_t$ and $a_t$, which can be written in terms of the transition rates of the master equation \eqref{eq:eigen.master.eq}, are the entropy production and dynamical activity rates associated with the Lindblad dynamics \eqref{eq:Lind.eq}, respectively.

\subsection{Quantum dynamical state mobility}

Analogous with the classical case, the quantum analog of dynamical state mobility can be defined as
\begin{equation}\label{eq:quan.kinetic.cost}
 m_t\coloneqq\frac{1}{2}\sum_{k}e^{-s_k(t)/2}\ev{L_k(t)^\dagger,\sop{\varrho_t}_{s_k(t)}(\mca{P}_t[L_k(t)^\dagger])},
\end{equation}
where $\ev{X,Y}\coloneqq\tr{X^\dagger Y}$ denotes the scalar inner product, $\mca{P}_t$ is a super-operator given by $\mca{P}_t[X]\coloneqq X-\sum_x\mel{x_t}{X}{x_t}\dyad{x_t}$, and the tilted operator $\sop{\phi}_\theta(X)$ is defined for arbitrary density matrix $\phi$, real number $\theta$, and linear operator $X$ as
\begin{equation}
\sop{\phi}_\theta(X)\coloneqq e^{-\theta/2}\int_0^1e^{\theta u}\phi^uX\phi^{1-u}\dd{u}.
\end{equation}
The quantum kinetic cost can be analogously defined as
\begin{equation}
\mca{M}_\tau\coloneqq\int_0^\tau m_t\dd{t}.
\end{equation}
From the mathematical definition in Eq.~\eqref{eq:quan.kinetic.cost}, interpreting the term $m_t$ as a kinetic term may not be intuitive.
In the following, we provide the physical interpretations of $m_t$ from two perspectives.

First, by focusing on the master equation of the distribution $\{p_x(t)\}$, we can show that $m_t$ is equal to the dynamical state mobility associated with Markovian jump dynamics \eqref{eq:eigen.master.eq}:
\begin{equation}\label{eq:qua.mt.tmp1}
 m_t=\sum_{k}\sum_{x>y}\frac{a_k^{xy}(t)-a_{k'}^{yx}(t)}{\ln a_k^{xy}(t)-\ln a_{k'}^{yx}(t)}.
\end{equation}
Note that by applying the inequality \eqref{eq:rel.lmn.amn} to Eq.~\eqref{eq:qua.mt.tmp1}, we can readily prove that $m_t$ is upper bounded by the dynamical activity,
\begin{align}
 m_t &\le\sum_{k}\sum_{x>y}\frac{a_k^{xy}(t)+a_{k'}^{yx}(t)}{2}\notag\\
&=\frac{1}{2}\sum_{k}\sum_{x\neq y}a_k^{xy}(t)\notag\\
&\le \frac{a_t}{2}.\label{eq:lt.at.qua}
\end{align}

Second, let us consider the case of a single reservoir, in which the jump operators satisfy $[L_k(t),H_t]=\omega_k(t)L_k(t)$.
In this case, the thermal state $\varrho_t^{\rm eq}\coloneqq e^{-\beta H_t}/\tr e^{-\beta H_t}$ is always the instantaneous equilibrium state (i.e., $\mca{L}_t(\varrho_t^{\rm eq})=0$).
Note that the Lindblad master equation \eqref{eq:Lind.eq} can be rewritten as \cite{Vu.2021.PRL}
\begin{equation}
\dot{\varrho}_t=\mca{U}_{\varrho_t}(t,f_t)+\mca{O}_{\varrho_t}(t,f_t),\label{eq:Lind.sup.ope}
\end{equation}
where $f_t\coloneqq -\ln\varrho_t+\ln\varrho_t^{\rm eq}$ is the quantum thermodynamic force and $\mca{U}_\phi(t,X)$ and $\mca{O}_\phi(t,X)$ are time-dependent super-operators defined, respectively, as
\begin{align}
\mca{U}_{\phi}(t,X)&\coloneqq i\beta^{-1}[X,\phi],\\
\mca{O}_{\phi}(t,X)&\coloneqq\frac{1}{2}\sum_{k}e^{-s_k(t)/2}[L_k(t),\sop{\phi}_{s_k(t)}([L_k(t)^\dagger,X])].
\end{align}
The super-operators $\mca{U}$ and $\mca{O}$ characterize the unitary and dissipative parts of Lindblad dynamics, respectively.
They linearly relate the rate of the density matrix to the thermodynamic force. 
In addition, the entropy production rate can be written in a quadratic form of the thermodynamic force as \cite{Vu.2021.PRL}
\begin{equation}\label{eq:qua.ent.f}
\sigma_t=\ev{f_t,\mca{O}_{\varrho_t}(t,f_t)}.
\end{equation}
Since Eqs.~\eqref{eq:Lind.sup.ope} and \eqref{eq:qua.ent.f} are analogous to Eqs.~\eqref{eq:pt.Kf} and \eqref{eq:ent.Kf} in the classical case, respectively, the super-operator $\mca{O}$ can be regarded as a quantum Onsager-like super-operator.

We now investigate the relationship between $m_t$ and the Onsager-like super-operator $\mca{O}$.
To this end, we employ the vectorization of a linear operator $X$ as
\begin{equation}
X=\sum_{i,j}x_{ij}\dyad{i}{j}\to \dket{X}=\sum_{i,j}x_{ij}\ket{i}\otimes\ket{j}.
\end{equation}
Using this representation, we can rewrite the Lindblad master equation \eqref{eq:Lind.sup.ope} as
\begin{equation}
\dket{\dot\varrho_t}=\msf{U}_t\dket{f_t}+\msf{O}_t\dket{f_t},
\end{equation}
where the linear matrices $\msf{U}_t$ and $\msf{O}_t$ are defined as
\begin{align}
\msf{U}_t&\coloneqq i\beta^{-1}\qty(\mbb{1}\otimes\varrho_t^\top - \varrho_t\otimes\mbb{1}),\\
\msf{O}_t&\coloneqq \frac{1}{2}\sum_{k}e^{-s_k(t)}\int_0^1e^{s_k(t)u}\msf{O}_k(t,u)\dd{u}.
\end{align}
Here, $\top$ denotes the matrix transpose and $\msf{O}_k(t,x)$ is given by
\begin{align}
&\msf{O}_k(t,u)\\
&\coloneqq L_k(t)\varrho_t^uL_k(t)^\dagger\otimes(\varrho_t^{1-u})^\top+\varrho_t^u\otimes (L_k(t)^\dagger\varrho_t^{1-u}L_k(t))^\top\notag\\
&-L_k(t)\varrho_t^u\otimes (L_k(t)^\dagger\varrho_t^{1-u})^\top-\varrho_t^uL_k(t)^\dagger\otimes (\varrho_t^{1-u}L_k(t))^\top.\notag
\end{align}
Note that $\msf{U}_t$ and $\msf{O}_t$ are the matrix representations of the super-operators $\mca{U}$ and $\mca{O}$, respectively.
Simple algebraic calculations show that the term $m_t$ can be related to the diagonal elements of the Onsager-like matrix $\msf{O}_t$ as (see Appendix \ref{app:qua.mobi.exp} for the proof)
\begin{equation}\label{eq:Lt.Onsager.rel}
 m_t=\frac{1}{2}\sum_{x}\bra{x_t}\otimes\ket{x_t}^\top\msf{O}_t\ket{x_t}\otimes\bra{x_t}^\top.
\end{equation}
In this sense, $m_t$ can be regarded as a quantum kinetic term.

\subsection{Quantum optimal transport distance and thermodynamic interpretation}
Although the classical Wasserstein distance is well formulated and studied, its quantum version remains under development.
Several quantum generalizations of the Wasserstein distance have been proposed \cite{Carlen.2014.CMP,Chen.2017.CSL,Chen.2018.TAC,Duvenhage.2020.arxiv,DePalma.2021.TIT,Vu.2021.PRL,Friedland.2021.arxiv}.
However, defining the quantum $L^1$-Wasserstein distance unambiguously by directly generalizing the classical distance has been shown to be impossible \cite{Agredo.2017.STO}.

By a naive extension using quantum coupling, a quantum optimal transport distance can be defined as
\begin{equation}\label{eq:org.Wq.dist}
W_q(\varrho^A,\varrho^B)\coloneqq\min_{\varrho^{AB}\in\Pi(\varrho^A,\varrho^B)}\tr{C\varrho^{AB}},
\end{equation}
where the coupling $\Pi(\varrho^A,\varrho^B)$ denotes the set of density matrices $\varrho^{AB}$ defined over the Hilbert space $\mca{H}\otimes\mca{H}$ and satisfy $\tr_B{\varrho^{AB}}=\varrho^A$ and $\tr_A{\varrho^{AB}}=\varrho^B$, and $C$ is a cost matrix that must be properly chosen to guarantee that $W_q$ is a distance.
In the classical case, the total variation distance is a classical Wasserstein distance with an appropriate choice of the cost matrix $C$.
It is thus natural to ask whether a cost matrix $C$ exists such that the quantum version of the total variation distance (i.e., the trace distance) can be represented as a quantum Wasserstein distance defined in Eq.~\eqref{eq:org.Wq.dist}.
Unfortunately, Ref.~\cite{Yu.2018.arxiv} showed that the trace distance could not be expressed in terms of this type of Wasserstein distance. In other words, for any choice of the cost matrix $C$, density matrices $\varrho^A$ and $\varrho^B$ always exist such that the distance $W_q$ defined in Eq.~\eqref{eq:org.Wq.dist} differs from the trace distance:
\begin{equation}
W_q(\varrho^A,\varrho^B)\neq\frac{1}{2}\|\varrho^A-\varrho^B\|_1\eqqcolon \mca{T}(\varrho^A,\varrho^B).
\end{equation}

Our aim is to relate quantum optimal transport distances and dissipation in Lindblad dynamics.
Note that Lindblad dynamics consist of a non-dissipative unitary part and dissipative Lindblad part.
Both parts jointly contribute to the time evolution of the system's density matrix.
In the vanishing coupling limit, irreversible entropy production becomes zero, whereas the distance $W_q(\varrho_0,\varrho_\tau)$ may be positive since $\varrho_0\neq\varrho_\tau$.
Therefore, relating dissipation to the optimal transport distances defined in the current form \eqref{eq:org.Wq.dist} is impossible.
Inspired by the dissipative structure of Lindblad dynamics, we define the following distance:
\begin{equation}
\mca{W}_q(\varrho^A,\varrho^B)\coloneqq\frac{1}{2}\min_{V^\dagger V=\mbb{1}}\|V\varrho^AV^\dagger-\varrho^B\|_1.
\end{equation}
Here, the minimum is over all possible unitaries $V$.
Intuitively, the distance $\mca{W}_q$ characterizes the state difference induced by the dissipative Lindblad part. Thus, it is expected to be relevant to dissipation.
Note that in the zero-dissipation limit (i.e., the system is unitarily evolved), this distance also vanishes.
Although the distance $\mca{W}_q$ is defined in a variational form, it can be analytically calculated using the eigenvalues of the density matrices.
Interestingly, it becomes exactly the classical Wasserstein distance between the eigenvalue distributions:
\begin{equation}\label{eq:ana.exp.tra.nrm}
\mca{W}_q(\varrho^A,\varrho^B)=\frac{1}{2}\sum_x|p_x^A-p_x^B|=\mca{T}(p^A,p^B),
\end{equation}
where $\{p_x^A\}$ and $\{p_x^B\}$ are increasing eigenvalues of $\varrho^A$ and $\varrho^B$, respectively (see Appendix \ref{app:qua.Wass.dis.exp.proof} for the proof).
For this reason, hereafter, $\mca{W}_q$ is referred to as the quantum Wasserstein distance.
Evidently, this distance satisfies the triangle inequality.
However, it is a pseudo-metric (i.e., $\mca{W}_q(\varrho^A,\varrho^B)=0$ for $\varrho^A\neq\varrho^B$ is possible). This originates from our goal of relating the defined distance to dissipation in Lindblad dynamics.

For the quantum Wasserstein distance previously defined, we provide the following thermodynamic interpretation.
\begin{theorem}\label{thm:qua.dis.Wass.var}
The quantum Wasserstein distance can be written in the following variational form:
\begin{align}
\mca{W}_q(\varrho^A,\varrho^B)&=\min_{\mca{L}_t}{\int_0^\tau\sqrt{\sigma_t m_t}\dd{t}}\label{eq:Wq.var.form1}\\
&=\min_{\mca{L}_t}\sqrt{\Sigma_\tau\mca{M}_\tau}\label{eq:Wq.var.form2}.
\end{align}
Here, the minimum is taken over all super-operators $\{\mca{L}_t\}_{0\le t\le\tau}$ that satisfy the Lindblad master equation \eqref{eq:Lind.eq} with boundary conditions $\varrho_0 =\varrho^A$ and $\varrho_\tau=\varrho^B$.
\end{theorem}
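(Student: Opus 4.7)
The plan is to follow the same two-stage template used for Theorem~\ref{thm:cla.dis.Wass.var}: first establish
\begin{equation*}
\mca{W}_q(\varrho^A,\varrho^B)\le\int_0^\tau\sqrt{\sigma_t m_t}\,\dd{t}\le\sqrt{\Sigma_\tau\mca{M}_\tau}
\end{equation*}
for every admissible Lindblad evolution transforming $\varrho^A$ into $\varrho^B$ over time $\tau$, and then exhibit a specific evolution that saturates both bounds. The right-hand inequality is a routine instance of the Cauchy--Schwarz inequality applied to $\sqrt{\sigma_t m_t}=\sqrt{\sigma_t}\cdot\sqrt{m_t}$, so the substantive work concentrates on the left-hand inequality and on the attainment construction.

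For the left-hand inequality, I would reduce to the classical Theorem~\ref{thm:cla.dis.Wass.var} by exploiting the classical-like structure of the induced eigenvalue dynamics. The spectral decomposition $\varrho_t=\sum_x p_x(t)\dyad{x_t}$ produces the master equation~\eqref{eq:eigen.master.eq} with transition rates $w_k^{xy}(t)$, and the quantum expressions~\eqref{eq:qua.ent.prod.rate} and~\eqref{eq:qua.mt.tmp1} coincide with the classical entropy production and state mobility of this effective jump process. Applying Theorem~\ref{thm:cla.dis.Wass.var} on the graph generated by the $L_k(t)$'s therefore yields
\begin{equation*}
\int_0^\tau\sqrt{\sigma_t m_t}\,\dd{t}\ge\mca{W}_1(p_0,p_\tau)\ge\mca{T}(p_0,p_\tau),
\end{equation*}
where the second step uses that any graph Wasserstein distance dominates the corresponding total variation distance, with equality on the fully connected graph, cf.~\eqref{eq:Wc.tot.var.dist}. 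Since the eigenvalue labeling inherited from the spectral decomposition is one particular pairing between the spectra of $\varrho^A$ and $\varrho^B$, and the sorted-eigenvalue formula~\eqref{eq:ana.exp.tra.nrm} identifies $\mca{W}_q(\varrho^A,\varrho^B)$ with the minimum of $\mca{T}$ over all such pairings, the chain closes with $\mca{T}(p_0,p_\tau)\ge\mca{W}_q(\varrho^A,\varrho^B)$.

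For attainability I would construct the saturating evolution in two stages. Let $V$ be a unitary achieving $\mca{W}_q(\varrho^A,\varrho^B)=\tfrac{1}{2}\|V\varrho^A V^\dagger-\varrho^B\|_1$, so that $\tilde\varrho^A\coloneqq V\varrho^A V^\dagger$ shares an eigenbasis with $\varrho^B$ with matching sorted eigenvalues. On a short subinterval $[0,\tau_1]$ I turn off the dissipator and drive $\varrho^A\mapsto\tilde\varrho^A$ by a Hamiltonian implementing a smooth unitary path from $\mbb{1}$ to $V$, which contributes zero to both $\sigma_t$ and $m_t$. On $[\tau_1,\tau]$ I turn off the Hamiltonian and take jump operators $L_k(t)=\sqrt{w_k^{xy}(t)}\,\dyad{x}{y}$ in the common eigenbasis, with rates copied from an optimizer of Theorem~\ref{thm:cla.dis.Wass.var} for $(\tilde p^A,p^B)$ on the fully connected graph. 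Because the eigenbasis is then frozen and the populations evolve according to the optimal classical chain, the dynamics satisfies $\int_0^\tau\sqrt{\sigma_t m_t}\,\dd{t}=\mca{W}_1(\tilde p^A,p^B)=\mca{W}_q(\varrho^A,\varrho^B)$, and a time rescaling of the classical optimizer can be used to enforce $\sigma_t/m_t$ constant on the dissipative stage so that the Cauchy--Schwarz step is also saturated.

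The step I expect to be most delicate is the rigorous reduction of the quantum $\sigma_t$ and $m_t$ to the classical expressions on the effective chain. The coherent piece $-i[H_t,\varrho_t]$ must drop out of both, while the time derivative $\dot p_x(t)=\partial_t\mel{x_t}{\varrho_t}{x_t}$ generates Hellmann--Feynman-type terms from $\ket{\dot x_t}$ that must cancel against the unitary contribution so that only the dissipator feeds the currents $j_k^{xy}(t)$; this cancellation, together with the local detailed balance pairing $L_k(t)=e^{s_k(t)/2}L_{k'}^\dagger(t)$, is what ultimately produces the classical-looking formulas~\eqref{eq:qua.ent.prod.rate} and~\eqref{eq:qua.mt.tmp1}. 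A secondary concern is the possibility of spectral crossings of $\varrho_t$, at which the labeling $\ket{x_t}$ becomes ambiguous; but since $\sqrt{\sigma_t m_t}$ is integrable and $\mca{W}_q$ depends only on the unordered spectra at the endpoints, this can be handled by a perturbation argument away from the crossings.
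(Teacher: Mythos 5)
Your proposal is correct and follows the paper's two-step strategy: first show $\mca{W}_q(\varrho^A,\varrho^B)\le\int_0^\tau\sqrt{\sigma_t m_t}\dd{t}\le\sqrt{\Sigma_\tau\mca{M}_\tau}$ by passing to the eigenvalue populations, then saturate with an explicit construction. For the inequality, your detour through $\mca{W}_1$ of ``the graph generated by the $L_k(t)$'s'' is equivalent to the paper's direct chain $\sqrt{\sigma_t m_t}\ge\frac{1}{2}\sum_{k}\sum_{x\neq y}|j_k^{xy}(t)|\ge\frac{1}{2}\sum_x|\dot p_x(t)|$ followed by the rearrangement argument (Prop.~\ref{prop:arra.ine}) identifying the sorted-eigenvalue total variation with $\mca{W}_q$; since the induced population dynamics \eqref{eq:eigen.master.eq} can have several jump channels per pair $(x,y)$ and time-dependent connectivity, it is cleanest to bound against the fully connected graph where $\mca{W}_1=\mca{T}$, which is what your chain does implicitly. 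The one genuinely different element is the saturating construction: you apply the aligning unitary $V$ first on a short initial segment (costing nothing, since $\sigma_t=m_t=0$ without dissipators) and then run the classical optimizer in the common eigenbasis with $H=0$, whereas the paper runs the classical optimizer in the initial eigenbasis for the entire duration while simultaneously rotating the jump operators as $L_k(t)=U_t\tilde L_k(t)U_t^\dagger$ with $U_\tau=\sum_x\dyad{x^B}{x^A}$. Both work for the same reason---the unitary part contributes to neither $\sigma_t$ nor $m_t$---and your sequential version even keeps the Cauchy--Schwarz equality, since $\sigma_t=c\,m_t$ holds trivially on the unitary segment; the paper's simultaneous version avoids splitting the time interval. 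The delicate points you flag (cancellation of the Hellmann--Feynman and commutator terms so that only the dissipator feeds $\dot p_x$, and spectral crossings) are exactly the right ones, and the first is precisely what the paper's rigorous derivation of Eq.~\eqref{eq:eigen.master.eq} and of Eqs.~\eqref{eq:qua.ent.prod.rate} and \eqref{eq:qua.mt.tmp1} establishes.
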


Theorem \ref{thm:qua.dis.Wass.var} is the second central result, and its sketch proof is given in the following.
Interestingly, Thm.~\ref{thm:qua.dis.Wass.var} has the same structure as Thm.~\ref{thm:cla.dis.Wass.var} in the classical case.
This implies a universal relationship between the optimal transport distances and dissipation in classical and quantum discrete systems.
\begin{proof}
We briefly describe the proof strategy; for a detailed derivation, see Appendix \ref{app:proof.qua.thm}.
We first prove that the inequalities
\begin{equation}\label{eq:thm4.tmp1}
\mca{W}_q(\varrho^A,\varrho^B)\le\int_0^\tau\sqrt{\sigma_t m_t}\dd{t}\le\sqrt{\Sigma_\tau\mca{M}_\tau}
\end{equation}
hold for any Markovian open quantum dynamics and then construct a specific process that simultaneously attains all the equalities of Eq.~\eqref{eq:thm4.tmp1}.
The inequalities in Eq.~\eqref{eq:thm4.tmp1} can be proved similarly as in the classical case.
To construct the dynamics that can achieve the equalities, we first construct a classical Markov jump process that transforms distribution $p^A$ into $p^B$ and satisfies
\begin{equation}
\mca{T}(p^A,p^B)=\int_0^\tau\sqrt{\sigma_t m_t}\dd{t}=\sqrt{\Sigma_\tau\mca{M}_\tau}.
\end{equation}
Here, $\{p_x^A\}$ and $\{p_x^B\}$ are increasing eigenvalues of $\varrho^A$ and $\varrho^B$, respectively.
Subsequently, we construct Lindblad dynamics based on this classical jump process such that the dynamics transforms density matrix $\varrho^A$ into $\varrho^B$, and the quantities $\sigma_t$ and $m_t$ are identical to those in the classical jump process.
We can verify that this quantum dynamics attains the equalities of Eq.~\eqref{eq:thm4.tmp1}.
\end{proof}

Similar to the classical case, the quantum Wasserstein distance can also be determined through the entropy production and dynamical activity associated with Markovian quantum dynamics, which is stated in Cor.~\ref{cor:qua.dis.Wass.var1} in Appendix \ref{app:qua.cor.proof}.

\section{Applications for thermodynamic interpretation of optimal transport}\label{sec:appl}
In this section, we present applications for our central results, namely, Thms.~\ref{thm:cla.dis.Wass.var} and \ref{thm:qua.dis.Wass.var}.
Specifically, we show that these variational formulas lead to stringent bounds for thermodynamic speed limits and information erasure at arbitrary temperatures.

\subsection{Classical and quantum thermodynamic speed limits}
The speed of state transformation in any system cannot be made arbitrarily fast because of physical constraints.
This fact leads to a natural question: What is the ultimate limit for state transformation?
This question sparked a lot of research and gave rise to the concept of speed limits.

Precisely speaking, speed limits impose lower bounds on the operational time required for evolving a system from a given state to a target one.
Originally, speed limits were derived for closed quantum systems, inspired by the Heisenberg time-energy uncertainty principle \cite{Mandelstam.1945.JP}.
One of the celebrated results is the Mandelstam--Tamm bound, which applies to closed quantum systems and takes the following form:
\begin{equation}\label{eq:ML.bound}
\tau\ge \frac{\mca{B}(\varrho_0,\varrho_\tau)}{\ev{\Delta H}_\tau},
\end{equation}
where $\mca{B}(\varrho,\sigma)\coloneqq\arccos\tr{|\sqrt{\varrho}\sqrt{\sigma}|}$ is the Bures angle and $(\Delta H_t)^2\coloneqq\tr{H_t^2\varrho_t}-\tr{H_t\varrho_t}^2$ is the energy fluctuation.
Equation \eqref{eq:ML.bound} implies that the speed of state transformation in closed quantum systems is constrained by the fluctuation of energy.
Various types of speed limits were subsequently generalized for open quantum and classical systems \cite{Mandelstam.1945.JP,Margolus.1998.PD,Campo.2013.PRL,Taddei.2013.PRL,Deffner.2013.PRL,Pires.2016.PRX,Okuyama.2018.PRL,Campaioli.2018.PRL,Shanahan.2018.PRL,Sun.2021.PRL,Connor.2021.PRA,Hamazaki.2022.PRXQ,Nakajima.2022.arxiv,Pintos.2022.PRX,Hasegawa.2022.arxiv} (see Ref.~\cite{Deffner.2017.JPA} for a comprehensive review).

Although several versions of classical and quantum speed limits exist for open systems, here we aim to develop {\it thermodynamic} bounds that satisfy two conditions: (i) they should be tight (i.e., for generic initial and final states, a configuration of the system always exists that transforms these states and saturates the bounds) and (ii) they should be physically interpretable (i.e., all quantities appearing in the bound are physically meaningful).
In the following, we derive these thermodynamic speed limits from the variational formulas for both classical and quantum cases.

\subsubsection{Classical case}
We consider a discrete classical system modeled by a Markov jump process [Eq.~\eqref{eq:Markov.mas.eq}].
The system is driven by thermodynamic forces and evolves according to the laws of thermodynamics.
Intuitively, to achieve fast transformation, we must pay some costs.
In the following, we derive fundamental bounds on the operational time that is required to evolve the system's distribution to the target one.

Let $\mca{G}(V,E)$ be the underlying topology of the jump process (i.e., the graph connectivity that determines whether the transition between two states is allowed).
Then, we can define the corresponding Wasserstein distance based on the graph $\mca{G}$.
According to Thm.~\ref{thm:cla.dis.Wass.var}, we have
\begin{equation}\label{eq:thm1.app}
\mca{W}_1(p_0,p_\tau)=\min_{\msf{W}_t}{ \int_0^\tau\sqrt{\sigma_t m_t}\dd{t} }=\min_{\msf{W}_t}\sqrt{\Sigma_\tau\mca{M}_\tau}.
\end{equation}
Since the system dynamics considered here is one of the admissible dynamics that transform ${p_0}$ into ${p_\tau}$, the following inequalities follow immediately from the equality \eqref{eq:thm1.app}:
\begin{align}
\mca{W}_1(p_0,p_\tau)&\le \int_0^\tau\sqrt{\sigma_t m_t}\dd{t}=\tau\ev{\sqrt{\sigma m}}_\tau\\
&\le\tau\sqrt{\ev{\sigma}_\tau\ev{ m}_\tau} .
\end{align}
Consequently, we obtain lower bounds on the operational time in terms of the Wasserstein distance, thermodynamic cost, and kinetic cost as follows:
\begin{equation}\label{eq:speed.limit}
\tau\ge\frac{\mca{W}_1(p_0,p_\tau)}{\ev{\sqrt{\sigma m}}_\tau}\ge\frac{\mca{W}_1(p_0,p_\tau)}{\sqrt{\ev{\sigma}_\tau\ev{ m}_\tau}}.
\end{equation}
Equation \eqref{eq:speed.limit} implies that both irreversible entropy production and state mobility jointly constrain the speed of state transformation.
Using Cor.~\ref{cor:cla.dis.Wass.var1} in Appendix \ref{app:alt.Wc.exp} and following the same procedure, we also obtain similar but tighter bounds in terms of time-averaged entropy production and dynamical activity as
\begin{align}
\tau\ge\frac{2\mca{W}_1(p_0,p_\tau)}{\ev{\sigma\Phi(\sigma/2a)^{-1}}_\tau}\ge \frac{2\mca{W}_1(p_0,p_\tau)}{\ev{\sigma}_\tau\Phi({\ev{\sigma}_\tau}/{2\ev{a}_\tau})^{-1}},\label{eq:speed.limit2}
\end{align}
where $\Phi(x)$ is the inverse function of $x\tanh(x)$.
Equations \eqref{eq:speed.limit} and \eqref{eq:speed.limit2} are our new thermodynamic speed limits for classical Markov jump processes.

Some remarks are in order.
First, the thermodynamic speed limits in Eqs.~\eqref{eq:speed.limit} and \eqref{eq:speed.limit2} are tight and saturable.
More specifically, for generic initial and final distributions, we can always construct dynamics that satisfy the global detailed balance condition and transform the initial distribution into the final one in a time duration equal to that of the lower bounds.

Second, our bounds are tight for arbitrary temperatures, even in the zero-temperature limit.
Since $\ev{\sigma}_\tau=O(\beta)$, irreversible entropy production becomes infinite as $\beta\to+\infty$, whereas dynamical activity remains finite [i.e., $\ev{a}_\tau=O(1)$].
Nevertheless, we show in the following that our bounds remain useful in this low-temperature limit.
Indeed, in the $\beta\to+\infty$ limit, bound \eqref{eq:speed.limit2} reduces to a nontrivial inequality $\tau\ge\mca{W}_1(p_0,p_\tau)/\ev{a}_\tau$.
In addition, bound \eqref{eq:speed.limit} also remains finite because we can prove that $\beta\ev{ m}_\tau$ does not diverge in general.
To this end, we assume that the energy levels are non-degenerate and the system is typically driven far from the instantaneous equilibrium.
Since $ m_{xy}(t)$ can be calculated as
\begin{align}
\beta  m_{xy}(t)&=\frac{\beta[a_{xy}(t)-a_{yx}(t)]}{\ln a_{xy}(t)-\ln a_{yx}(t)}\notag\\
&=\frac{a_{xy}(t)-a_{yx}(t)}{\beta^{-1}[\ln p_y(t)-\ln p_x(t)]+\varepsilon_y(t)-\varepsilon_x(t)},
\end{align}
we have
\begin{equation}
\beta  m_{xy}(t)\xrightarrow{\beta\to+\infty}\frac{a_{xy}(t)-a_{yx}(t)}{\varepsilon_y(t)-\varepsilon_x(t)},
\end{equation}
which remains finite.
Therefore, the term $\beta\ev{ m}_\tau$ does not diverge in the zero-temperature limit $\beta\to +\infty$.

Third, we compare our results with existing bounds in the literature.
In Ref.~\cite{Shiraishi.2018.PRL}, a classical speed limit was obtained for Markov jump processes, which reads
\begin{equation}\label{eq:org.speed.limit}
\tau\ge\frac{\mca{T}(p_0,p_\tau)}{\sqrt{\ev{\sigma}_\tau\ev{a}_\tau/2}}.
\end{equation}
Since $\mca{W}_1(p_0,p_\tau)\ge \mca{T}(p_0,p_\tau)$ and $ m_t\le a_t/2$ for all times, our speed limits in Eq.~\eqref{eq:speed.limit} are stronger than those in Eq.~\eqref{eq:org.speed.limit}.
Our bounds also suggest that the conventional bound \eqref{eq:org.speed.limit} can be asymptomatically saturated only when $\mca{W}_1(p_0,p_\tau)=\mca{T}(p_0,p_\tau)$ (e.g., when the underlying graph is fully connected) and the system is always near the instantaneous equilibrium.
In Refs.~\cite{Vo.2022.arxiv,Delvenne.2021.arxiv}, another thermodynamic speed limit, which is tighter than the conventional bound \eqref{eq:org.speed.limit}, was derived as
\begin{equation}\label{eq:org.speed.limit2}
\tau\ge\frac{2\mca{T}(p_0,p_\tau)}{\ev{\sigma}_\tau\Phi({\ev{\sigma}_\tau}/{2\ev{a}_\tau})^{-1}}.
\end{equation}
Since $\mca{W}_1(p_0,p_\tau)\ge\mca{T}(p_0,p_\tau)$, our bound \eqref{eq:speed.limit2} is stronger than bound \eqref{eq:org.speed.limit2}.
The essential difference is that our bounds consider the topology of the jump process, whereas the conventional bounds do not.

\subsubsection{Quantum case}

Next, we consider an open quantum system described by the Markovian Lindblad master equation [Eq.~\eqref{eq:Lind.eq}].
Following the same procedure as in the classical case, we derive stringent thermodynamic bounds on the operational time required to transform the initial density matrix into the final one.

From Thm.~\ref{thm:qua.dis.Wass.var}
\begin{align}
\mca{W}_q(\varrho_0,\varrho_\tau)=\min_{\mca{L}_t}\qty{ \int_0^\tau\sqrt{\sigma_t m_t}\dd{t}}=\min_{\mca{L}_t}\sqrt{\Sigma_\tau\mca{M}_\tau},
\end{align}
we analogously obtain the following inequalities:
\begin{align}
\mca{W}_q(\varrho_0,\varrho_\tau)&\le \int_0^\tau\sqrt{\sigma_t m_t}\dd{t}=\tau\ev{\sqrt{\sigma m}}_\tau\\
&\le\tau\sqrt{\ev{\sigma}_\tau\ev{ m}_\tau}.
\end{align}
Consequently, we arrive at the following bounds on the operational time:
\begin{align}\label{eq:qsl1}
\tau\ge\frac{\mca{W}_q(\varrho_0,\varrho_\tau)}{\ev{\sqrt{\sigma m}}_\tau}\ge\frac{\mca{W}_q(\varrho_0,\varrho_\tau)}{\sqrt{\ev{\sigma}_\tau\ev{ m}_\tau}}. 
\end{align}
Equation \eqref{eq:qsl1} implies that the speed of state transformation in open quantum systems is constrained by irreversible entropy production and dynamical state mobility.
Notably, it has the same form as the classical bound \eqref{eq:speed.limit}, indicating the unification of our results.
Exploiting Cor.~\ref{cor:qua.dis.Wass.var1} in Appendix \ref{app:qua.cor.proof} and repeating the same procedure yield other speed limits in terms of entropy production and dynamical activity, which read
\begin{align}\label{eq:qsl2}
\tau\ge\frac{2\mca{W}_q(\varrho_0,\varrho_\tau)}{\ev{{\sigma}\Phi\qty({\sigma/2a})^{-1}}_\tau}\ge\frac{2\mca{W}_q(\varrho_0,\varrho_\tau)}{\ev{\sigma}_\tau\Phi\qty(\ev{\sigma}_\tau/2\ev{a}_\tau)^{-1}}.
\end{align}
Equations \eqref{eq:qsl1} and \eqref{eq:qsl2} are the new quantum thermodynamic speed limits.
Remarkably, these thermodynamic speed limits are tight and saturable.
In other words, for any pair of generic initial and final states, a combination of Hamiltonian and jump operators always exists that attains the lower bound of the operational time.
Moreover, they are useful even in the zero-temperature limit.
The bounds imply that both the thermodynamic and kinetic costs play a crucial role in the change speed of open quantum systems.

We discuss the relevance of our results to previous studies.
In Ref.~\cite{Vu.2021.PRL}, a thermodynamic speed limit was derived for Markovian open quantum dynamics and is given by
\begin{align}\label{eq:org.qsl}
\tau\ge\frac{\mca{W}_q(\varrho_0,\varrho_\tau)}{\sqrt{\ev{\sigma}_\tau\ev{a}_\tau/2}}. 
\end{align}
According to Eq.~\eqref{eq:lt.at.qua}, we have $ m_t\le a_t/2$ for all $t$. Therefore, bound \eqref{eq:org.qsl} is looser than the new bound \eqref{eq:qsl1}.
In addition, since $\Phi(x)\ge\sqrt{x}$ for all $x\ge 0$, the new bound \eqref{eq:qsl2} is also stronger than the conventional bound \eqref{eq:org.qsl}.
In Ref.~\cite{Funo.2019.NJP}, another thermodynamic speed limit in terms of trace distance was derived for open quantum systems.
Since the metrics used to measure the distance between quantum states in these bounds are different (i.e., the Wasserstein distance in our study and the trace distance in Ref.~\cite{Funo.2019.NJP}), they cannot be directly compared.
Nonetheless, by exploiting the dynamical state mobility introduced in this study, we can derive a similar but tighter speed limit in terms of the trace distance. The detailed form of this speed limit is presented in Appendix \ref{app:qsl.trace.dist}. However, it is worth noting that the attainability of this bound is unclear.

\subsection{Finite-time Landauer principle}

The Landauer principle \cite{Landauer.1961.JRD} implies that erasing information is always accompanied by a thermodynamic cost.
More specifically, the thermodynamic cost required to erase a classical bit is at least $T\ln 2$, where $T$ is the environment temperature.
The lower bound $T\ln 2$ (referred to as the Landauer bound) plays not only a fundamental role in the thermodynamics of information and computation \cite{Bennett.1982.IJTP,Sagawa.2012.PTP,Parrondo.2015.NP,Goold.2016.JPA,Wolpert.2019.JPA}, it also resolves the paradox of Maxwell's demon \cite{Maruyama.2009.RMP}.

Various classical and quantum platforms \cite{Brut.2012.N,Jun.2014.PRL,Yan.2018.PRL,Hong.2016.SA,Saira.2020.PRR,Dago.2021.PRL} have experimentally confirmed that the Landauer bound can be achieved in the slow quasistatic limit.
However, practical computing requires fast memory erasure in a short time and thus, in general, consumes a thermodynamic cost far beyond the Landauer bound.
This background strongly motivates researchers to develop finite-time generalizations of the Landauer bound, which capture finite-time corrections and can better predict the erasure cost.
Although several finite-time bounds have been developed for both classical and quantum discrete systems \cite{Zhen.2021.PRL,Vu.2022.PRL,Lee.2022.arxiv}, the attainability of these bounds remains unclear.
Moreover, these bounds have looser predictive power in the low-temperature regime.
In the following, we attempt to derive finite-time bounds that are tight for arbitrary temperatures.

Before presenting the new bounds, we first describe the generic setup of information erasure for both classical and quantum cases.
We consider a finite-dimensional discrete system attached to a thermal reservoir at temperature $T$.
Information is encoded in the system state and subsequently erased by controlling the classical energy levels or the quantum Hamiltonian and driving the system toward its ground state.
The erasure protocol should work for an arbitrary initial state; that is, any initial state should be reset close to the ground state in a finite time $\tau$.
This erasure process leads to a change in system entropy, which must be compensated for by the heat dissipated into the reservoir.
Because we are interested in the average thermodynamic cost associated with the erasure protocol, considering the maximally mixed state as the initial state is convenient.
Roughly speaking, the reasons for this are that the maximally mixed state is sufficient to understand the average dissipated heat of the erasure process for all initial states and that if a protocol can reliably reset the system from the maximally mixed state, then it does so also for an arbitrary state.
A detailed discussion will be given in the following.

\subsubsection{Classical case}

We consider an information erasure process using a $d$-state classical system, the dynamics of which is governed by the master equation.
The transitions between states are mediated by a single thermal reservoir at temperature $T$.
The system state is characterized by the probability distribution, which encodes information we want to erase.
The energy levels are controlled according to a fixed protocol such that the system is always driven toward the ground state ${p_{*}}=[1,0,\dots,0]^\top$, irrespective of the initial state.

Here, we explain why the initial state should be set to the uniform distribution ${\overline{p}}=[1/d,\dots,1/d]^\top$.
First, let $\mca{Q}(p_0)$ be the heat dissipation of erasure for the initial distribution ${p_0}$.
Then, due to the linearity of the master equation and $\mca{Q}(\cdot)$, the average dissipation can be calculated as
\begin{equation}\label{eq:avgQ}
\mbb{E}[\mca{Q}(p_0)]=\mca{Q}(\mbb{E}[p_0])=\mca{Q}(\overline{p}),
\end{equation}
where $\mbb{E}[\cdot]$ denotes the average over all possible initial distributions.
Equation \eqref{eq:avgQ} implies that investigating the case with initial distribution ${p_0}={\overline{p}}$ is sufficient to understand the average dissipation.
Second, let $\Lambda_\tau=\vec{T}\exp\qty( \int_0^\tau \msf{W}_t\dd{t})$ be the map that represents the erasure process, that is, $\Lambda_\tau{p_0}={p_\tau}$.
We can then prove that if the uniform distribution can be erased within error $\delta>0$ (i.e., $\|\Lambda_\tau{\overline{p}} - {p_{*}}\|_F\le\delta$), the following inequality holds for arbitrary initial distribution ${p_0}$ (see Appendix \ref{app:cla.erase.guar} for the proof):
\begin{equation}\label{eq:erasure.error.bound}
\|\Lambda_\tau{p_0} - {p_{*}}\|_F\le \sqrt{2d\delta}.
\end{equation}
Equation \eqref{eq:erasure.error.bound} indicates that if a protocol can erase the uniform distribution, it can reliably do so also for arbitrary initial states.

We can now present the new bound.
Let ${p_\tau}$ be the final distribution for the case ${p_0}={\overline{p}}$ and let $\epsilon\coloneqq\mca{T}(p_\tau,p_*)=|1-p_1(\tau)|$ be the erasure error, which should be sufficiently small.
From Eq.~\eqref{eq:Wass.bound.disc}, the heat dissipation is lower bounded by system entropy production and a finite-time correction term as
\begin{equation}\label{eq:cla.Landauer.prin}
Q\ge -T\Delta S_{\rm sys}+\frac{\mca{W}_1(p_0,p_\tau)^2}{\tau\beta\ev{ m}_\tau}.
\end{equation}
Equation \eqref{eq:cla.Landauer.prin} is regarded as the finite-time Landauer principle for classical systems.
The bound is tight and can be saturated for arbitrary temperatures, even in the zero-temperature limit.
As shown in the previous section, the term $\beta\ev{ m}_\tau$ remains finite even when $\beta\to\infty$ (i.e., $T\to 0$).
Therefore, bound \eqref{eq:cla.Landauer.prin} is useful for arbitrary temperatures.
By contrast, the conventional Landauer bound becomes trivial in the low-temperature regime (i.e., $Q\ge 0$).
We also note that bound \eqref{eq:cla.Landauer.prin} is tighter than the following bound:
\begin{equation}
Q\ge -T\Delta S_{\rm sys}+\frac{\mca{T}(p_0,p_\tau)^2}{\tau\beta\ev{a}_\tau/2},
\end{equation}
which is obtained from the conventional speed limit \eqref{eq:org.speed.limit}.

Bound \eqref{eq:cla.Landauer.prin} can be simplified by including the erasure error.
To this end, we further bound the terms in Eq.~\eqref{eq:cla.Landauer.prin} from below as
\begin{align}
-\Delta S_{\rm sys}&=\ln d-S(p_\tau)\ge \ln d-h(\epsilon),\\
\mca{W}_1(p_0,p_\tau)&\ge\mca{T}(p_0,p_\tau)\ge \qty|1-1/d-\epsilon|,
\end{align}
where $h(\epsilon)\coloneqq - \epsilon\ln[\epsilon/(d-1)] - (1-\epsilon)\ln(1-\epsilon)\ge 0$ is a function of $\epsilon$ that vanishes as $\epsilon\to 0$.
Consequently, we obtain the following bound on the average heat dissipation:
\begin{equation}\label{eq:fin.time.Landauer.error}
Q\ge T\qty[\ln d - h(\epsilon)] +\frac{(1-1/d-\epsilon)^2}{\tau\beta\ev{ m}_\tau}.
\end{equation}
Equation \eqref{eq:fin.time.Landauer.error} imposes a lower bound on heat dissipation in terms of the operational time and erasure error.
In the limit of perfect erasure (i.e., $\epsilon\to 0$), a simple bound can be derived:
\begin{equation}\label{eq:simple.Landauer.error}
Q\ge T\ln d+\frac{(1-1/d)^2}{\tau\beta\ev{ m}_\tau}.
\end{equation}
For slow erasure (i.e., $\tau\beta\ev{ m}_\tau\gg 1$), the second term in the lower bound vanishes. Thus, Eq.~\eqref{eq:simple.Landauer.error} recovers the conventional Landauer bound for the $d=2$ case.
By contrast, in the fast-erasure limit (i.e., $\tau\beta\ev{ m}_\tau\ll 1$), this correction term becomes dominant, implying that fast erasure is accompanied by a thermodynamic cost far beyond the Landauer cost.

If we consider dynamical activity instead of dynamical state mobility, we can obtain another finite-time Landauer bound. By transforming the speed limit \eqref{eq:speed.limit2}, we can show that heat dissipation is lower bounded by the Wasserstein distance and dynamical activity as
\begin{equation}\label{eq:cla.Landauer.prin2}
\frac{Q}{T}\ge -\Delta S_{\rm sys} + 2\mca{W}_1(p_0,p_\tau)\tanh^{-1}\qty(\frac{\mca{W}_1(p_0,p_\tau)}{\tau\ev{a}_\tau}).
\end{equation}
This new bound is always tighter than the bound reported in Ref.~\cite{Lee.2022.arxiv}, which uses the total variation distance to quantify the distance between probability distributions. In general, bound \eqref{eq:cla.Landauer.prin2} can be either stronger or looser than bound \eqref{eq:cla.Landauer.prin}.

\subsubsection{Quantum case}

Here, we consider a quantum process of erasing information.
The erasure process is implemented using a controllable $d$-dimensional qudit system, which is attached to a thermal reservoir at temperature $T$. 
The density matrix of the qudit encodes the information we want to erase and then is driven toward the ground state $\varrho_{*}=\dyad{0}$ by controlling the Hamiltonian.

Analogous to the classical case, the initial state is conveniently set to the maximally mixed state $\overline{\varrho}=\mbb{1}/d$.
This is because assigning the maximally mixed state to the initial state is sufficient to understand the average heat dissipation in the quantum case.
In addition, we can show that if an erasure protocol can erase the maximally mixed state, it can reliably do so also for arbitrary initial states.
More specifically, let $\Lambda_\tau(\cdot)=\vec{T}\exp\qty( \int_0^\tau \mca{L}_t\dd{t} )(\cdot)$ be the quantum map that describes the erasure process [i.e., $\Lambda_\tau(\varrho_0)=\varrho_\tau$].
Then, if the maximally mixed state can be erased within error $\delta>0$ (i.e., $\|\Lambda_\tau(\overline{\varrho}) - \varrho_{*}\|_F\le\delta$), the erasure error for arbitrary initial state $\varrho_0$ can be upper bounded as follows \cite{Vu.2022.PRL}:
\begin{equation}\label{eq:erasure.error.bound2}
\|\Lambda_\tau(\varrho_0) - \varrho_{*} \|_{F}\le \sqrt{2d\delta}.
\end{equation}
Equation \eqref{eq:erasure.error.bound2} provides insight into the reliability of the erasure protocol by verifying the $\varrho_0=\overline{\varrho}$ case.

Next, we present the finite-time bound for quantum information erasure.
Because $\varrho_0=\mbb{1}/d$, the Wasserstein distance coincides with the trace distance [i.e., $\mca{W}_q(\varrho_0,\varrho_\tau)=\mca{T}(\varrho_0,\varrho_\tau)$].
Consequently, a finite-time bound on heat dissipation can be obtained as
\begin{equation}\label{eq:qua.Landauer.prin}
Q\ge -T\Delta S_{\rm sys}+\frac{\mca{T}(\varrho_0,\varrho_\tau)^2}{\tau\beta\ev{ m}_\tau}.
\end{equation}
Equation \eqref{eq:qua.Landauer.prin} is the finite-time quantum Landauer principle, which is tight and can be saturated for arbitrary temperatures.
In addition to the conventional Landauer term $-T\Delta S_{\rm sys}$, a finite-time correction term exists in the lower bound, which does not vanish even in the zero-temperature limit. 
Therefore, the inequality \eqref{eq:qua.Landauer.prin} provides a stringent bound on heat dissipation for information erasure in both fast-driving and low-temperature regimes.

We compare bound \eqref{eq:qua.Landauer.prin} with an existing bound derived in Ref.~\cite{Vu.2022.PRL}, which reads
\begin{equation}\label{eq:qua.Landauer.prin.old}
Q\ge -T\Delta S_{\rm sys}+\frac{\mca{T}(\varrho_0,\varrho_\tau)^2}{\tau\beta\ev{a}_\tau/2}.
\end{equation}
Since $ m_t\le a_t/2$ for all times, it is immediately clear that bound \eqref{eq:qua.Landauer.prin} is always stronger than bound \eqref{eq:qua.Landauer.prin.old}.

Next, we derive a simplified bound which includes the erasure error.
Let $\epsilon\coloneqq\mca{T}(\varrho_\tau,\varrho_*)$ be the erasure error, which should be small.
We can then analogously bound the terms in Eq.~\eqref{eq:qua.Landauer.prin} from below as
\begin{align}
-\Delta S_{\rm sys}&=\ln d-S(\varrho_\tau)\ge \ln d-h(\epsilon),\label{eq:q.sysent.lb} \\
\mca{T}(\varrho_0,\varrho_\tau)&\ge\qty|\mca{T}(\varrho_0,\varrho_*)-\mca{T}(\varrho_\tau,\varrho_*)|=\qty|1-1/d-\epsilon|,\label{eq:q.trdist.lb}
\end{align}
where we use an inequality relating the entropy difference between two quantum states to their trace distance in the first line \cite{Audenaert.2007.JPA}.
Inserting Eqs.~\eqref{eq:q.sysent.lb} and \eqref{eq:q.trdist.lb} into Eq.~\eqref{eq:qua.Landauer.prin}, we arrive at the following simple bound:
\begin{equation}\label{eq:qua.Landauer.prin.sim}
Q\ge T[\ln d-h(\epsilon)]+\frac{(1-1/d-\epsilon)^2}{\tau\beta\ev{ m}_\tau}.
\end{equation}
Equation \eqref{eq:qua.Landauer.prin.sim} is the simplified Landauer bound that includes finite-time and finite-error corrections.
Remarkably, it has the same structure as the classical bound \eqref{eq:cla.Landauer.prin}.
In the limit of perfect and slow erasure, bound \eqref{eq:qua.Landauer.prin.sim} reduces to the conventional Landauer bound as $d=2$.

In analogy to the classical case, a finite-time Landauer bound in terms of the quantum Wasserstein distance and dynamical activity can also be obtained. By rearranging the speed limit \eqref{eq:qsl2}, we can prove that
\begin{equation}
\frac{Q}{T}\ge -\Delta S_{\rm sys} + 2\mca{W}_q(\varrho_0,\varrho_\tau)\tanh^{-1}\qty(\frac{\mca{W}_q(\varrho_0,\varrho_\tau)}{\tau\ev{a}_\tau}).
\end{equation}
This finite-time bound can be considered a quantum analog of the classical bound \eqref{eq:cla.Landauer.prin2}.

\section{Numerical demonstrations}\label{sec:num.demon}

Next, we numerically illustrate the applications of our results, the thermodynamic uncertainty relation, speed limits, and finite-time Landauer principles in several classical and quantum systems.

\begin{figure*}[t]
\centering
\includegraphics[width=1.0\linewidth]{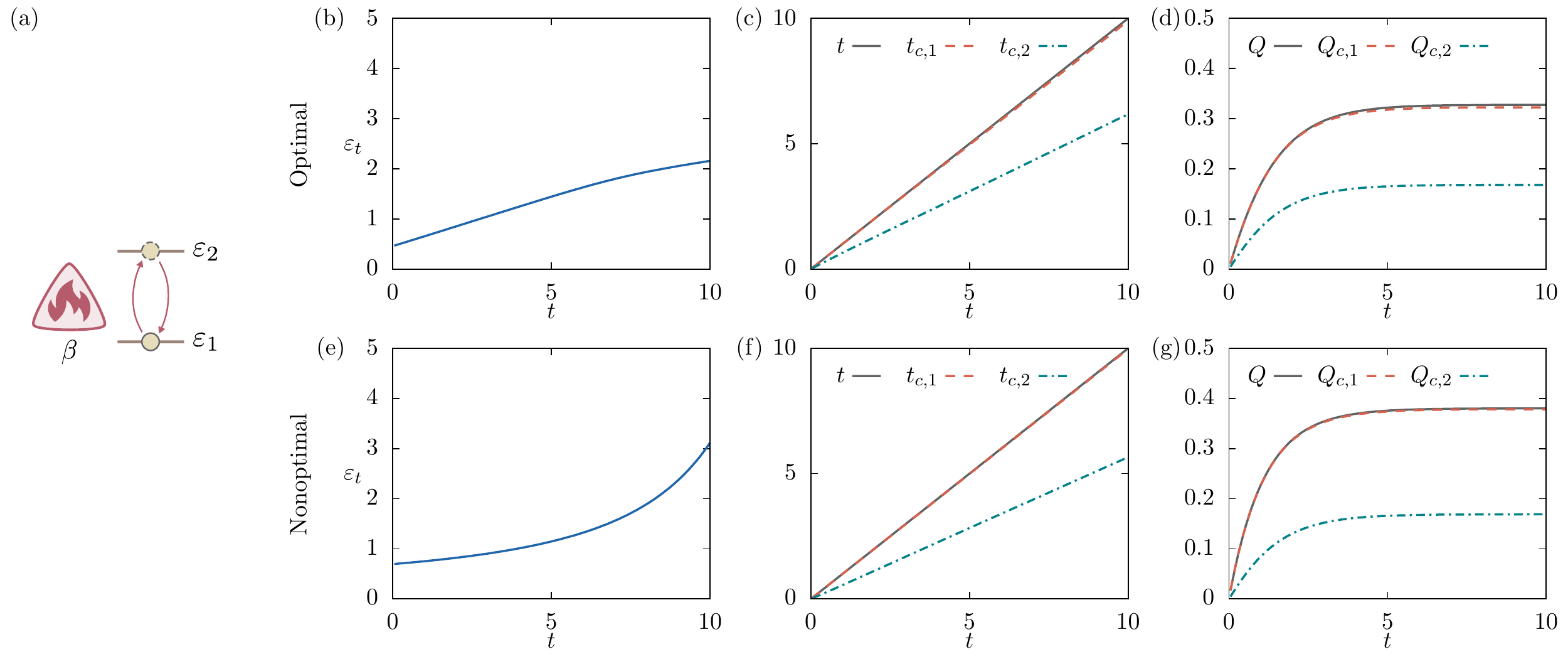}
\protect\caption{Numerical illustration of the classical thermodynamic speed limits and finite-time Landauer principle for both optimal and nonoptimal protocols. (a) Schematic of the two-level system. (b)(e) Time variations of the control parameters of the optimal and nonoptimal protocols. (c)(f) Numerical verification of the speed limits $t\ge t_{c,1}\ge t_{c,2}$ and (d)(g) finite-time Landauer principle $Q\ge Q_{c,1}\ge Q_{c,2}$. The other parameters are fixed as $\beta=10$ and $\tau=10$. The weighting factor $\lambda$ is set such that $(1-\lambda)/\lambda=10^5$.}\label{fig:CSLEx}
\end{figure*}

\subsection{Illustration of classical thermodynamic speed limits and finite-time Landauer principle}
First, we illustrate the classical speed limits and finite-time Landauer principle in a two-level system.
The system dynamics can be described by a Markov jump process with transition rates given by
\begin{equation}
w_{xy}(t)=\gamma_{xy}(t)\frac{e^{\beta\varepsilon_y(t)}}{e^{\beta\varepsilon_y(t)}+e^{\beta \varepsilon_x(t)}},
\end{equation}
where $\gamma_{xy}(t)=\gamma_{yx}(t)$ are tunable parameters and $\varepsilon_x(t)$ denotes the instantaneous energy level of state $x$.
For simplicity, we set $\varepsilon_t\coloneqq\varepsilon_2(t)-\varepsilon_1(t)$ and $\gamma_{xy}(t)=1$ for all transitions.
The parameter $\{\varepsilon_t\}$ thus defines a time-dependent control protocol.

To illustrate the bounds, we consider an information erasure process in which an arbitrary initial distribution is always reset to the ground state within a finite error.
We examine two control protocols, namely, optimal and nonoptimal.
The optimality here refers to the dissipation of the least amount of heat while achieving the predetermined error.

The optimal protocol is numerically obtained by solving the minimization problem with the following objective functional:
\begin{equation}\label{eq:cla.obj.func}
\mca{F}_c[\{\varepsilon_t\}]\coloneqq \lambda Q+(1-\lambda)\mca{T}(p_\tau,p_*),
\end{equation}
where $\lambda\in[0,1]$ is a weighting factor.
The functional $\mca{F}_c$ consists of two incompatible objectives, namely, heat dissipation and erasure error, which cannot be simultaneously small.
To reduce the erasure error, we must pay the price of dissipation. Conversely, reducing dissipation could enhance the error between the final and ground states.
As $\lambda$ is fixed, the solution of the optimization problem corresponds to a Pareto-optimal protocol, in which heat dissipation cannot be further minimized without increasing the error.
Imposing constraints on the control parameters is physically reasonable.
Hereafter, we consider the constraint $\varepsilon_t\in[10^{-2},10^1]$.
To solve the problem \eqref{eq:cla.obj.func} under both equality and inequality constraints, we discretize the control parameters into $1000$ points and optimize the functional $\mca{F}$ with the aid of nonlinear programming solvers.
We determine the weighting factor $\lambda$ such that both the optimal and nonoptimal protocols reset the uniform distribution $\overline{p}=[1/2,1/2]^\top$ to the ground state within the same error.
The time variation of the protocol is plotted in Fig.~\ref{fig:CSLEx}(b).
Notably, the increase in the energy gap between the two levels $\varepsilon_1$ and $\varepsilon_2$ is constant in the intermediate period but tends to slow down in the late period.

The nonoptimal protocol simply lifts the energy level $\varepsilon_2(t)$, forcing the system to descend to the ground state.
The time-dependent control parameter $\varepsilon_t$ is specified as
\begin{equation}
\varepsilon_t=0.422\times\exp\qty( \frac{\tau+t}{2\tau -t} ),
\end{equation}
which is illustrated in Fig.~\ref{fig:CSLEx}(e).
We can observe that, unlike the optimal protocol, the energy gap in the nonoptimal protocol rapidly increases in the late period.
In the final time, this naive protocol should dissipate more heat than the optimal protocol.

The process of information erasure is performed within the period $\tau$. At each time $t\le\tau$, according to Eqs.~\eqref{eq:speed.limit} and \eqref{eq:org.speed.limit}, the operational time is lower bounded as
\begin{equation}
t\ge \frac{\mca{W}_1(p_0,p_t)}{\sqrt{\ev{\sigma}_t\ev{ m}_t}}\eqqcolon t_{c,1}\ge\frac{\mca{T}(p_0,p_t)}{\sqrt{\ev{\sigma}_t\ev{a}_t/2}} \eqqcolon t_{c,2}.
\end{equation}
These bounds are numerically verified for the optimal and nonoptimal protocols in Figs.~\ref{fig:CSLEx}(c) and \ref{fig:CSLEx}(f), respectively.
As shown, the derived bound $t\ge t_{c,1}$ is tight and stronger than the existing bound $t\ge t_{c,2}$ for all times.

Likewise, the dissipated heat is lower bounded by the entropy change and finite-time correction term as
\begin{align}
Q&\ge -T\Delta S_{\rm sys} + \frac{\mca{W}_1(p_0,p_t)^2}{t\beta\ev{ m}_t}\eqqcolon Q_{c,1}\notag\\
&\ge -T\Delta S_{\rm sys} + \frac{\mca{T}(p_0,p_t)^2}{t\beta\ev{a}_t/2} \eqqcolon Q_{c,2}.
\end{align}
The numerical results are plotted in Figs.~\ref{fig:CSLEx}(d) and \ref{fig:CSLEx}(g) for the optimal and nonoptimal protocols, respectively.
As seen, the new lower bound $Q_{c,1}$ tightly bounds the dissipated heat $Q$ in both protocols, whereas the existing lower bound $Q_{c,2}$ is loose and does not provide a good prediction for heat dissipation.
Notice that the average heat dissipation at the final time $\tau=10$ is approximately $0.33$, which is far beyond the conventional Landauer bound $\beta^{-1}\ln 2\approx 0.069$. This implies that the finite-time correction is dominant over the entropy change in this case.

\subsection{Illustration of quantum thermodynamic speed limits and finite-time Landauer principle}
We next exemplify the quantum speed limits and finite-time Landauer principle with a simple model of information erasure using a spin-$1/2$ qubit.
The qubit is weakly attached to a heat bath at inverse temperature $\beta$.
The time evolution of the reduced density matrix can be described by the Lindblad equation with the Hamiltonian
\begin{equation}
H_t=\frac{\varepsilon_t}{2}\qty[\cos(\theta_t)\sigma_z+\sin(\theta_t)\sigma_x]
\end{equation}
and jump operators
\begin{align}
L_1(t)&=\sqrt{\gamma \varepsilon_t(n_t+1)}\dyad{0_t}{1_t},\\
L_2(t)&=\sqrt{\gamma \varepsilon_tn_t}\dyad{1_t}{0_t}.
\end{align}
Here, $\{\ket{0_t},\ket{1_t}\}$ are the instantaneous energy eigenstates, $\sigma_{x,y,z}$ are the Pauli matrices, $\gamma$ is the coupling strength, $n_t\coloneqq 1/(e^{\beta \varepsilon_t}-1)$, and $\varepsilon_t$ and $\theta_t$ are time-dependent control parameters.
More specifically, $E_t$ characterizes the energy gap between the energy eigenstates, whereas $\theta_t$ quantifies the relative strength of coherent tunneling to energy bias \cite{Leggett.1987.RMP}.
The qubit is initially prepared in the state $\varrho_0=\mbb{1}/2$ and subsequently driven toward the ground state $\varrho_*=\dyad{0}$ of $\sigma_z$.
If $\theta_t$ is time invariant, quantum coherence in the energy eigenstates cannot be created, and the protocol is thus classical. Otherwise, it becomes a quantum protocol.

\begin{figure*}[t]
\centering
\includegraphics[width=1.0\linewidth]{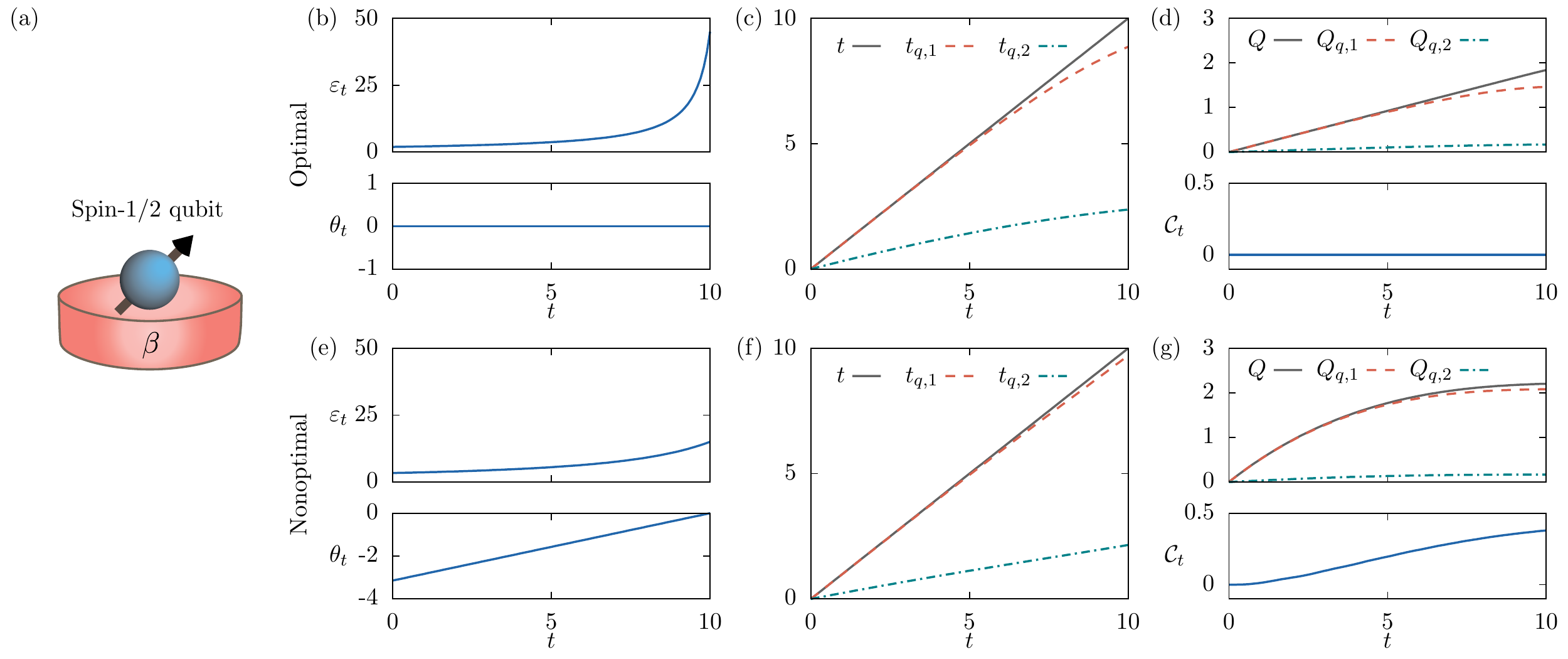}
\protect\caption{Numerical illustration of the quantum thermodynamic speed limits and finite-time Landauer principle for both optimal and nonoptimal protocols. (a) Schematic of the spin-$1/2$ qubit, which is coupled to a heat bath at inverse temperature $\beta$. (b)(e) Time variations of the control parameters of the optimal and nonoptimal protocols. (c)(f) Numerical verification of the speed limits $t\ge t_{q,1}\ge t_{q,2}$ and (d)(g) finite-time Landauer principle $Q\ge Q_{q,1}\ge Q_{q,2}$ as well as the amount of quantum coherence $\mca{C}_t$. The other parameters are fixed as $\beta=10$, $\gamma=0.1$, and $\tau=10$. The weighting factor $\lambda$ is set such that $(1-\lambda)/\lambda=10^3$.}\label{fig:QSLEx}
\end{figure*}

An infinite number of approaches can be used to reset the qubit with a probability close to $1$.
As in the classical case, two protocols are considered, namely, Pareto-optimal and nonoptimal.
Both protocols are designed to erase information with the same error.

The optimal protocol minimizes two incompatible objectives, the average dissipated heat and erasure error.
Specifically, the protocol can be achieved by solving the minimization problem with the following multi-objective functional \cite{Vu.2022.PRL}:
\begin{equation}
\mca{F}_q[\{\varepsilon_t,\theta_t\}]\coloneqq \lambda Q-(1-\lambda)F(\varrho_\tau,\varrho_*),\label{eq:qua.opt.func}
\end{equation}
where $\lambda\in[0,1)$ is a weighting factor and $F(\varrho,\sigma)\coloneqq(\tr |\sqrt{\varrho}\sqrt{\sigma}|)^2$ is the fidelity of the two quantum states $\varrho$ and $\sigma$ \cite{Jozsa.1994.JMO}.
Because of the physical limitations, placing constraints on the control parameters is natural.
Hereafter, we impose the following lower and upper bounds on the parameters: $\varepsilon_t\in [10^{-1},10^2]$ and $\theta_t\in[-\pi,\pi]$.
By numerically solving the nonlinear optimization problem \eqref{eq:qua.opt.func}, we can obtain the optimal protocol, as plotted in Fig.~\ref{fig:QSLEx}(b).
As seen, the parameter $\theta_t$ is fixed to $0$ for all times, implying that the optimal protocol is classical and does not generate any amount of energetic coherence.
Furthermore, the energy gap increases gradually in the intermediate period and changes rapidly in the final stage.

The nonoptimal protocol is defined as
\begin{equation}
\varepsilon_t=2.04\times\exp\qty(\frac{\tau+t}{2\tau-t}),~\theta_t=\pi\qty(\frac{t}{\tau}-1),
\end{equation}
which is plotted in Fig.~\ref{fig:QSLEx}(e).
This protocol naively increases the energy gap while varying the coherent parameter.
We also observe that the increase in the energy gap is different from that of the optimal case.

We first demonstrate the quantum thermodynamic speed limits.
According to Eqs.~\eqref{eq:qsl1} and \eqref{eq:org.qsl}, the operational time is lower bounded as follows:
\begin{equation}
t\ge \frac{\mca{W}_q(\varrho_0,\varrho_t)}{\sqrt{\ev{\sigma}_t\ev{ m}_t}}\eqqcolon t_{q,1}\ge\frac{\mca{W}_q(\varrho_0,\varrho_t)}{\sqrt{\ev{\sigma}_t\ev{a}_t/2}} \eqqcolon t_{q,2}.
\end{equation}
We illustrate these bounds for both the optimal and nonoptimal protocols in Figs.~\ref{fig:QSLEx}(c) and \ref{fig:QSLEx}(f), respectively.
As shown, the derived bound $t\ge t_{q,1}$ is tight for all times and is stronger than the existing bound $t\ge t_{q,2}$.

Next, we verify the finite-time quantum Landauer principle.
The lower bounds on the average heat dissipation are given by Eqs.~\eqref{eq:qua.Landauer.prin} and \eqref{eq:qua.Landauer.prin.old} as
\begin{align}
Q&\ge -T\Delta S_{\rm sys} + \frac{\mca{T}(\varrho_0,\varrho_t)^2}{t\beta\ev{ m}_t}\eqqcolon Q_{q,1}\notag\\
&\ge -T\Delta S_{\rm sys} + \frac{\mca{T}(\varrho_0,\varrho_t)^2}{t\beta\ev{a}_t/2} \eqqcolon Q_{q,2}.
\end{align}
The numerical results are plotted in Figs.~\ref{fig:QSLEx}(d) and \ref{fig:QSLEx}(g) for the optimal and nonoptimal protocols, respectively.
As shown, the new bound $Q\ge Q_{q,1}$ is tight for all times, whereas the existing bound $Q\ge Q_{q,2}$ is loose.
The optimal protocol also clearly dissipates less heat than does the nonoptimal protocol at the final time $\tau=10$.
In addition, note that the average heat dissipation in both protocols is approximately $2$, which is significantly greater than the conventional Landauer bound $\beta^{-1}\ln 2\approx 0.069$.

We discuss the effect of quantum coherence in the finite-time erasure process. For convenience, we quantify the amount of energetic coherence using the $\ell_1$-norm, which is one of the most general coherence monotones in the literature \cite{Baumgratz.2014.PRL}:
\begin{equation}
	\mca{C}_t\coloneqq\int_0^t[|\mel{0_s}{\varrho_s}{1_s}|+|\mel{1_s}{\varrho_s}{0_s}|]\dd{s}.
\end{equation}
That is, $\mca{C}_t$ is the time integral of the sum of absolute off-diagonal elements of quantum states in the basis of energy eigenstates.
Since $\theta_t$ is invariant in the optimal protocol, the instantaneous energy eigenstates remain unchanged, and the density matrix is always diagonal in the eigenstates. Therefore, the total amount of quantum coherence generated is always zero (i.e., $\mca{C}_t=0$), which is plotted in Fig.~\ref{fig:QSLEx}(d).
On the other hand, for the nonoptimal protocol, $\theta_t$ varies over time and quantum coherence is generally generated. The positive finite value of $\mca{C}_t$ can be confirmed from Fig.~\ref{fig:QSLEx}(g).
We can observe that the nonoptimal protocol that generates coherence is more dissipative than the optimal protocol that does not create coherence.
This is consistent with the fact that the creation of quantum coherence leads to unavoidable dissipation \cite{Vu.2022.PRL}.
This also suggests that a lower bound of heat dissipation that can capture the effect of coherence is desirable.

It is therefore worthwhile discussing how quantum coherence effects can be captured by the bounds.
According to the definition \eqref{eq:qua.dyn.act.def} of quantum dynamical activity, $a_t$ can be explicitly expressed in terms of the diagonal elements of $\varrho_t$ in the basis of energy eigenstates as
\begin{equation}
	a_t=\gamma\varepsilon_t\qty[ n_t\mel{0_t}{\varrho_t}{0_t} + (n_t+1)\mel{1_t}{\varrho_t}{1_t} ].
\end{equation}
As seen, $a_t$ has no coherent contribution from the off-diagonal part of quantum state $\varrho_t$.
On the other hand, quantum dynamical state mobility $m_t$ implicitly includes a coherent contribution.
This can be validated from Eq.~\eqref{eq:quan.kinetic.cost} by noticing that the eigenbasis $\{\ket{x_t}\}$ of the spectral decomposition $\varrho_t=\sum_xp_x(t)\dyad{x_t}$ is generally different from the energy eigenstates $\{\ket{0_t},\ket{1_t}\}$.
Although the coherence present in the quantum state $\varrho_t$ may contribute to the bounds through the boundary terms $\Delta S_{\rm sys}$ and $\mca{T}(\varrho_0,\varrho_t)$, such contributions are negligibly small as compared to the time-extensive contribution $t\ev{m}_t$.
Therefore, the derived bound $Q\ge Q_{q,1}$ can capture the effect of quantum coherence occurring during the process and precisely predict heat dissipation even in the presence of quantum coherence, whereas the existing bound $Q\ge Q_{q,2}$ cannot.

\section{Conclusion and outlook}\label{sec:conc.disc}
In this study, we elucidated an intimate relationship between thermodynamics and discrete optimal transport for both classical and quantum cases.
To this end, we introduced a novel physical term, namely, dynamical state mobility, which characterizes a complementary aspect of irreversible entropy production in the time evolution of a system.
By deriving an improved thermodynamic uncertainty relation, we showed that dynamical state mobility plays a critical role in constraining the fluctuation of time-antisymmetric currents, thus providing insight into the precision of currents in Markov jump processes.
Exploiting this term, we derived variational formulas that express the discrete Wasserstein distance in terms of the thermodynamic cost associated with Markovian dynamics.
These formulas not only unify the relationship between thermodynamics and optimal transport for both discrete and continuous cases but also generalize to the Markovian quantum dynamics.
From the variational formulas, we derived stringent thermodynamic speed limits and the finite-time Landauer principle.
The obtained bounds are tight and can be saturated for an arbitrary pair of initial and final states and arbitrary temperatures.

Our theoretical frameworks also shed light on the minimization problem of entropy production in discrete Markov dynamics.
Recent studies \cite{Vu.2021.PRL2,Remlein.2021.PRE,Dechant.2022.JPA} have shown that entropy production can be optimized to be arbitrarily small if there are no constraints on the transition rates.
Our results suggest that dynamical state mobility may be a reasonable constraint because once it is fixed, minimum entropy production is immediately determined by the discrete Wasserstein distance.
In addition, the optimal protocol that attains the minimum entropy production can be constructed from the optimal coupling between the initial and final distributions, which can be numerically computed in an efficient manner.

Although not explicitly stated in this study, our framework is also applicable to bipartite systems \cite{Horowitz.2014.PRX}, in which two subsystems exchange information.
In this case, the Wasserstein distance between the initial and final distributions of a subsystem can be expressed in terms of the entropy production of that subsystem and the information flow with another subsystem.

Our study opens several possible directions for future research, which are as follows.
\begin{itemize}
	\item[(1)] \emph{Generalizing the formulations to include measurement and feedback control}.---Measurement and feedback control are ubiquitous in physics and biology. The thermodynamics of feedback control \cite{Sagawa.2012.PRE} has been intensively developed in recent years. In this study, we focused exclusively on discrete Markovian systems subjected to deterministic control protocols. Extending our framework to include the effects of measurement and feedback control would be significant, as it would provide a better understanding of the role of information in nonequilibrium systems. Because information obtained from measurements can enhance the precision of observables \cite{Vu.2020.JPA,Potts.2019.PRE} and would violate the second law of thermodynamics \cite{Sagawa.2008.PRL}, in addition to entropy production, information would be expected to play a crucial role in the speed of state transformation and heat dissipation of finite-time information erasure.
	\item[(2)] \emph{Decomposition of entropy production}.---Decomposing entropy production is theoretically appealing because it provides insight into the dissipative structure of thermodynamic processes. Previous studies have shown that irreversible entropy production of Markovian dynamics could be split into an adiabatic and non-adiabatic contribution in both discrete and continuous cases, which originates from the breaking of detailed balance \cite{Hatano.2001.PRL,Esposito.2010.PRL}. For overdamped Langevin dynamics, recent studies \cite{Maes.2013.JSP,Dechant.2022.PRR} have introduced a new decomposition of the entropy production rate in terms of the continuous Wasserstein distance and a housekeeping entropy production rate as
	\begin{equation}
		\sigma_t = \frac{1}{D}\qty(\lim_{dt\to 0}\frac{W_2(p_t,p_{t+dt})}{dt})^2 + {\sigma}_t^{\rm hk}.
	\end{equation}
	The term $\sigma_t^{\rm hk}$ vanishes as the system is driven by a conservative force.
	Inspired by this decomposition, the entropy production rate of Markov jump processes can be split in a similar manner as
	\begin{equation}
		\sigma_t = \frac{1}{ m_t}\qty(\lim_{dt\to 0}\frac{\mca{W}_1(p_t,p_{t+dt})}{dt})^2 + \tilde{\sigma}_t.
	\end{equation}
	The term $\tilde{\sigma}_t$ is nonnegative and vanishes only when the system is driven by an optimal protocol, provided that $m_t$ is fixed.
	Investigating the properties of the contribution $\tilde{\sigma}_t$ would be an interesting direction and may lead to a deep understanding of dissipation in Markov jump processes.
	\item[(3)] \emph{Application to deterministic biochemical reaction networks}.---Although our framework deals with stochastic dynamics, generalizing the formulas to cases of deterministic dynamics such as biochemical reaction networks \cite{Rao.2016.PRX} would be interesting. This is feasible because our results are derived from the master equation, which is similar to the deterministic rate equation characterizing the time evolution of biochemical reaction networks.
	\item[(4)] \emph{Thermodynamic interpretation of the discrete $L^2$-Wasserstein distance}.---Thus far, we have investigated the connection between thermodynamics and optimal transport through the discrete $L^1$-Wasserstein distance. Although we showed that the discrete $L^1$-Wasserstein distance has aspects similar to the continuous $L^2$-Wasserstein distance, it remains an open question whether a thermodynamic interpretation exists for the discrete $L^2$-Wasserstein distance. Clarification of this interpretation is desirable and could lead to new fundamental thermodynamic bounds.
	\item[(5)] \emph{Formulation under constrained control protocols}.---In this study, we thermodynamically interpreted the discrete Wasserstein distance using Markov jump processes whose transition rates can be arbitrarily controlled without any constraint. However, in practice, some constraints may be imposed on the transition rates and protocols \cite{Kolchinsky.2021.PRX,Remlein.2021.PRE,Dechant.2022.JPA,Abiuso.2022.JPC}. Developing analogous formulas for these settings would be highly relevant and broaden the range of applications. The specific form of transition rates given in Eqs.~\eqref{eq:1d.dis.tran.rate1} and \eqref{eq:1d.dis.tran.rate2} also suggests that investigating this direction may reveal the thermodynamic role of the discrete $L^2$-Wasserstein distance.
\end{itemize}

\begin{acknowledgements}
We thank Shin-ichi Sasa and Andreas Dechant for the fruitful discussion. We are also grateful to Amos Maritan for valuable comments.
This work was supported by Grants-in-Aid for Scientific Research (JP19H05603 and JP19H05791).
\end{acknowledgements}

\appendix

\section{Geometric property of continuous $L^2$-Wasserstein distance}\label{app:W2.geo.prop}

Here, we discuss a geometric interpretation of the Wasserstein distance.
Specifically, we show that the $L^2$-Wasserstein distance can be interpreted as a Riemannian distance on the infinite-dimensional manifold $M$ of probability distribution functions.
For each distribution function $p(x)$, the tangent velocity space $\Tan_pM$ at point $p$ can be defined as \cite{Ambrosio.2008}
\begin{equation}\label{eq:tan.vel.spa}
\Tan_pM\coloneqq \qty{w(x)\,\big|\,\nabla\vdot[w(x)p(x)]=0}^\perp.
\end{equation}
Here, $\Omega^\perp$ denotes the orthogonal complement of a subspace $\Omega$.
In other words, $\Tan_pM$ contains all velocity fields $v(x)$ that satisfy
\begin{equation}
\int_{\mbb{R}^d}[v(x)\vdot w(x)]p(x)\dd{x}=0~\forall w~\text{s.t.}~\nabla\vdot(wp)=0.
\end{equation}
The tangent space can be indirectly defined via the tangent velocity space as
\begin{equation}
T_pM\coloneqq \qty{u\,\big|\,\exists v\in\Tan_pM~{\rm s.t.}~u+\nabla\vdot(vp)=0}.
\end{equation}
Consequently, a Riemannian metric $g_p:T_pM\times T_pM\to\mbb{R}$ can be defined on the tangent space as
\begin{equation}
g_p(u_1,u_2)\coloneqq\int_{\mbb{R}^d}[v_1(x)\vdot v_2(x)]p(x)\dd{x},\label{eq:Wass.Riemannian.metric}
\end{equation}
where $v_i$ is the velocity field corresponding to the tangent vector $u_i$ ($i=1,2$).
We can then show that $W_2(p^A,p^B)$ is exactly the geodesic distance between $p^A$ and $p^B$ induced by the defined metric:
\begin{align}
W_2(p^A,p^B)&=\min_{p_t}\qty{\tau\int_0^\tau g_{p_t}(\dot p_t,\dot p_t)\dd{t}}^{1/2}\notag\\
&=\min_{p_t}{\int_0^\tau \sqrt{g_{p_t}(\dot p_t,\dot p_t)}\dd{t}}.\label{eq:Wdiss2.metric}
\end{align}
Here, we consider the fact that the geodesic distance between two points is equal to the minimum square root of the divergence taken over all possible paths connecting those points.

In general, obtaining a closed form for $W_2(p^A,p^B)$ is difficult, except in the case in which $p^A$ and $p^B$ are normal distributions.
Therefore, a lower bound on $W_2$ is often considered.
It has been previously proved that $W_2(p^A,p^B)$ can be bounded from below by the means and covariances of distributions $p^A$ and $p^B$ as \cite{Gelbrich.1990.MN}
\begin{align}
W_2(p^A,p^B)^2&\ge\|\mu_A-\mu_B\|^2\notag\\
&+\tr{\Xi_A+\Xi_B-2\sqrt{\sqrt{\Xi_A}\Xi_B\sqrt{\Xi_A}}},
\end{align}
where $\mu_X$ and $\Xi_X$ are the mean and covariance matrices, respectively, of the probability distribution $p^X$ for $X\in\{A,B\}$.

\section{Useful propositions}

\begin{proposition}\label{prop:arra.ine}
Let $x=[x_1,\dots,x_n]^\top$ and $y=[y_1,\dots,y_n]^\top$ be vectors of real numbers and $\sigma$ be a permutation of $\{1,\dots,n\}$ such that if $x_i>x_j$, then $y_{\sigma(i)}\ge y_{\sigma(j)}$ for any $i$ and $j$.
Then, the following inequality holds:
\begin{equation}\label{eq:arra.ine1}
\sum_{j=1}^n|x_j-y_j|\ge\sum_{j=1}^n|x_j-y_{\sigma(j)}|.
\end{equation}
\end{proposition}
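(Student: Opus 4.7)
The plan is to reduce the inequality to the classical two-element swap lemma for the $L^1$ cost and then apply a bubble-sort argument. First I would establish a scalar inequality saying that the ``crossed'' pairing of two pairs costs at least as much as the ``monotone'' pairing, and then I would show that starting from the identity pairing on the left-hand side of Eq.~\eqref{eq:arra.ine1} and successively swapping out-of-order adjacent pairs monotonically decreases the total cost, eventually reaching the $\sigma$-pairing on the right-hand side.

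The key lemma is that for any real numbers $a\le b$ and $c\le d$, one has $|a-d|+|b-c|\ge|a-c|+|b-d|$. A clean way to see this is to verify the identity
\begin{equation}
|a-d|+|b-c|-|a-c|-|b-d| = 2\max\{0,\,\min(b,d)-\max(a,c)\},
\end{equation}
whose right-hand side is manifestly nonnegative. The identity itself follows from a routine case distinction on the relative order of the four numbers $a,b,c,d$; in each of the six orderings compatible with $a\le b$ and $c\le d$, both sides simplify to the same expression, so the bookkeeping is mechanical.

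Given this lemma, the bubble-sort step proceeds as follows. After relabeling indices so that $x_1\le x_2\le\cdots\le x_n$, if the resulting pairing $(x_j,y_j)$ is already monotone in $y$ then the two sides of Eq.~\eqref{eq:arra.ine1} are equal and there is nothing to show. Otherwise some adjacent pair satisfies $y_i>y_{i+1}$, and applying the lemma with $a=x_i$, $b=x_{i+1}$, $c=y_{i+1}$, $d=y_i$ yields $|x_i-y_i|+|x_{i+1}-y_{i+1}|\ge|x_i-y_{i+1}|+|x_{i+1}-y_i|$. Swapping $y_i$ and $y_{i+1}$ therefore does not increase the total cost and strictly reduces the number of inversions by one. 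Iterating until no inversions remain produces a pairing that is monotone in $y$, which necessarily realizes the value $\sum_j|x_j-y_{\sigma(j)}|$ (this target is unique up to permutations within blocks of equal $x$'s or equal $y$'s, which carry zero cost). Chaining the inequalities across the sequence of swaps then yields Eq.~\eqref{eq:arra.ine1}.

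The argument is entirely elementary, so no step is a genuine obstacle; the only mild care needed is the handling of ties in $x$ or $y$, which is absorbed by the parenthetical remark above.
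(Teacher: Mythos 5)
Your proof is correct and follows essentially the same route as the paper's: both arguments reduce Eq.~\eqref{eq:arra.ine1} to the adjacent two-element swap inequality and handle ties in the same (slightly informal) way at the end. The only differences are presentational---you run the swaps forward as a bubble sort from the identity pairing, whereas the paper argues by contradiction from a cost- and inversion-minimizing permutation $\nu$, and you additionally supply an explicit verification (via the identity $|a-d|+|b-c|-|a-c|-|b-d|=2\max\{0,\min(b,d)-\max(a,c)\}$) of the swap lemma that the paper merely asserts.
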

\begin{proof}
Without loss of generality, we assume $x_1\le x_2 \le \dots \le x_n$.
Let $\nu$ be a permutation of $\{1,\dots,n\}$ such that $\sum_{j=1}^n|x_j-y_{\nu(j)}|$ is minimum among all possible permutations and that the number of inversion pairs (i.e., $i > j$ and $y_{\nu(i)} < y_{\nu(j)}$) is minimum.
Assume that two indices $i$ and $j=i-1$ exist such that $x_i\ge x_j$ and $y_{\nu(i)}< y_{\nu(j)}$.
We then consider a new permutation $\nu'$ obtained from $\nu$ by swapping $\nu(i)$ and $\nu(j)$, that is, $\nu'(i)=\nu(j)$, $\nu'(j)=\nu(i)$, and $\nu'(k)=\nu(k)$ for all $k\neq i,j$. 
In this case, we can easily prove that
\begin{equation}
|x_i-y_{\nu(i)}|+|x_j-y_{\nu(j)}|\ge |x_i-y_{\nu'(i)}|+|x_j-y_{\nu'(j)}|.
\end{equation}
This means that 
\begin{equation}
\sum_{j=1}^n|x_j-y_{\nu(j)}|\ge\sum_{j=1}^n|x_j-y_{\nu'(j)}|,
\end{equation}
and the permutation $\nu'$ has fewer inversion pairs than the permutation $\nu$, which contradicts the optimality of the permutation $\nu$.
Therefore, we have $y_{\nu(i)}\ge y_{\nu(j)}$ for any $x_i>x_j$.
Consequently, the permutations $\sigma$ and $\nu$ satisfy
\begin{equation}
\sum_{j=1}^n|x_j-y_{\sigma(j)}|=\sum_{j=1}^n|x_j-y_{\nu(j)}|,
\end{equation}
from which Eq.~\eqref{eq:arra.ine1} is immediately proved because of the optimality of the permutation $\nu$.
\end{proof}

\begin{proposition}\label{prop:pre.mat}
Let $x=[x_1,\dots,x_n]^\top$ and $y=[y_1,\dots,y_{n'}]^\top$ be vectors of nonnegative numbers. If $\sum_{i=1}^nx_i=\sum_{j=1}^{n'}y_j$, then a matrix $\msf{Z}=[z_{ij}]\in\mbb{R}^{n\times n'}$ exists with nonnegative elements such that
\begin{equation}\label{eq:cond}
\sum_{j=1}^{n'}z_{ij}=x_i~\text{and}~\sum_{i=1}^nz_{ij}=y_j.
\end{equation}
\end{proposition}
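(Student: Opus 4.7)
The plan is to give an explicit construction of $\msf{Z}$ rather than appeal to an existence argument. Let $S \coloneqq \sum_{i=1}^n x_i = \sum_{j=1}^{n'} y_j \ge 0$. The proof splits naturally into two cases depending on whether $S$ vanishes.

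First I would dispose of the degenerate case $S=0$. Since $x_i,y_j\ge 0$ and their sums vanish, every $x_i$ and every $y_j$ must equal zero, so the zero matrix $\msf{Z}=0$ trivially satisfies both marginal conditions in Eq.~\eqref{eq:cond}.

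For the main case $S>0$, I would simply take the product (independent-coupling) construction
\begin{equation}
z_{ij} \coloneqq \frac{x_i\, y_j}{S}.
\end{equation}
Nonnegativity is immediate from $x_i,y_j\ge 0$ and $S>0$. The row-sum condition follows by direct computation, $\sum_{j=1}^{n'} z_{ij} = (x_i/S)\sum_{j=1}^{n'} y_j = x_i$, and the column-sum condition follows symmetrically, $\sum_{i=1}^{n} z_{ij} = (y_j/S)\sum_{i=1}^n x_i = y_j$. This completes the construction.

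There is essentially no main obstacle here; the proposition is elementary and the only subtlety is remembering to handle $S=0$ separately so that the ratio $x_iy_j/S$ is well defined. I would note in passing that many other admissible $\msf{Z}$ exist (for instance, the greedy \emph{north-west corner rule} produces a sparse $\msf{Z}$ with at most $n+n'-1$ nonzero entries), but the product construction is the shortest route to existence, which is all the statement requires.
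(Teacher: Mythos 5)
Your proof is correct, and it takes a genuinely different (and shorter) route than the paper. The paper proves the proposition by induction on $k=n+n'$: it sets $z_{11}=\min\{x_1,y_1\}$, zeroes out the rest of the saturated row (or column), and recurses on the reduced vectors --- this is precisely the greedy north-west-corner construction you mention in passing, and it yields a sparse $\msf{Z}$ with at most $n+n'-1$ nonzero entries. Your product coupling $z_{ij}=x_iy_j/S$ replaces the induction with a one-line closed-form formula, at the cost of having to treat $S=0$ separately (which you do correctly: all entries then vanish). Both establish existence, which is all the proposition is used for in the paper (e.g., to build admissible couplings and transition rates in the proofs of Eq.~\eqref{eq:Wc.tot.var.dist} and Thm.~\ref{thm:qua.dis.Wass.var}); the paper's recursive construction has the incidental advantage of producing a sparse solution supported on a tree, but nothing downstream relies on that, so your argument would serve equally well here.
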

\begin{proof}
We prove by induction on $k=n+n'\ge 2$.
The $k=2$ case is evident since $n=n'=1$ and $x_1=y_1$; therefore, we can choose $z_{11}=x_1$.
Supposing that it holds for all $k\le\bar{k}$, we can consider an arbitrary case with $k=\bar{k}+1$.
Let $v=\min\qty{x_1,y_1}$ and set $z_{11}=v$.
Without loss of generality, we can assume that $v=x_1$. Then, $z_{1i}=0$ for all $i\ge 2$.
Consider two vectors $x'=[x_2,\dots,x_n]^\top$ and $y'=[y_1-x_1,\dots,y_{n'}]^\top$ with $k'=n+n'-1=\bar{k}$.
A matrix $\msf{Z}'=[z_{ij}']\in\mbb{R}^{(n-1)\times n'}$ exists such that 
\begin{equation}
\sum_{j=1}^{n'}z_{ij}'=x_i'~\text{and}~\sum_{i=1}^{n-1}z_{ij}'=y_j'.
\end{equation}
Set $z_{ij}=z_{(i-1)j}'$ for all $i\ge 2$. Then, the matrix $\msf{Z}$ satisfies Eq.~\eqref{eq:cond}.
\end{proof}

\begin{proposition}\label{prop:add.ine0}
For arbitrary real numbers $\{x_i\}$ and $\{y_i\}$ that satisfy $x_iy_i\ge 0$ for all $i$, the following inequality holds:
\begin{equation}\label{eq:add.ine1}
\sum_i\frac{x_i}{y_i}\sum_i x_iy_i\ge\sum_i\frac{x_i}{\coth(y_i/2)}\sum_ix_i\coth(y_i/2).
\end{equation}
\end{proposition}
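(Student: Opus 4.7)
The plan is to reduce the inequality to a pair-wise claim and then to a single-variable monotonicity statement. First, I would observe that each of $1/y$, $y$, $\tanh(y/2)$, and $\coth(y/2)$ is an odd function of $y$, so simultaneously replacing $(x_i,y_i)$ by $(\mrm{sgn}(y_i)\,x_i,\,|y_i|)$ leaves all four sums in the inequality invariant. Combined with the hypothesis $x_iy_i\ge 0$, this reduces the problem to the case $x_i\ge 0$ and $y_i>0$ for every $i$ (indices with $y_i=0$ must be excluded anyway, since $1/y_i$ and $\coth(y_i/2)$ are otherwise undefined).

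Second, I would expand both products as double sums and set $t_i\coloneqq\tanh(y_i/2)$, so that $1/t_i=\coth(y_i/2)$. The diagonal contributions on the two sides both equal $\sum_i x_i^2$ (since $(x_i/y_i)(x_iy_i)=x_i^2=(x_it_i)(x_i/t_i)$) and cancel, leaving
\begin{equation*}
\text{LHS}-\text{RHS}=\sum_{i<j} x_ix_j\,\Big[\,\frac{y_i}{y_j}+\frac{y_j}{y_i}-\frac{t_i}{t_j}-\frac{t_j}{t_i}\,\Big].
\end{equation*}
Since $x_ix_j\ge 0$ after the reduction, the proposition follows once the following pair-wise claim is established: for all $y_i\ge y_j>0$,
\begin{equation*}
\frac{y_i}{y_j}+\frac{y_j}{y_i}\;\ge\;\frac{t_i}{t_j}+\frac{t_j}{t_i}.
\end{equation*}

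Third, I would verify this pair-wise inequality by monotonicity. Since $\tanh$ is strictly increasing on $(0,\infty)$, both $y_i/y_j$ and $t_i/t_j$ lie in $[1,\infty)$; on this half-line the map $r\mapsto r+1/r$ is increasing, so it suffices to show $y_i/y_j\ge t_i/t_j$, equivalently that $g(y)\coloneqq\tanh(y/2)/y$ is non-increasing on $(0,\infty)$. A direct computation gives that the sign of $g'(y)$ equals that of $\tfrac{y}{2}\,\mrm{sech}^2(y/2)-\tanh(y/2)$; multiplying by $2\cosh^2(y/2)>0$ converts this expression into $y-\sinh(y)$, which is strictly negative for $y>0$ by the elementary bound $\sinh(y)>y$. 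Hence $g$ is strictly decreasing, the pair-wise inequality holds, and the proposition follows.

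The argument is largely routine; the only non-obvious step is recognizing that after the sign reduction and pair-wise expansion the entire problem collapses to monotonicity of $\tanh(y/2)/y$. Once that observation is in hand, everything else—convexity of $r+1/r$ on $[1,\infty)$ and the classical inequality $\sinh(y)>y$—is standard.
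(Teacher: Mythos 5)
Your proof is correct and follows essentially the same route as the paper's: expand both products, reduce to the pairwise inequality $y_i/y_j+y_j/y_i\ge \coth(y_i/2)/\coth(y_j/2)+\coth(y_j/2)/\coth(y_i/2)$, and settle that via the monotonicity of $y\coth(y/2)$ (equivalently, of $\tanh(y/2)/y$). The only differences are cosmetic — you normalize signs globally at the outset rather than pairwise via evenness, and you actually verify the final monotonicity claim (via $\sinh y> y$) that the paper merely asserts.
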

\begin{proof}
The inequality \eqref{eq:add.ine1} is equivalent to
\begin{align}
\sum_{i>j}\bigg[ \frac{x_i}{y_i}x_jy_j &+ \frac{x_j}{y_j}x_iy_i - \frac{x_i}{\coth(y_i/2)}x_j\coth(y_j/2)\notag\\
& - \frac{x_j}{\coth(y_j/2)}x_i\coth(y_i/2) \bigg] \ge 0.
\end{align}
It suffices to prove that each term in the above summation is nonnegative, that is,
\begin{equation}\label{eq:add.ine2}
x_ix_j\qty[\frac{y_j}{y_i} + \frac{y_i}{y_j} - \frac{\coth(y_j/2)}{\coth(y_i/2)} - \frac{\coth(y_i/2)}{\coth(y_j/2)}] \ge 0.
\end{equation}
Since $x_ix_jy_iy_j\ge 0$, Eq.~\eqref{eq:add.ine2} is equivalent to
\begin{equation}\label{eq:add.ine3}
y_i^2+y_j^2 - y_iy_j\qty[\frac{\coth(y_j/2)}{\coth(y_i/2)} + \frac{\coth(y_i/2)}{\coth(y_i/2)}] \ge 0.
\end{equation}
Since $y\coth(y)$ and $y/\coth(y)$ are even functions, we can assume that $y_i\ge y_j\ge 0$ without loss of generality.
The inequality \eqref{eq:add.ine3} can be rewritten as
\begin{align}
&\frac{y_i}{y_j}+\frac{y_j}{y_i}-\qty[\frac{\coth(y_j/2)}{\coth(y_i/2)} + \frac{\coth(y_i/2)}{\coth(y_j/2)}]\notag \\
&= \qty[\frac{y_i}{y_j} - \frac{\coth(y_j/2)}{\coth(y_i/2)}]\qty[1 - \qty{\frac{y_i}{y_j}\frac{\coth(y_j/2)}{\coth(y_i/2)}}^{-1}] \ge 0.\label{eq:add.ine4}
\end{align}
Since $\coth(y)$ is a strictly decreasing function over $[0,+\infty)$, we have $y_i\coth(y_j/2)\ge y_j\coth(y_i/2)$. Therefore, Eq.~\eqref{eq:add.ine4} is equivalent to
\begin{equation}\label{eq:add.ine5}
\frac{y_i}{y_j}\ge\frac{\coth(y_j/2)}{\coth(y_i/2)}.
\end{equation}
The inequality \eqref{eq:add.ine5} is always valid since $y\coth(y/2)$ is an increasing function over $[0,+\infty)$.
Therefore, Eq.~\eqref{eq:add.ine1} is proved.

\end{proof}

\section{Derivation of calculations in Sec.~\ref{sec:sto.the.dis.sys}}

\subsection{Property of $m_t$}\label{app:mt.prop}
Here, we show a relevant property of dynamical state mobility in terms of optimizing irreversible entropy production.
\begin{lemma}\label{lem:mobi.ult}
For any Markov jump process $\{p_t;\msf{W}_t\}_{0\le t\le\tau}$ and arbitrary positive constant $\bar{D}$, a Markov process $\{\tilde{p}_t;\tilde{\msf{W}}_t\}_{0\le t\le\tau}$ exists that simultaneously satisfies the following conditions:
\begin{itemize}
\item[(i)] The time evolution of probability distribution is the same (i.e., $\tilde{p}_t=p_t$ for all times).
\item[(ii)] The time-averaged state mobility is equal to $\bar{D}$ (i.e., $\ev{\tilde{ m}}_\tau=\bar{D}$).
\item[(iii)] The associated product of entropy production and dynamical state mobility is smaller than that of the original process (i.e., $\tilde{\Sigma}_\tau\tilde{\mca{M}}_\tau\le \Sigma_\tau\mca{M}_\tau$).
\end{itemize}
\end{lemma}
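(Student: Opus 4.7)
The plan is to build $\{\tilde p_t;\tilde{\msf W}_t\}$ by retaining the probability currents $\{j_{xy}(t)\}$ of the original process while reshaping the forward/backward splittings so that $\tilde m_{xy}(t)\propto|j_{xy}(t)|$ with a time-independent proportionality constant chosen to enforce condition (ii). Because the master equation \eqref{eq:Markov.mas.eq} depends on the rates only through the currents $j_{xy}(t)=w_{xy}(t)p_y(t)-w_{yx}(t)p_x(t)$, keeping $\tilde j_{xy}(t)=j_{xy}(t)$ automatically yields $\tilde p_t=p_t$, securing (i).

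Concretely, with $J(t)\coloneqq\sum_{x>y}|j_{xy}(t)|$, I would set
\begin{equation*}
\tilde m_{xy}(t)=c\,|j_{xy}(t)|,\qquad c\coloneqq\bar D/\ev{J}_\tau .
\end{equation*}
This choice uniquely specifies the activities via $\ln(\tilde a_{xy}/\tilde a_{yx})=\mathrm{sgn}(j_{xy}(t))/c$ together with $\tilde a_{xy}-\tilde a_{yx}=j_{xy}(t)$, so the new rates $\tilde w_{xy}(t)=\tilde a_{xy}(t)/p_y(t)$ are strictly positive and reversible whenever $j_{xy}(t)\neq 0$. Summing over edges gives $\tilde m_t=cJ(t)$, hence $\ev{\tilde m}_\tau=c\ev{J}_\tau=\bar D$, which is (ii). Using the quadratic form \eqref{eq:ent.force}, a direct computation yields $\tilde\sigma_t=\sum_{x>y}\tilde m_{xy}(t)\tilde f_{xy}(t)^2=J(t)/c$, and therefore
\begin{equation*}
\tilde\Sigma_\tau\tilde{\mca M}_\tau=\Bigl(\int_0^\tau J(t)\,\dd{t}\Bigr)^2 .
\end{equation*}

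For (iii), I would apply Cauchy--Schwarz twice to the \emph{original} process: pointwise in the edge sum, $\sigma_t m_t=\bigl(\sum_{x>y}m_{xy}f_{xy}^2\bigr)\bigl(\sum_{x>y}m_{xy}\bigr)\ge\bigl(\sum_{x>y}m_{xy}|f_{xy}|\bigr)^2=J(t)^2$, and then in time, $\Sigma_\tau\mca M_\tau\ge\bigl(\int_0^\tau\sqrt{\sigma_t m_t}\,\dd{t}\bigr)^2\ge\bigl(\int_0^\tau J(t)\,\dd{t}\bigr)^2$, so that $\tilde\Sigma_\tau\tilde{\mca M}_\tau\le\Sigma_\tau\mca M_\tau$. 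Thus all three conditions follow from a single explicit construction.

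The main obstacle I expect is the degenerate case. If $\ev{J}_\tau=0$ then $p^A=p^B$ and the claim is trivial: the original process (possibly rescaled by a symmetric offset that does not alter currents) already satisfies (i)--(iii). More subtly, on edges where $j_{xy}(t)$ vanishes on a set of positive measure, the prescription $\tilde a_{xy}(t)=\tilde a_{yx}(t)=0$ leaves the transition between those states momentarily suspended; to keep $\tilde{\msf W}_t$ microscopically reversible, I would add a vanishingly small symmetric component on this set and take a limit, noting that such a modification contributes nothing to the integrated quantities $\tilde\Sigma_\tau$ and $\tilde{\mca M}_\tau$ because the integrand supports coincide with $\{t:J(t)>0\}$ up to a null set. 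With this caveat, the construction is well-defined and the three properties hold simultaneously.
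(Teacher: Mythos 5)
Your construction is essentially the paper's own proof: both keep the currents $j_{xy}(t)$ (hence $\tilde p_t=p_t$), retune the symmetric part of the activities so that $\tilde m_{xy}(t)=c\,|j_{xy}(t)|$ with $c=\bar D/\ev{J}_\tau$, and then obtain (iii) from $\Sigma_\tau\mca{M}_\tau\ge\bigl(\int_0^\tau\sum_{x>y}|j_{xy}(t)|\dd{t}\bigr)^2$ via Cauchy--Schwarz. The paper parametrizes the modification as $\tilde w_{xy}=w_{xy}+\alpha_{xy}/p_y$ with symmetric $\alpha_{xy}$ rather than solving for the activities directly, and your explicit treatment of edges with $j_{xy}(t)=0$ is a small but welcome refinement; otherwise the arguments coincide.
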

\begin{proof}
We consider another Markov jump process with the transition rate matrix $\tilde{\msf{W}}_t$ defined in the following manner. For each transition rate $w_{xy}(t)>0$, we define
\begin{equation}
\tilde{w}_{xy}(t)=w_{xy}(t)+\frac{\alpha_{xy}(t)}{p_y(t)}\quad (x\neq y),
\end{equation}
where $\alpha_{xy}(t)=\alpha_{yx}(t)$ are real coefficients to be later determined.
We can easily verify that $\tilde{j}_{xy}(t)=j_{xy}(t)$ for all $x\neq y$.
Therefore, given that the initial distribution is the same (i.e., $\tilde{p}_0=p_0$), we immediately obtain $\tilde{p}_t=p_t$ for all $t$, which fulfills condition (i).
In addition, for any $\lambda>0$, $\alpha_{xy}(t)$ always exists such that
\begin{equation}\label{eq:tilde.lmn}
\tilde{m}_{xy}(t)=\frac{\tilde{j}_{xy}(t)}{\ln[{\tilde{w}_{xy}(t)\tilde{p}_y(t)}]-\ln[{\tilde{w}_{yx}(t)\tilde{p}_x(t)}]}=\lambda|j_{xy}(t)|.
\end{equation}
This is because the following quantity can take an arbitrary positive value depending on the manner in which $\alpha_{xy}(t)$ is chosen:
\begin{equation}
\frac{w_{xy}(t)p_y(t)-w_{yx}(t)p_x(t)}{\ln\dfrac{w_{xy}(t)p_y(t)+\alpha_{xy}(t)}{w_{yx}(t)p_x(t)+\alpha_{yx}(t)}}.
\end{equation}
Choosing $\alpha_{xy}(t)$ such that Eq.~\eqref{eq:tilde.lmn} is satisfied, and setting $\lambda=\bar{D}\tau\qty[\int_0^\tau\sum_{x>y}|j_{xy}(t)|\dd{t}]^{-1}$, we can calculate
\begin{align}
\ev{\tilde{m}}_\tau&=\tau^{-1}\int_0^\tau\sum_{x>y}\tilde{m}_{xy}(t)\dd{t}\notag\\
&=\tau^{-1}\lambda\int_0^\tau\sum_{x>y}|j_{xy}(t)|\dd{t}\notag\\
&=\bar{D},
\end{align}
which fulfills condition (ii).
Finally, we prove that condition (iii) is also satisfied.
To this end, we first note that
\begin{align}
\tilde{\sigma}_t&=\sum_{x>y}\tilde{j}_{xy}(t)\ln\frac{\tilde{w}_{xy}(t)\tilde{p}_y(t)}{\tilde{w}_{yx}(t)\tilde{p}_x(t)}\notag\\
&=\lambda^{-1}\sum_{x>y}|j_{xy}(t)|.
\end{align}
Consequently, condition (iii) can be verified as follows:
\begin{align}
\tilde{\Sigma}_\tau\tilde{\mca{M}}_\tau&=\int_0^\tau \tilde{\sigma}_t\dd{t}\int_0^\tau\tilde{m}_t\dd{t}\notag\\
&=\qty( \int_0^\tau\sum_{x>y} |j_{xy}(t)| \dd{t} )^2\notag\\
&\le\qty( \int_0^\tau\sum_{x>y} \sigma_{xy}(t) \dd{t} ) \qty( \int_0^\tau\sum_{x>y}  m_{xy}(t) \dd{t} )\notag\\
&=\Sigma_\tau\mca{M}_\tau.
\end{align}
It is noteworthy that if we choose $\bar{D}=\ev{m}_\tau$, condition (iii) implies $\tilde{\Sigma}_\tau\le\Sigma_\tau$.

\end{proof}

\subsection{Lower bound of dynamical state mobility}
Here, we provide a lower bound of dynamical state mobility in terms of entropy production and dynamical activity.
By performing algebraic calculations, we can show that the kinetic coefficients $\{m_{xy}(t)\}$ can be expressed in terms of entropy production and dynamical activity rates at the transition level as
\begin{equation}
m_{xy}(t)=\frac{\sigma_{xy}(t)}{4}\Phi\qty(\frac{\sigma_{xy}(t)}{2[a_{xy}(t)+a_{yx}(t)]})^{-2},
\end{equation}
where $\Phi(x)$ is the inverse function of $x\tanh(x)$.
Since $x\Phi(x/y)^{-2}$ is a convex function over $(0,+\infty)\times(0,+\infty)$, we can derive a lower bound for $m_t$ as
\begin{equation}\label{eq:lt.lb}
\frac{\sigma_t}{4}\Phi\qty(\frac{\sigma_t}{2a_t})^{-2}\le m_t.
\end{equation}
The inequality \eqref{eq:lt.lb} indicates that $m_t$ can be lower bounded by both the entropy production and dynamical activity rates.
Exploiting the convexity of $x\Phi(x/y)^{-2}$ also yields the following inequality:
\begin{equation}\label{eq:lt.mag.rel}
\frac{\Sigma_\tau}{4}\Phi\qty(\frac{\Sigma_\tau}{2\mca{A}_\tau})^{-2}\le\mca{M}_\tau.
\end{equation}

\subsection{Proof of Eq.~\eqref{eq:new.TUR}}\label{app:proof.TUR}
Through the Cram{\'e}r-Rao inequality \cite{Hasegawa.2019.PRE}, the precision of time-antisymmetric currents can be upper bounded by pseudo entropy production as \cite{Shiraishi.2021.JSP}
\begin{equation}\label{eq:raw.TUR}
\frac{\ev{J}^2}{\Var{J}}\le \Sigma_\tau^{\rm ps},
\end{equation}
where $\Sigma_\tau^{\rm ps}$ denotes pseudo entropy production given by
\begin{equation}
\Sigma_\tau^{\rm ps}=\int_0^\tau\sum_{x>y}\frac{j_{xy}(t)^2}{a_{xy}(t)+a_{yx}(t)}\dd{t}.
\end{equation}
$\Sigma_\tau^{\rm ps}$ is an empirical quantity that quantifies the degree of irreversibility.
Unlike irreversible entropy production $\Sigma_\tau$, which diverges in the presence of unidirectional transitions, pseudo entropy production always remains finite.
However, it cannot be related directly to heat dissipation in thermodynamic processes.
It is noteworthy that the magnitude relation $\Sigma_\tau^{\rm ps}\le\Sigma_\tau/2$ holds for all times.

Next, we prove the following inequality:
\begin{equation}\label{eq:add.ine6}
\Sigma_\tau^{\rm ps}\le \frac{\Sigma_\tau\mca{M}_\tau}{\mca{A}_\tau}.
\end{equation}
Noting that $j_{xy}(t)=a_{xy}(t)-a_{yx}(t)$ and $f_{xy}(t)=\ln[a_{xy}(t)/a_{yx}(t)]$, we can show that
\begin{align}
a_{xy}(t)+a_{yx}(t)&=j_{xy}(t)\frac{e^{f_{xy}(t)}+1}{e^{f_{xy}(t)}-1}\notag\\
&=j_{xy}(t)\coth[f_{xy}(t)/2].
\end{align}
Applying Prop.~\ref{prop:add.ine0}, we can prove the inequality \eqref{eq:add.ine6} as follows:
\begin{align}
\Sigma_\tau^{\rm ps}\mca{A}_\tau &=\int_0^\tau\sum_{x>y}\frac{j_{xy}(t)^2}{a_{xy}(t)+a_{yx}(t)}\dd{t}\notag\\
&\times\int_0^\tau\sum_{x>y}[a_{xy}(t)+a_{yx}(t)]\dd{t}\notag\\
&=\int_0^\tau\sum_{x>y}\frac{j_{xy}(t)}{\coth[f_{xy}(t)/2]}\dd{t}\notag\\
&\times\int_0^\tau\sum_{x>y}j_{xy}(t)\coth[f_{xy}(t)/2]\dd{t}\notag\\
&\le \int_0^\tau\sum_{x>y}\frac{j_{xy}(t)}{f_{xy}(t)}\dd{t}\int_0^\tau\sum_{x>y}j_{xy}(t)f_{xy}(t)\dd{t}\notag\\
&=\Sigma_\tau\mca{M}_\tau.
\end{align}
Combining Eqs.~\eqref{eq:raw.TUR} and \eqref{eq:add.ine6} gives the following thermodynamic uncertainty relation:
\begin{equation}
\frac{\ev{J}^2}{\Var{J}}\le\frac{\Sigma_\tau\mca{M}_\tau}{\mca{A}_\tau}=\eta\frac{\Sigma_\tau}{2}.
\end{equation}

\subsection{Additional illustration of Eq.~\eqref{eq:new.TUR}}
Here, we numerically demonstrate the improved thermodynamic uncertainty relation in a thermoelectric device \cite{Rutten.2009.PRB}.
A thermoelectric device is an engine that transports electrons from a low- to a high-potential lead through a two-level quantum dot [see Fig.~\ref{fig:TUREx2}(a)].
Each energy level $\varepsilon_i$ of the quantum dot is coupled to a lead with chemical potential $\mu_i~(\mu_2>\mu_1)$ and temperature $T_c$.
Electrons enter and exit the quantum dot due to interactions with the leads.
Because of the Coulomb repulsion between electrons, we can assume that at most one electron always exists in the quantum dot.
The transitions between the two levels of the quantum dot are mediated by two heat baths, namely, cold and hot baths at temperatures $T_c$ and $T_h~(>T_c)$, respectively.
From a thermodynamic perspective, the device can be considered a heat engine that converts some of the heat absorbed from the hot heat bath into work in the form of transporting electrons from a low to a high potential.
\begin{figure}[!]
\centering
\includegraphics[width=1.0\linewidth]{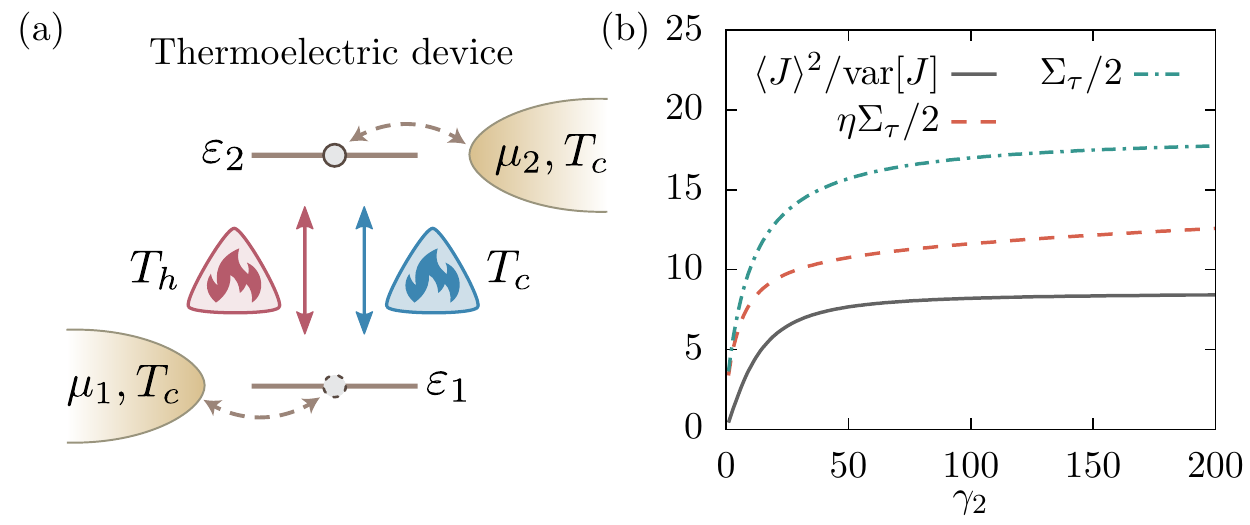}
\protect\caption{Numerical illustration of the thermodynamic uncertainty relations. (a) Schematic of the thermoelectric engine that transports electrons from the left to the right lead through the two-level quantum dot, and (b) numerical verification. The current precision $\ev{J}^2/\Var{J}$, new bound $\eta\Sigma_\tau/2$, and conventional bound $\Sigma_\tau/2$ are depicted by the solid, dashed, and dash-dotted lines, respectively. $\gamma_2$ is varied, whereas other parameters are fixed as $\beta_c=10$, $\beta_h=0.1$, $\gamma_1=10$, $\gamma_c=\gamma_h=1$, $\varepsilon_1=0$, $\varepsilon_2=1$, $\mu_1=0.4$, $\mu_2=0.6$, and $\tau=1$.}\label{fig:TUREx2}
\end{figure}

The thermoelectric device can be described by a Markov jump process with three states; that is, the quantum dot is either 1) an empty (state $0$), 2) contains one electron in energy level $\varepsilon_1$ (state $1$), or 3) contains one electron in energy level $\varepsilon_2~(>\varepsilon_1)$ (state $2$).
Electrons are exchanged with the leads at the following rates:
\begin{align}
w_{i0} = \gamma_i/(1+e^{x_i}),~ w_{0i} = \gamma_ie^{x_i}/(1+e^{x_i}),
\end{align}
where $\gamma_i>0$ denotes the coupling strength to lead $i$ and $x_i:=(\varepsilon_i-\mu_i)/T_c$.
The transition rates between the two energy levels of the quantum dot are given by
\begin{align}
w_{12}=w_{12}^c+w_{12}^h,~w_{21}=w_{21}^c+w_{21}^h,
\end{align}
where $w_{21}^a=\gamma_a/(e^{x_a}-1)$, $w_{12}^a=\gamma_ae^{x_a}/(e^{x_a}-1)$, $x_a:=(\varepsilon_2-\varepsilon_1)/T_a$ for $a\in\{c,h\}$, and $\gamma_c$ and $\gamma_h$ denote the coupling strengths to the heat baths.
Here, the symbols $c$ and $h$ correspond to the cold and hot heat baths, respectively.

We consider the thermoelectric device operating in a stationary state.
The current of interest is the net number of electrons transported between the leads.
The stochastic current can be defined by setting $\Upsilon_{10}=1=-\Upsilon_{01}$ and $\Upsilon_{xy}=0$ for others.
The precision of the current over a finite period $\tau$ can be numerically calculated using full counting statistics.

We vary $\gamma_2\in(0,200]$ while fixing the remaining parameters.
For each parameter setting, we calculate the precision of the electron current and the bounds of the conventional and new relations.
As Fig.~\ref{fig:TUREx2}(b) shows, the new bound is always tighter than the conventional bound and more effectively predicts the current precision.

\section{Derivation of calculations in Sec.~\ref{sec:opt.tran.dis.sys}}

\subsection{Proof of Eq.~\eqref{eq:Wc.tot.var.dist}}\label{app:Wass.tot.var.equiv}
Here, we prove that $\mca{W}_1(p,q)=\mca{T}(p,q)$ in the case of $d_{xy}=1-\delta_{xy}$.
First, we prove that $\mca{W}_1(p,q)\ge \mca{T}(p,q)$.
Let $S_+=\qty{x\,|\,p_x\ge q_x}$ and $S_-=\qty{x\,|\,p_x< q_x}$. Evidently, $S_+\cup S_-=\qty{1,2,\dots,N}$.
Moreover, since $\sum_xp_x=\sum_xq_x=1$, we have
\begin{equation}
\sum_{x\in S_+}(p_x-q_x)=\sum_{x\in S_-}(q_x-p_x).
\end{equation}
Consequently, $\sum_x|p_x-q_x|=2\sum_{x\in S_-}(q_x-p_x)$.
Exploiting the positivity of $d_{xy}$ and $\pi_{xy}$, we can bound $\mca{W}_1$ from below as follows:
\begin{align}
\mca{W}_1(p,q)&=\min_{\pi\in\Pi(p,q)}\sum_{x,y}d_{xy}\pi_{xy}\notag\\
&\ge\min_{\pi\in\Pi(p,q)}\sum_{x\in S_-}\sum_{y}d_{xy}\pi_{xy}\notag\\
&\ge\min_{\pi\in\Pi(p,q)}\sum_{x\in S_-}\sum_{y}d_{xy}(\pi_{xy}-\pi_{yx})\notag\\
&=\min_{\pi\in\Pi(p,q)}\sum_{x\in S_-}\sum_{y}(\pi_{xy}-\pi_{yx})\notag\\
&=\min_{\pi\in\Pi(p,q)}\sum_{x\in S_-}(q_x-p_x)\notag\\
&=\frac{1}{2}\sum_x|p_x-q_x|\notag\\
&=\mca{T}(p,q).\label{eq:tv.w1.1}
\end{align}
Next, we show that this inequality can be attained with a specific coupling.
Since $\sum_{x\in S_+}(p_x-q_x)=\sum_{x\in S_-}(q_x-p_x)$, according to Prop.~\ref{prop:pre.mat}, nonnegative coefficients $\{z_{xy}\}$ defined over $S_-\times S_+$ always exist such that
\begin{align}
\sum_{y\in S_+}z_{xy}&=q_x-p_x,~\forall x\in S_-,\\
\sum_{y\in S_-}z_{yx}&=p_x-q_x,~\forall x\in S_+.
\end{align}
We now construct a coupling $\pi=[\pi_{xy}]$ as follows:
\begin{align}
\pi_{xx}&=p_x,~\forall x\in S_-,\\
\pi_{xx}&=q_x,~\forall x\in S_+,\\
\pi_{xy}&=0,~\forall x\in S_+~\text{and}~y\neq x,\\
\pi_{xy}&=0,~\forall y\in S_-~\text{and}~x\neq y,\\
\pi_{xy}&=z_{xy},~\text{otherwise}.
\end{align}
We can verify that $\pi\in\Pi(p,q)$ and $\sum_{x,y}d_{xy}\pi_{xy}=\mca{T}(p,q)$.
From the definition of the Wasserstein distance, we have
\begin{equation}\label{eq:tv.w1.2}
\mca{W}_1(p,q)\le \sum_{x,y}d_{xy}\pi_{xy}=\mca{T}(p,q).
\end{equation}
Combining Eqs.~\eqref{eq:tv.w1.1} and \eqref{eq:tv.w1.2} yields $\mca{W}_1(p,q)=\mca{T}(p,q)$.

\subsection{Proof of Thm.~\ref{thm:cla.dis.Wass.var}}\label{app:proof.thm1}
Here, we prove Thm.~\ref{thm:cla.dis.Wass.var}, which can be restated as
\begin{equation}
\mca{W}_1(p^A,p^B)=\min_{\msf{W}_t}{\int_0^\tau\sqrt{\sigma_t m_t}\dd{t}}=\min_{\msf{W}_t}\sqrt{\Sigma_\tau\mca{M}_\tau}.
\end{equation}
To this end, we prove that $\text{RHS}\ge\text{LHS}$ and $\text{RHS}\le\text{LHS}$. First, we prove the former.
According to the Cauchy--Schwarz inequality, we have
\begin{align}
\sqrt{\Sigma_\tau\mca{M}_\tau}&\ge\int_0^\tau\sqrt{\sigma_t m_t}\dd{t},\label{eq:thm1.tmp1} \\
\sqrt{\sigma_t m_t}&=\qty(\sum_{x>y} m_{xy}(t)f_{xy}(t)^2\sum_{x>y} m_{xy}(t))^{1/2}\notag \\
&\ge \sum_{x>y}\sqrt{ m_{xy}(t)f_{xy}(t)^2}\sqrt{ m_{xy}(t)}\notag \\
&=\sum_{x>y}|j_{xy}(t)|.\label{eq:thm1.tmp2}
\end{align}
We then need only prove that
\begin{equation}\label{eq:app.thm1.tmp3}
\int_0^\tau\sum_{x>y}|j_{xy}(t)|\dd{t}\ge\mca{W}_1(p^A,p^B).
\end{equation}
For this purpose, we map the optimal transport problem to a minimum cost flow problem.
Let $\mca{G}(V,E)$ be the topology of Markov jump processes, from which the Wasserstein distance is defined.
We consider a directed graph of $N+2$ vertices: source vertex, target vertex, and $N$ intermediate vertices $\{1,\dots,N\}$ (see Fig.~\ref{fig:WassFlow} for illustration).
Each edge $e$ of the graph is associated with a cost $c(e)\ge 0$ and capacity $a(e)>0$ (i.e., the maximum flow that can be sent along this edge).
The cost of sending a flow $f$ along an edge $e$ is thus $f\times c(e)$.
The set of directed edges is as
\begin{align}
\text{source}\to x&: (c=0,~a=p_x^A),\\
x\to\text{target}&: (c=0,~a=p_x^B),\\
x\leftrightarrow y&: (c=1,~a=+\infty)~\text{if}~(x,y)\in E.
\end{align}
Consider a case in which an amount of flow $1$ is sent from the source vertex to the target vertex.
We can then prove that the minimum cost $\mca{C}$ of this flow problem is exactly the discrete Wasserstein distance.
To this end, we first show that $\mca{C}\ge\mca{W}_1(p^A,p^B)$.
Assume that $\mca{C}$ is attained by effectively sending a flow $\pi_{xy}$ from $\text{source}\to y\to x\to \text{target}$ for each $x$ and $y$. Since the shortest-path distance from $y$ to $x$ is $d_{xy}$, the total cost must be greater than or equal to $\sum_{x,y}d_{xy}\pi_{xy}$.
Notice that $\{\pi_{xy}\}$ is a valid coupling. Therefore, we obtain $\mca{C}\ge\mca{W}_1(p^A,p^B)$ from the definition of the Wasserstein distance.
We now need only prove the reverse statement $\mca{C}\le\mca{W}_1(p^A,p^B)$.
Assume that $\mca{W}_1(p^A,p^B)$ is achieved by an optimal transport plan $\pi^*=[\pi_{xy}^*]$ [i.e, for any pair $(x,y)$, we move a probability $\pi_{xy}^*$ from state $y$ to state $x$ with the cost of $d_{xy}$ per unit probability].
For each $x$ and $y$, let $P=[v_1,\dots,v_k]$ be the shortest path of length $d_{xy}$ that connects $y$ to $x$; that is, $y=v_1$, $x=v_k$, $k-1=d_{xy}$, and $(v_i,v_{i+1})\in E$ for all $1\le i<k$.
We can then send an amount of flow $\pi_{xy}^*$ along the path $(\text{source}\to v_1\to \dots \to v_k\to\text{target})$.
The total flow cost is exactly $\sum_{x,y}\pi_{xy}^*d_{xy}=\mca{W}_1(p^A,p^B)$; thus, $\mca{C}\le\mca{W}_1(p^A,p^B)$.
Consequently, we arrive at the equality $\mca{C}=\mca{W}_1(p^A,p^B)$.
\begin{figure}[t]
\centering
\includegraphics[width=1\linewidth]{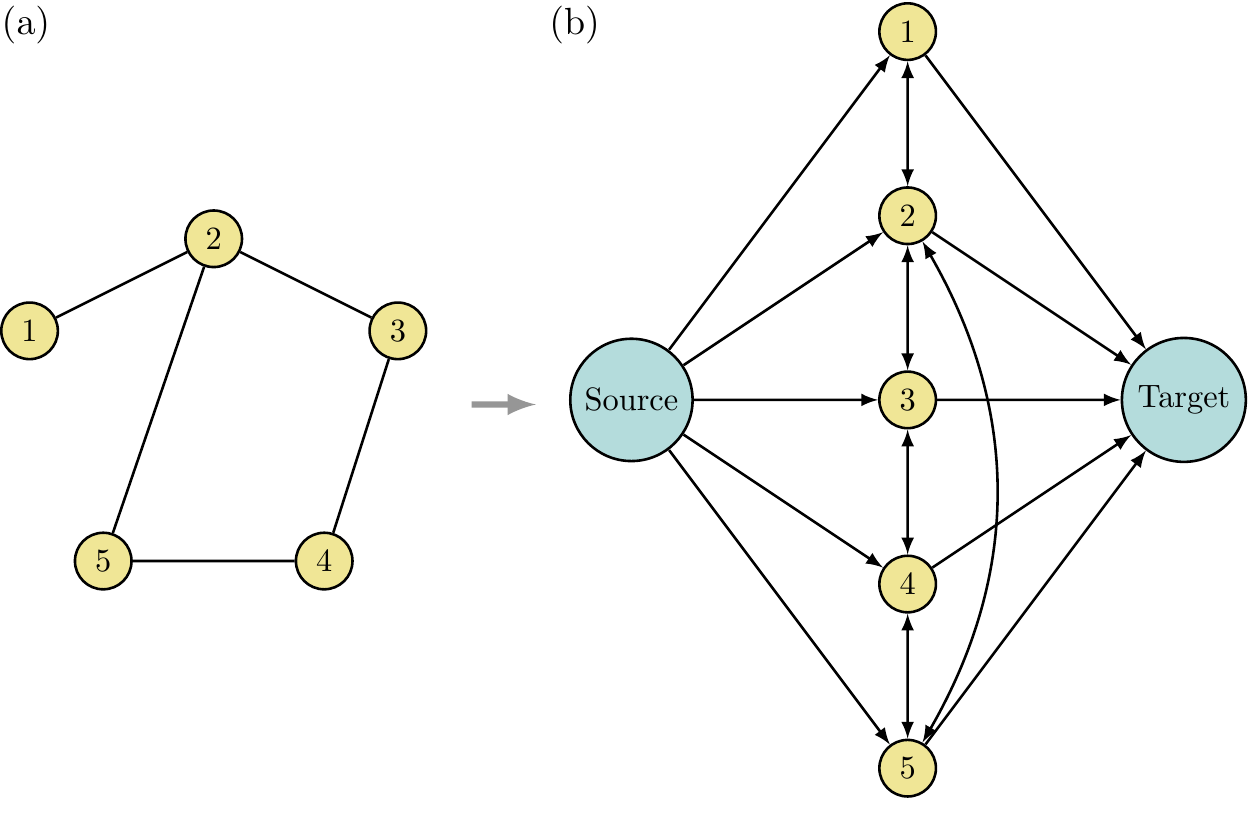}
\protect\caption{Mapping the Wasserstein distance defined based on topology to a minimum cost flow problem. (a) Topology $\mca{G}(V,E)$ with $5$ vertices and $5$ edges, from which the Wasserstein distance is defined. (b) Directed graph of the minimum cost flow problem, which is constructed using the given topology $\mca{G}$. The minimum cost of the flow problem is equal to the Wasserstein distance.}\label{fig:WassFlow}
\end{figure}

We next show that the Markov jump process gives an admissible solution of the minimum cost flow problem with the cost $\int_0^\tau\sum_{x>y}|j_{xy}(t)|\dd{t}$. Consider discretization of the master equation with the time interval $\delta t$, where $\tau=K\delta t$. For each $k=0,\dots,K-1$, we have
\begin{equation}
p_x((k+1)\delta t)=p_x(k\delta t)+\sum_{y(\neq x)}j_{xy}(k\delta t)\delta t.
\end{equation}
This means that we send an amount of flow $|j_{xy}(k\delta t)|\delta t$ from $y$ to $x$ if $j_{xy}(k\delta t)\ge 0$ and from $x$ to $y$ if $j_{xy}(k\delta t)<0$.
Since $j_{xy}(k\delta t)\neq 0$ only if $x$ and $y$ are directly connected by an edge, the cost of each transport is $|j_{xy}(k\delta t)|\delta t$.
Therefore, the total flow cost associated with the Markov jump process is
\begin{equation}
\sum_{k=0}^{K-1}\sum_{x>y}|j_{xy}(k\delta t)|\delta t\xrightarrow{\delta t\to 0} \int_0^\tau\sum_{x>y}|j_{xy}(t)|\dd{t}.
\end{equation}
Since the Markov jump process realizes an admissible manner of sending flow from $p^A$ to $p^B$, we obtain
\begin{equation}\label{eq:thm1.tmp3}
\int_0^\tau\sum_{x>y}|j_{xy}(t)|\dd{t}\ge\mca{C}=\mca{W}_1(p^A,p^B),
\end{equation}
which verifies Eq.~\eqref{eq:app.thm1.tmp3}.
Consequently, combining Eqs.~\eqref{eq:thm1.tmp1}, \eqref{eq:thm1.tmp2}, and \eqref{eq:app.thm1.tmp3} yields
\begin{align}
\text{RHS}&\ge\min_{\msf{W}_t}\qty{\int_0^\tau\sum_{x>y}|j_{xy}(t)|\dd{t}}\notag\\
&\ge\mca{W}_1(p^A,p^B)={\rm LHS}.
\end{align}

We next prove that ${\rm LHS}\ge {\rm RHS}$ by showing that the optimal cost $\mca{W}_1(p^A,p^B)$ can be achieved with a specific Markov jump process, the underlying graph of which is a subgraph of $\mca{G}(V,E)$ for all times.
Note that the optimal transport plan can be represented as a sequence of transportation between neighboring states.
Let $[(x_1,y_1,\chi_1),\dots,(x_I,y_I,\chi_I)]$ denote the optimal transport plan; that is, at each step $1\le i\le I$, we move a probability $\chi_i$ from state $x_i$ to $y_i~(\neq x_i)$. It is ensured that at each step the probability of state $x_i$ is always greater than or equal to $\chi_i$.
Since $x_i$ and $y_i$ are neighboring states, the total transport cost is $\sum_{i=1}^I\chi_i$. Thus,
\begin{equation}
\sum_{i=1}^I\chi_i=\mca{W}_1(p^A,p^B).
\end{equation}
We now construct a Markov jump process of time period $\tau$ such that for each $1\le i\le I$, a probability $\chi_i$ is moved from $x_i$ to $y_i$ after time $t=i\Delta$, where $\Delta\coloneqq\tau/I$.
Specifically, we construct transition rates such that the probability distribution evolves as follows:
\begin{align}
p_{x_i}((i-1)\Delta+s)&=p_{x_i}((i-1)\Delta)-\frac{s}{\Delta}\chi_i,\\
p_{y_i}((i-1)\Delta+s)&=p_{y_i}((i-1)\Delta)+\frac{s}{\Delta}\chi_i,\\
p_{x}((i-1)\Delta+s)&=p_{x}((i-1)\Delta),~\forall x\neq x_i,y_i.
\end{align}
Here, $0\le s\le \Delta$ is a time parameter.
This time evolution of the probability distribution is effectively a two-level system, which can be realized using the following transition rates:
\begin{align}
w_{y_ix_i}(t)&=\frac{1}{\Delta(1-e^{-\phi})}\frac{\chi_i}{ p_{x_i}(t)},\\
w_{x_iy_i}(t)&=\frac{e^{-\phi}}{\Delta(1-e^{-\phi})}\frac{\chi_i}{ p_{y_i}(t)},\\
w_{xy}(t)&=0,~\text{otherwise}.
\end{align}
Here, $\phi>0$ is an arbitrary constant.
During the time interval $[(i-1)\Delta,i\Delta]$, the underlying graph of this process has only one edge that connects vertices $x_i$ and $y_i$. Thus, it is always a subgraph of $\mca{G}$.
Using these transition rates, we can verify that
\begin{equation}
\dot{p}_x(t)=\sum_{y(\neq x)}j_{xy}(t),~\forall x.
\end{equation}
Moreover,
\begin{align}
\sigma_t&=\sum_{x>y}j_{xy}(t)\ln\frac{a_{xy}(t)}{a_{yx}(t)}={\phi}\sum_{x>y}|j_{xy}(t)|,\\
 m_t&=\sum_{x>y}\frac{j_{xy}(t)}{\ln[a_{xy}(t)/a_{yx}(t)]}=\frac{1}{\phi}\sum_{x>y}|j_{xy}(t)|.
\end{align}
In addition, note that 
\begin{equation}\label{eq:jint.i}
\sum_{x>y}|j_{xy}(t)|=\frac{\chi_i}{\Delta} \Rightarrow \int_{(i-1)\Delta}^{i\Delta}\sum_{x>y}|j_{xy}(t)|\dd{t}=\chi_i.
\end{equation}
By summing both sides of Eq.~\eqref{eq:jint.i} for all $i=1,\dots,I$, we obtain
\begin{equation}
\int_{0}^{\tau}\sum_{x>y}|j_{xy}(t)|\dd{t}=\sum_{i=1}^I\chi_i=\mca{W}_1(p^A,p^B).
\end{equation}
Consequently, we have
\begin{align}
\Sigma_\tau\mca{M}_\tau&=\qty(\int_0^\tau\sum_{x>y}|j_{xy}(t)|\dd{t})^2\notag\\
&=\mca{W}_1(p^A,p^B)^2,
\end{align}
which completes the proof.

\subsection{Equality in Thm.~\ref{thm:cla.dis.Wass.var} can be achieved with global detailed-balance systems}\label{app:min.gdbc}
Based on the previous construction of the dynamics that attains the equality in Thm.~\ref{thm:cla.dis.Wass.var}, we can further prove that the equality can be attained with global detailed-balance dynamics.
Here, we prove this fact using a different approach.

Minimizing the integral term in Thm.~\ref{thm:cla.dis.Wass.var} is equivalent to minimizing the cost function $\sigma_tm_t$ at each instance of time $t$.
Consider the Lagrangian function
\begin{equation}
L(\msf{W}_t,\lambda)=\sigma_t m_t+{\lambda^\top}({\dot p_t}-\msf{W}_t{p_t}),
\end{equation}
where $\{p_t\}_{0\le t\le\tau}$ is the probability distribution of a dynamics that attains the equality of Thm.~\ref{thm:cla.dis.Wass.var}.
For simplicity, the time notation $t$ is omitted hereafter.
Taking the derivative of $L$ with respect to $w_{xy}$, we have
\begin{align}
\frac{\partial L}{\partial w_{xy}}&=p_y\qty(f_{xy}+1-e^{-f_{xy}}) m+\frac{p_y\qty(f_{xy}-1+e^{-f_{xy}})}{f_{xy}^2}\sigma\notag\\
&+p_y(\lambda_y-\lambda_x)=0.
\end{align}
Recall that $f_{xy}=\ln(a_{xy}/a_{yx})$.
If $p_y=0$, then $w_{xy}$ can be arbitrarily determined. Therefore, we need only consider the nontrivial case $p_y\neq 0$.
This leads to
\begin{equation}\label{eq:par.deri.1}
\qty(f_{xy}+1-e^{-f_{xy}}) m+\frac{\qty(f_{xy}-1+e^{-f_{xy}})}{f_{xy}^2}\sigma+\lambda_y-\lambda_x=0.
\end{equation}
Likewise, taking the derivative of $L$ with respect to $w_{yx}$ yields
\begin{equation}\label{eq:par.deri.2}
\qty(f_{yx}+1-e^{-f_{yx}}) m+\frac{\qty(f_{yx}-1+e^{-f_{yx}})}{f_{yx}^2}\sigma+\lambda_x-\lambda_y=0.
\end{equation}
Notice that $f_{xy}=-f_{yx}$.
Adding Eqs.~\eqref{eq:par.deri.1} and \eqref{eq:par.deri.2} side by side, we obtain
\begin{equation}
\qty(e^{f_{xy}}+e^{-f_{xy}}-2)\qty(\frac{\sigma}{f_{xy}^2}- m)=0,
\end{equation}
which gives the solution $f_{xy}=0$ or $f_{xy}^2=\sigma/ m$.
Note that $f_{xy}=0$ is equivalent to $j_{xy}=0$, which implies that the transition between $x$ and $y$ does not contribute to the time evolution of the probability distribution. Therefore, such transitions need not be considered and can be eliminated by simply setting $w_{xy}=w_{yx}=0$.
Otherwise, if $f_{xy}^2=\sigma/ m$, then Eq.~\eqref{eq:par.deri.1} becomes
\begin{equation}
2f_{xy} m+\lambda_y-\lambda_x=0,
\end{equation}
or equivalently,
\begin{equation}
\ln\frac{w_{xy}}{w_{yx}}=\frac{\lambda_x-\lambda_y}{2 m}+\ln p_x-\ln p_y.
\end{equation}
By defining an instantaneous energy $\beta\varepsilon_x\coloneqq-\lambda_x/(2 m)-\ln p_x$, we can verify that the transition rates satisfy the global detailed balance condition:
\begin{equation}
\ln\frac{w_{xy}}{w_{yx}}=\beta(\varepsilon_y-\varepsilon_x).
\end{equation}

\subsection{Minimum entropy production can be achieved with global detailed-balance systems}\label{app:min.ent.prod.cons.force}
Here we show that given the time-averaged state mobility (i.e., $\ev{m}_\tau=\bar{D}$), there always exists a system that satisfies the global detailed balance and achieves the minimum entropy production:
\begin{equation}\label{eq:min.ent.prod.Dbar}
	\min_{\ev{m}_\tau=\bar{D}}\Sigma_\tau=\frac{\mca{W}_1(p_0,p_\tau)^2}{\bar{D}\tau}.
\end{equation}
According to the equality of Thm.~\ref{thm:cla.dis.Wass.var} and Lem.~\ref{lem:mobi.ult}, there exists a dynamics that satisfies Eq.~\eqref{eq:min.ent.prod.Dbar} with time-dependent probability distributions $\{p_t\}_{0\le t\le\tau}$.
We consider the following minimization problem:
\begin{equation}\label{eq:min.ent.prob}
	\min_{}\int_0^\tau\sigma_t\dd{t},
\end{equation}
given that $\dot{p}_t=\msf{W}_tp_t$ and $\ev{m}_\tau=\bar{D}$.
Notice that the minimum value for this problem is exactly $\mca{W}_1(p_0,p_\tau)^2/(\bar{D}\tau)$.
Consider the Lagrangian function 
\begin{align}
L(\msf{W}_t,\lambda_t,\kappa)&=\int_0^\tau\sigma_t\dd{t}+\int_0^\tau{\lambda_t^\top}({\dot p_t}-\msf{W}_t{p_t})\dd{t}\notag\\
&+\kappa\qty(\int_0^\tau m_t\dd{t}-\bar{D}\tau).
\end{align}
For simplicity, the time notation $t$ is omitted hereafter.
Taking the derivative of $L$ with respect to $w_{xy}$, we have
\begin{align}
\frac{\partial L}{\partial w_{xy}}&=p_y\qty(f_{xy}+1-e^{-f_{xy}})+\frac{p_y\qty(f_{xy}-1+e^{-f_{xy}})}{f_{xy}^2}\kappa\notag\\
&+p_y(\lambda_y-\lambda_x)=0.
\end{align}
Following the same procedure as in Sec.~\ref{app:min.gdbc}, we obtain the following relation:
\begin{equation}
	2f_{xy}+\lambda_y-\lambda_x=0,
\end{equation}
or equivalently,
\begin{equation}
\ln\frac{w_{xy}}{w_{yx}}=\frac{\lambda_x-\lambda_y}{2}+\ln p_x-\ln p_y.
\end{equation}
By defining an instantaneous energy $\beta\varepsilon_x\coloneqq-\lambda_x/2-\ln p_x$, the transition rates satisfy the global detailed balance condition:
\begin{equation}
\ln\frac{w_{xy}}{w_{yx}}=\beta(\varepsilon_y-\varepsilon_x).
\end{equation}
This means that the minimum entropy production \eqref{eq:min.ent.prob} can be achieved with conservative forces.

\subsection{Particular topologies}\label{app:com.topo}
\subsubsection{Ring topology}
Here, we consider a ring topology in which vertices $x$ and $x+1$ are connected for all $x$, where $N+1\equiv 1$.
This topology can be seen in a one-dimensional asymmetric simple exclusion process on a ring of $N$ sites and corresponds to a continuous-variable situation in which a single particle is driven in a periodic potential.
For each integer number $x$, given $x=kN+r$, where $0\le r\le N-1$ is the remainder, we define $[x]_N\coloneqq r$. 
Then, the shortest-path distance between states $x$ and $y$ can be calculated as
\begin{equation}
d_{xy}=\min\qty{[x-y]_N,N-[x-y]_N}.
\end{equation}
In this case, the discrete Wasserstein distance can be written as
\begin{equation}\label{eq:Wc.ring.var.form}
\mca{W}_1( p^A,p^B ) = \min_{\msf{W}_t}{\int_0^\tau\sum_{x=1}^{N}|j_{x+1,x}(t)|\dd{t}}.
\end{equation}
We now consider the continuous case in which the particle is driven in a ring with a diameter $L=N\Delta x$.
Taking the continuous limit of Eq.~\eqref{eq:Wc.ring.var.form}, namely, $N\to\infty$ and $\Delta x\to 0$, we obtain the following relation:
\begin{align}
&\min_{\pi}{\iint\min\qty{ |x-y|, L-|x-y| }\pi(x,y)\dd{x}\dd{y}}\notag\\
&=\min_{j_t}{\int_0^\tau \int_0^L|j_t(x)|\dd{x}\dd{t}},\label{eq:Wc.ring.var.form.1}
\end{align}
where $j_t(x)$ is subject to the continuity equation $\dot{p}_t(x)=-\partial_xj_t(x)$.
The term on the left-hand side of Eq.~\eqref{eq:Wc.ring.var.form.1} is exactly the $L^1$-Wasserstein distance between probability distributions defined periodically over $[0,L]$ with the cost function
\begin{equation}
c(x,y)=\min\qty{ |x-y|,L-|x-y| }.
\end{equation}
Equation \eqref{eq:Wc.ring.var.form.1} thus provides a variational formula for the periodic $L^1$-Wasserstein distance.

\subsubsection{Fully connected topology}
Another topology is the fully connected topology; that is, for an arbitrary pair of two vertices, an edge always exists that connects them.
In this case, the shortest-path distances become
\begin{equation}
d_{xy}=1-\delta_{xy},
\end{equation}
and the discrete Wasserstein distance equals the total variation distance.
Theorem \ref{thm:cla.dis.Wass.var} thus implies the following equality:
\begin{equation}\label{eq:tot.var.dist.var.form}
\mca{T}(p^A,p^B)=\min_{\msf{W}_t}{\int_0^\tau\sqrt{\sigma_t m_t}\dd{t}}=\min_{\msf{W}_t}\sqrt{\Sigma_\tau\mca{M}_\tau}.
\end{equation}
Here, the minimum is taken over all possible transition rate matrices; that is, the transition rate between any two states can be arbitrarily controlled.
Although the total variation distance is widely used in previous studies, its connection with thermodynamics has thus far been veiled.
Equation \eqref{eq:tot.var.dist.var.form} reveals a thermodynamic interpretation of this distance, showing that it equals the minimum product of the thermodynamic and kinetic costs given the full control of the transition rates.

\subsection{Alternative variational expressions of the discrete Wasserstein distance}\label{app:alt.Wc.exp}
\begin{corollary}\label{cor:cla.dis.Wass.var1}
The discrete Wasserstein distance can be expressed in terms of irreversible entropy production and dynamical activity as
\begin{align}
\mca{W}_1( p^A,p^B )&=\min_{\msf{W}_t}{\int_0^\tau\frac{\sigma_t}{2}\Phi\qty(\frac{\sigma_t}{2a_t})^{-1}\dd{t}}\label{eq:Wc.var.form3}\\
&=\min_{\msf{W}_t}\frac{\Sigma_\tau}{2}\Phi\qty(\frac{\Sigma_\tau}{2\mca{A}_\tau})^{-1}.\label{eq:Wc.var.form4}
\end{align}
\end{corollary}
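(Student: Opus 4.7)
The plan is to mirror the two-step strategy used for Thm.~\ref{thm:cla.dis.Wass.var}: first I would establish the chain of inequalities
\begin{equation}
	\mca{W}_1(p^A,p^B)\le\int_0^\tau\frac{\sigma_t}{2}\Phi\qty(\frac{\sigma_t}{2a_t})^{-1}\dd{t}\le\frac{\Sigma_\tau}{2}\Phi\qty(\frac{\Sigma_\tau}{2\mca{A}_\tau})^{-1}
\end{equation}
for every admissible Markov jump process whose underlying graph is a subgraph of $\mca{G}(V,E)$, and then exhibit a specific dynamics that saturates both inequalities simultaneously.

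Setting $\alpha=a_{xy}(t)$, $\beta=a_{yx}(t)$, and $f=\ln(\alpha/\beta)$ for each ordered pair $x>y$, direct algebra gives $j_{xy}/(\alpha+\beta)=\tanh(f/2)$ and $\sigma_{xy}/[2(\alpha+\beta)]=(f/2)\tanh(f/2)$. Inverting the second identity through $\Phi$ and substituting into the first yields the key per-transition relation
\begin{equation}
	|j_{xy}(t)|=h\bigl(\sigma_{xy}(t),\,a_{xy}(t)+a_{yx}(t)\bigr),
\end{equation}
where $h(\sigma,a)\coloneqq\sigma/[2\Phi(\sigma/(2a))]=a\tanh(\Phi(\sigma/(2a)))$ is positively homogeneous of degree one on $\mbb{R}_{\ge 0}^2$.

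Next I would prove that $h$ is concave, which by homogeneity reduces to concavity of the one-variable function $g(u)\coloneqq\tanh(\Phi(u))$. Writing $\phi=\Phi(u)$ and differentiating the defining relation $\phi\tanh\phi=u$ implicitly yields $g'(u)=2/[\sinh(2\phi)+2\phi]$, a positive decreasing function of $u$; hence $g$, and therefore $h$, is concave. Because $h$ vanishes at the origin, concavity combined with homogeneity produces the superadditivity $h(\sigma_1+\sigma_2,a_1+a_2)\ge h(\sigma_1,a_1)+h(\sigma_2,a_2)$. Applied transition-wise at fixed $t$ this gives $\sum_{x>y}|j_{xy}(t)|\le h(\sigma_t,a_t)$, and applied to a Riemann-sum discretization in time it gives $\int_0^\tau h(\sigma_t,a_t)\dd{t}\le h(\Sigma_\tau,\mca{A}_\tau)$. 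Combining these two estimates with the bound $\mca{W}_1(p^A,p^B)\le\int_0^\tau\sum_{x>y}|j_{xy}(t)|\dd{t}$ established inside the proof of Thm.~\ref{thm:cla.dis.Wass.var} proves both upper bounds in Eqs.~\eqref{eq:Wc.var.form3} and \eqref{eq:Wc.var.form4}.

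For saturation I would recycle the explicit construction used in the proof of Thm.~\ref{thm:cla.dis.Wass.var}: sequence the optimal transport plan as a chain of two-level transitions, each driven by a constant thermodynamic bias $\phi>0$. At every instant only one pair of states is active, forcing the transition-wise superadditivity to become an equality, and the ratio $\sigma_t/a_t=\phi\tanh(\phi/2)$ is time-independent, which forces the time-integrated superadditivity to be tight as well. The entire chain then collapses to $\mca{W}_1(p^A,p^B)$, establishing both equalities. The main technical obstacle I anticipate is the concavity verification for $g(u)=\tanh(\Phi(u))$; once that implicit-function derivative is secured, the remainder is a routine transcription of the argument for Thm.~\ref{thm:cla.dis.Wass.var}.
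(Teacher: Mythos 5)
Your proposal is correct, and the inequality half is in substance identical to the paper's: the per-transition identity $|j_{xy}(t)|=\sigma_{xy}(t)\,[2\Phi(\sigma_{xy}(t)/2[a_{xy}(t)+a_{yx}(t)])]^{-1}$ followed by concavity-plus-homogeneity of $h(\sigma,a)=\sigma\,[2\Phi(\sigma/2a)]^{-1}$ applied first over transitions and then over time is exactly the paper's double use of Jensen's inequality for the concave function $x\,\Phi(x/y)^{-1}$, combined with the bound $\mca{W}_1\le\int_0^\tau\sum_{x>y}|j_{xy}(t)|\dd{t}$ from the proof of Thm.~\ref{thm:cla.dis.Wass.var}. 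The two arguments part ways only at the attainability step: the paper never revisits the explicit construction, but instead appends one more inequality, $\tfrac{\Sigma_\tau}{2}\Phi(\tfrac{\Sigma_\tau}{2\mca{A}_\tau})^{-1}\le\sqrt{\Sigma_\tau\mca{M}_\tau}$ (a consequence of Eq.~\eqref{eq:lt.mag.rel}), and closes the sandwich by invoking $\min\sqrt{\Sigma_\tau\mca{M}_\tau}=\mca{W}_1$ from Thm.~\ref{thm:cla.dis.Wass.var}; you instead re-examine the sequential two-level protocol and check directly that a single active edge with constant bias $\phi$ makes both Jensen steps tight (indeed $\sigma_t/a_t=\phi\tanh(\phi/2)$ is constant). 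Both closings are valid; the paper's is shorter because it reuses the already-established minimum, while yours is more self-contained and also supplies the concavity verification of $g(u)=\tanh(\Phi(u))$ via $g'(u)=2/[\sinh(2\phi)+2\phi]$, which the paper asserts without proof. One small caveat: your perspective-function reduction and the superadditivity argument require $h$ to be extended continuously to the boundary of $\mbb{R}_{\ge 0}^2$ (transitions with $j_{xy}=0$ or vanishing activity), which is harmless but worth a sentence.
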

\begin{proof}
We first prove that 
\begin{equation}\label{eq:lem11}
\int_0^\tau\frac{\sigma_t}{2}\Phi\qty(\frac{\sigma_t}{2a_t})^{-1}\dd{t}\ge \mca{W}_1(p^A,p^B).
\end{equation}
Note that $x\Phi(x/y)^{-1}$ is a concave function over $(0,+\infty)\times(0,+\infty)$. Applying Jensen's inequality yields
\begin{align}
\frac{\sigma_t}{2}\Phi\qty(\frac{\sigma_t}{2a_t})^{-1}&\ge\sum_{x>y}\frac{\sigma_{xy}(t)}{2}\Phi\qty(\frac{\sigma_{xy}(t)}{2[a_{xy}(t)+a_{yx}(t)]})^{-1}\notag\\
&=\sum_{x>y}|j_{xy}(t)|.\label{eq:ent.llbt}
\end{align}
By taking the time integration of Eq.~\eqref{eq:ent.llbt} and using Eq.~\eqref{eq:Wc.var.form.cor}, we immediately prove Eq.~\eqref{eq:lem11}.
Moreover, using the concavity of $x\Phi(x/y)^{-1}$ yields
\begin{align}
\int_0^\tau\frac{\sigma_t}{2}\Phi\qty(\frac{\sigma_t}{2a_t})^{-1}\dd{t}&\le\frac{\Sigma_\tau}{2}\Phi\qty(\frac{\Sigma_\tau}{2\mca{A}_\tau})^{-1}\notag\\
&\le \sqrt{\Sigma_\tau\mca{M}_\tau}.
\end{align}
Thus, we have
\begin{align}
\mca{W}_1(p^A,p^B)\le \int_0^\tau\frac{\sigma_t}{2}\Phi\qty(\frac{\sigma_t}{2a_t})^{-1}\dd{t}&\le \frac{\Sigma_\tau}{2}\Phi\qty(\frac{\Sigma_\tau}{2\mca{A}_\tau})^{-1}\notag\\
&\le \sqrt{\Sigma_\tau\mca{M}_\tau}.\label{eq:ent.llbt2}
\end{align}
The proof is completed by taking the minimum of the terms on the right-hand side of Eq.~\eqref{eq:ent.llbt2} over all admissible dynamics and applying Thm.~\ref{thm:cla.dis.Wass.var}.
\end{proof}
Equation \eqref{eq:Wc.var.form4} implies that the discrete Wasserstein distance can be expressed in terms of irreversible entropy production and dynamical activity as
\begin{equation}
\mca{W}_1(p^A,p^B)=\min{\frac{\Sigma_\tau}{2}\Phi\qty(\frac{\Sigma_\tau}{2\mca{A}_\tau})^{-1}},
\end{equation}
which recovers the result obtained in Ref.~\cite{Dechant.2022.JPA}.

\begin{corollary}\label{cor:cla.dis.Wass.var2}
The discrete Wasserstein distance can be expressed in terms of pseudo entropy production and dynamical activity as
\begin{align}
\mca{W}_1( p^A,p^B )&=\min_{\msf{W}_t}{\int_0^\tau\sqrt{\sigma_t^{\rm ps}a_t}\dd{t}}\label{eq:Wc.var2.form1}\\
&=\min_{\msf{W}_t}\sqrt{\Sigma_\tau^{\rm ps}\mca{A}_\tau},\label{eq:Wc.var2.form2}
\end{align}
where $\sigma_t^{\rm ps}\coloneqq\dot{\Sigma}_t^{\rm ps}$ denotes the pseudo entropy production rate.
\end{corollary}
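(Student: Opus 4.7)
The plan is to mirror the two-step strategy used for Theorem~\ref{thm:cla.dis.Wass.var} and Corollary~\ref{cor:cla.dis.Wass.var1}: first establish the chain of inequalities
\begin{equation}
\mca{W}_1(p^A,p^B)\le\int_0^\tau\sqrt{\sigma_t^{\rm ps}a_t}\,\dd{t}\le\sqrt{\Sigma_\tau^{\rm ps}\mca{A}_\tau}
\end{equation}
for every admissible Markov jump process that transforms $p^A$ into $p^B$, and then exhibit a concrete dynamics saturating both inequalities simultaneously.

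For the upper bound in the chain, the Cauchy--Schwarz inequality applied in time to $\sqrt{\sigma_t^{\rm ps}}$ and $\sqrt{a_t}$ immediately yields the second inequality. For the first inequality, I would apply Cauchy--Schwarz at the transition level with $u_{xy}(t)=j_{xy}(t)$ and $v_{xy}(t)=a_{xy}(t)+a_{yx}(t)$:
\begin{align}
\sqrt{\sigma_t^{\rm ps}a_t}&=\left(\sum_{x>y}\frac{j_{xy}(t)^2}{a_{xy}(t)+a_{yx}(t)}\sum_{x>y}[a_{xy}(t)+a_{yx}(t)]\right)^{\!1/2}\notag\\
&\ge\sum_{x>y}|j_{xy}(t)|.
\end{align}
Integrating over $[0,\tau]$ and invoking the minimum-cost-flow identification $\int_0^\tau\sum_{x>y}|j_{xy}(t)|\,\dd{t}\ge\mca{W}_1(p^A,p^B)$ already established in the proof of Theorem~\ref{thm:cla.dis.Wass.var} closes the chain.

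For saturation, I would reuse verbatim the elementary two-state-cycle construction from the proof of Theorem~\ref{thm:cla.dis.Wass.var}: decompose the optimal transport plan into a sequence $[(x_i,y_i,\chi_i)]_{i=1}^I$ of neighbor transfers, and on each subinterval $[(i-1)\Delta,i\Delta]$ with $\Delta=\tau/I$ activate only the single edge $(x_i,y_i)$ with the same $\phi$-parametrized rates. A direct calculation gives $|j_{y_ix_i}(t)|=\chi_i/\Delta$ and $a_{y_ix_i}(t)+a_{x_iy_i}(t)=\chi_i(1+e^{-\phi})/[\Delta(1-e^{-\phi})]$, hence $\sigma_t^{\rm ps}a_t=(\chi_i/\Delta)^2$ pointwise; the $\phi$-dependent factors in $\Sigma_\tau^{\rm ps}$ and $\mca{A}_\tau$ cancel to leave $\Sigma_\tau^{\rm ps}\mca{A}_\tau=(\sum_i\chi_i)^2=\mca{W}_1(p^A,p^B)^2$, so both equalities of Eqs.~\eqref{eq:Wc.var2.form1} and \eqref{eq:Wc.var2.form2} are attained at once. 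I do not anticipate a significant obstacle: the transition-level Cauchy--Schwarz is tight precisely because only a single pair of transitions is active at any instant in the constructed dynamics, so the only genuinely non-trivial ingredient---the minimum-cost-flow identity---has already been proved for Theorem~\ref{thm:cla.dis.Wass.var} and can simply be quoted.
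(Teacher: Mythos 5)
Your proposal is correct, and the first half (the chain $\mca{W}_1\le\int_0^\tau\sqrt{\sigma_t^{\rm ps}a_t}\dd{t}\le\sqrt{\Sigma_\tau^{\rm ps}\mca{A}_\tau}$ via the transition-level and temporal Cauchy--Schwarz inequalities together with the minimum-cost-flow bound $\int_0^\tau\sum_{x>y}|j_{xy}(t)|\dd{t}\ge\mca{W}_1$) is exactly the paper's argument. Where you diverge is the attainability step. The paper does not recompute anything on the optimal dynamics: it instead appends the inequality $\Sigma_\tau^{\rm ps}\mca{A}_\tau\le\Sigma_\tau\mca{M}_\tau$ (proved separately via Prop.~\ref{prop:add.ine0} in the derivation of the improved thermodynamic uncertainty relation) to obtain the sandwich $\mca{W}_1\le\int_0^\tau\sqrt{\sigma_t^{\rm ps}a_t}\dd{t}\le\sqrt{\Sigma_\tau^{\rm ps}\mca{A}_\tau}\le\sqrt{\Sigma_\tau\mca{M}_\tau}$, and then invokes Thm.~\ref{thm:cla.dis.Wass.var} to conclude that the minimum of the outermost term is already $\mca{W}_1$, forcing the minima of all intermediate terms to coincide with it. You instead verify saturation directly on the single-edge construction, and your computation checks out: on the $i$th subinterval one finds $\sigma_t^{\rm ps}=\chi_i(1-e^{-\phi})/[\Delta(1+e^{-\phi})]$ and $a_t=\chi_i(1+e^{-\phi})/[\Delta(1-e^{-\phi})]$, so $\sigma_t^{\rm ps}a_t=(\chi_i/\Delta)^2$ and the ratio $\sigma_t^{\rm ps}/a_t$ is constant in $t$, saturating both Cauchy--Schwarz steps and giving $\Sigma_\tau^{\rm ps}\mca{A}_\tau=(\sum_i\chi_i)^2=\mca{W}_1(p^A,p^B)^2$. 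Your route is more self-contained (it never needs the $\coth$-based inequality $\Sigma_\tau^{\rm ps}\mca{A}_\tau\le\Sigma_\tau\mca{M}_\tau$), at the cost of an explicit calculation; the paper's route is shorter given that this inequality and Thm.~\ref{thm:cla.dis.Wass.var} are already in hand, and as a by-product it exhibits $(\Sigma_\tau^{\rm ps},\mca{A}_\tau)$ and $(\Sigma_\tau,\mca{M}_\tau)$ as an ordered pair of conjugate bounds. Both are complete proofs.
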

\begin{proof}
The proof strategy is the same as in Cor.~\ref{cor:cla.dis.Wass.var1}. We first prove that 
\begin{equation}\label{eq:lem21}
\int_0^\tau\sqrt{\sigma_t^{\rm ps}a_t}\dd{t}\ge \mca{W}_1(p^A,p^B).
\end{equation}
Applying the Cauchy--Schwarz inequality, we obtain 
\begin{align}
\sum_{x>y}|j_{xy}(t)|&=\sum_{x>y}\frac{|j_{xy}(t)|}{\sqrt{a_{xy}(t)+a_{yx}(t)}}\sqrt{a_{xy}(t)+a_{yx}(t)}\notag\\
&\le \sqrt{\sigma_t^{\rm ps}a_t}.\label{eq:pent.dyn}
\end{align}
By taking the time integration of Eq.~\eqref{eq:pent.dyn} and using Eq.~\eqref{eq:Wc.var.form.cor}, we immediately prove Eq.~\eqref{eq:lem21}.
Since $\Sigma_\tau^{\rm ps}\mca{A}_\tau\le\Sigma_\tau\mca{M}_\tau$, the following relation holds:
\begin{equation}\label{eq:pent.dyn2}
\mca{W}_1(p^A,p^B)\le \int_0^\tau\sqrt{\sigma_t^{\rm ps}a_t}\dd{t}\le \sqrt{\Sigma_\tau^{\rm ps}\mca{A}_\tau}\le \sqrt{\Sigma_\tau\mca{M}_\tau}.
\end{equation}
Taking the minimum of the terms on the right-hand side of Eq.~\eqref{eq:pent.dyn2} over all admissible dynamics and using Thm.~\ref{thm:cla.dis.Wass.var} complete the proof.
\end{proof}
Equation \eqref{eq:Wc.var2.form2} has the following implication.
If dynamical activity $\mca{A}_\tau$ is fixed, then the minimum pseudo entropy production can be calculated using the Wasserstein distance as
\begin{equation}
\min \Sigma_\tau^{\rm ps}=\frac{\mca{W}_1(p^A,p^B)^2}{\mca{A}_\tau},
\end{equation}
which recovers the result reported in Ref.~\cite{Dechant.2022.JPA}.
From Thm.~\ref{thm:cla.dis.Wass.var} and Cor.~\ref{cor:cla.dis.Wass.var2}, we can observe that in the context of optimal transport, ($\Sigma_\tau$, $\mca{M}_\tau$) and ($\Sigma_\tau^{\rm ps}$, $\mca{A}_\tau$) are two thermodynamic-kinetic conjugate pairs.

\section{Derivation of calculations in Sec.~\ref{sec:quan.gen}}

\subsection{Proof of Eq.~\eqref{eq:qua.ent.prod.rate}}\label{app:qua.ent.prod.exp}
Here, we derive an analytical expression of the entropy production rate $\sigma_t$.
Taking the time derivative of irreversible entropy production, we can calculate the entropy production rate as
\begin{align}
\sigma_t&=-\tr{\dot\varrho_t\ln\varrho_t}+\sum_k\tr{L_k(t)^\dagger L_k(t)\varrho_t}s_k(t)\notag\\
&=\sum_k-\tr{(\mca{D}[L_k(t)]\varrho_t)\ln\varrho_t}+\tr{L_k(t)^\dagger L_k(t)\varrho_t}s_k(t)\notag\\
&=\sum_k\tr{L_k(t)\varrho_t(s_k(t)L_k(t)^\dagger-[L_k(t)^\dagger,\ln\varrho_t])}.
\end{align}
Notice that $w_k^{xy}(t)=e^{s_k(t)}w_{k'}^{yx}(t)$.
Since $\tr{A}=\sum_x\mel{x_t}{A}{x_t}$ for any operator $A$, the entropy production rate can be calculated further as
\begin{align}
\sigma_t&=\sum_k\sum_x\mel{x_t}{L_k(t)\varrho_t( s_k(t)L_k(t)^\dagger-[L_k(t)^\dagger,\ln\varrho_t])}{x_t}\notag\\
&=\sum_k\sum_{x,y}w_k^{xy}(t)p_y(t)\qty[ s_k(t)+\ln\frac{p_y(t)}{p_x(t)}]\notag\\
&=\frac{1}{2}\sum_k\sum_{x,y}j_k^{xy}(t)\ln\frac{w_k^{xy}(t)p_y(t)}{w_{k'}^{yx}(t)p_x(t)}.\label{eq:ent.prod.rate}
\end{align}
Since $(a-b)\ln(a/b)\ge 0$ for all $a,b\ge 0$, the positivity of $\sigma_t$ is immediately derived.

\subsection{Proof of Eq.~\eqref{eq:Lt.Onsager.rel}}\label{app:qua.mobi.exp}
First, quantum dynamical state mobility can be expressed in terms of eigenvalues $\{p_x(t)\}$ and transition rates $\{w^{xy}_k(t)\}$ as
\begin{align}
 m_t&=\frac{1}{2}\sum_{k}e^{-s_k(t)/2}\ev{L_k(t)^\dagger,\sop{\varrho_t}_{s_k(t)}(\mca{P}_t[L_k(t)^\dagger])}\notag\\
&=\frac{1}{2}\sum_{k}e^{-s_k(t)}\int_0^1e^{s_k(t)u}\ev{L_k(t)^\dagger,\varrho_t^{u}(\mca{P}_t[L_k(t)^\dagger])\varrho_t^{1-u}}\dd{u}\notag\\
&=\frac{1}{2}\sum_{k}e^{-s_k(t)}\sum_{x\neq y}\int_0^1e^{s_k(t)u}w_k^{xy}(t)p_y(t)^up_x(t)^{1-u}\dd{u}.\label{eq:lt.ons.tmp1}
\end{align}
Likewise, we can calculate
\begin{align}
&\sum_x\bra{x_t}\otimes\ket{x_t}^\top \msf{O}_k(t,u)\ket{x_t}\otimes\bra{x_t}^\top\notag\\
&=\sum_x\bra{x_t}\otimes\ket{x_t}^\top L_k(t)\varrho_t^uL_k(t)^\dagger\otimes(\varrho_t^{1-u})^\top\ket{x_t}\otimes\bra{x_t}^\top\notag\\
&+\sum_x\bra{x_t}\otimes\ket{x_t}^\top\varrho_t^u\otimes (L_k(t)^\dagger\varrho_t^{1-u}L_k(t))^\top\ket{x_t}\otimes\bra{x_t}^\top\notag\\
&-\sum_x\bra{x_t}\otimes\ket{x_t}^\top L_k(t)\varrho_t^u\otimes (L_k(t)^\dagger\varrho_t^{1-u})^\top\ket{x_t}\otimes\bra{x_t}^\top\notag\\
&-\sum_x\bra{x_t}\otimes\ket{x_t}^\top\varrho_t^uL_k(t)^\dagger\otimes (\varrho_t^{1-u}L_k(t))^\top\ket{x_t}\otimes\bra{x_t}^\top\notag\\
&=\sum_{x\neq y}\qty[w_k^{xy}(t)p_x(t)^{1-u}p_y(t)^{u} + w_k^{yx}(t)p_x(t)^{u}p_y(t)^{1-u}]\notag\\
&=2\sum_{x\neq y}w_k^{xy}(t)p_y(t)^{u}p_x(t)^{1-u}.\label{eq:lt.ons.tmp2}
\end{align}
Consequently, combining Eqs.~\eqref{eq:lt.ons.tmp1} and \eqref{eq:lt.ons.tmp2} yields the desired relation:
\begin{align}
&\frac{1}{2}\sum_{x}\bra{x_t}\otimes\ket{x_t}^\top\msf{O}_t\ket{x_t}\otimes\bra{x_t}^\top\notag\\
&=\frac{1}{4}\sum_{x} \sum_{k}e^{-s_k(t)}\int_0^1e^{s_k(t)u}\bra{x_t}\otimes\ket{x_t}^\top\msf{O}_k(t,u)\ket{x_t}\otimes\bra{x_t}^\top\dd{u}\notag\\
&=\frac{1}{2}\sum_ke^{-s_k(t)}\sum_{x\neq y}\int_0^1e^{s_k(t)u}w_k^{xy}(t)p_y(t)^{u}p_x(t)^{1-u}\dd{u}\notag\\
&= m_t.
\end{align}

\subsection{Proof of Eq.~\eqref{eq:ana.exp.tra.nrm}}\label{app:qua.Wass.dis.exp.proof}
First, we prove that $\mca{W}_q(\varrho^A,\varrho^B)\le \mca{T}(p^A,p^B)$.
Let $\varrho^A=\sum_xp_x^A\dyad{x^A}$ and $\varrho^B=\sum_xp_x^B\dyad{x^B}$ be the spectral decompositions of the density matrices.
Setting $V_*=\sum_x\dyad{x^B}{x^A}$, we can verify that $V_*$ is a unitary operator and $V_*\varrho^AV_*^\dagger=\sum_xp_x^A\dyad{x^B}$.
From the definition of $\mca{W}_q$, we have
\begin{align}
\mca{W}_q(\varrho^A,\varrho^B)&=\frac{1}{2}\min_{V^\dagger V=\mbb{1}}{ \|V\varrho^AV^\dagger-\varrho^B\|_1 }\notag\\
&\le\frac{1}{2} \|V_*\varrho^AV_*^\dagger-\varrho^B\|_1\notag\\
&=\frac{1}{2}\|\sum_x(p_x^A-p_x^B)\dyad{x^B}\|_1\notag\\
&=\frac{1}{2}\sum_x|p_x^A-p_x^B|=\mca{T}(p^A,p^B).
\end{align}
Note that $\{p_x^A\}$ are increasing eigenvalues of $V\varrho^A V^\dagger$ for an arbitrary unitary operator $V$.
Let $\varsigma_x(A)$ be the $x$-th singular value of operator $A$ in ascending order. Then,
\begin{equation}
\|A-B\|_1=\sum_x\varsigma_x(A-B)\ge \sum_x|\varsigma_x(A)-\varsigma_x(B)|
\end{equation}
holds for arbitrary Hermitian operators $A$ and $B$ \cite{Bhatia.1996}.
Applying the above inequality for $A=V\varrho^A V^\dagger$ and $B=\varrho^B$ yields $\|V\varrho^A V^\dagger-\varrho^B\|_1\ge \sum_x|p_x^A-p_x^B|$, from which we immediately obtain $\mca{W}_q(\varrho^A,\varrho^B)\ge \mca{T}(p^A,p^B)$.
Consequently, Eq.~\eqref{eq:ana.exp.tra.nrm} is proved.

\subsection{Proof of Thm.~\ref{thm:qua.dis.Wass.var}}\label{app:proof.qua.thm}
Here, we prove Thm.~\ref{thm:qua.dis.Wass.var}, which can be restated as
\begin{equation}
\mca{W}_q(\varrho^A,\varrho^B)=\min_{\mca{L}_t}{\int_0^\tau\sqrt{\sigma_t m_t} \dd{t}}=\min_{\mca{L}_t}\sqrt{\Sigma_\tau\mca{M}_\tau}.
\end{equation}
First, we prove that $\text{RHS}\ge \text{LHS}$.
Note that
\begin{align}
\sigma_t&=\frac{1}{2}\sum_k\sum_{x,y}j_k^{xy}(t)\ln\frac{a_k^{xy}(t)}{a_{k'}^{yx}(t)},\\
 m_t&=\frac{1}{2}\sum_k\sum_{x\neq y}\frac{j_k^{xy}(t)}{\ln[a_k^{xy}(t)/a_{k'}^{yx}(t)]}.
\end{align}
Applying the Cauchy--Schwarz inequality and triangle inequality, we obtain
\begin{align}
\sqrt{\sigma_t m_t}&\ge\frac{1}{2}\sum_k\sum_{x\neq y}|j_k^{xy}(t)|\notag\\
&\ge\frac{1}{2}\sum_{x}|\sum_k\sum_{y(\neq x)}j_k^{xy}(t)|\notag\\
&=\frac{1}{2}\sum_{x}|\dot p_x(t)|.\label{eq:proof.pnt.bound}
\end{align}
Consequently, taking the time integration and applying Prop.~\ref{prop:arra.ine} yield the following result:
\begin{align}
\min_{\mca{L}_t}\sqrt{\Sigma_\tau\mca{M}_\tau}&\ge\min_{\mca{L}_t}{\int_0^\tau\sqrt{\sigma_t m_t}\dd{t}}\notag\\
&\ge\frac{1}{2}\min_{\mca{L}_t}{\sum_{x}\int_0^\tau|\dot p_x(t)|\dd{t}}\notag\\
&\ge\frac{1}{2}\min_{\mca{L}_t}{\sum_{x}\big|\int_0^\tau\dot p_x(t)\dd{t}\big|}\notag\\
&=\frac{1}{2}\min_{\mca{L}_t}{\sum_{x}|p_x(\tau)-p_x(0)|}\notag\\
&\ge \mca{W}_q(\varrho^A,\varrho^B).\label{eq:Wq.bound.tmp1}
\end{align}
Next, we need only show that the equality in the inequality \eqref{eq:Wq.bound.tmp1} can be achieved with particular dynamics.
First, we construct a Markov jump process with the transition rate matrix $\{\msf{W}_t\}$ that transforms the initial distribution ${p^A}$ into the final distribution ${p^B}$.
Let us consider probability path ${p_t}={p^A}+t({p^B}-{p^A})/\tau$.
We then have $\dot{p}_x(t)=(p_x^B-p_x^A)/\tau$, which is invariant for all times $t$.
We next define $S_+\coloneqq\qty{x\,|\,p_x^B\ge p_x^A}$ and $S_-\coloneqq\qty{x\,|\,p_x^B<p_x^A}$.
Then, $\sum_{x\in S_+}\dot{p}_x(t)=-\sum_{x\in S_-}\dot{p}_x(t)$.
Let $\phi>0$ be an arbitrary real positive number.
According to Prop.~\ref{prop:pre.mat}, nonnegative coefficients $\{z_{xy}\}$ exist such that
\begin{align}
\sum_{y\in S_-}z_{xy}&=\frac{\dot{p}_x(t)}{1-e^{-\phi}},~\forall x\in S_+,\\
\sum_{x\in S_+}z_{xy}&=\frac{-\dot{p}_y(t)}{1-e^{-\phi}},~\forall y\in S_-.
\end{align}
Using these coefficients, we consider the following transition rates:
\begin{align}
w_{xy}(t)&=\frac{z_{xy}}{p_y(t)},~\forall x\in S_{+},\,y\in S_-,\\
w_{yx}(t)&=e^{-\phi}\frac{z_{xy}}{p_x(t)},~\forall x\in S_{+},\,y\in S_-,\\
w_{xy}(t)&=0,~\text{otherwise}.
\end{align}
With these transition rates, we can verify that
\begin{equation}
\dot{p}_x(t)=\sum_{y(\neq x)}j_{xy}(t),~\forall x.
\end{equation}
Moreover, the irreversible entropy production rate and dynamical state mobility associated with this Markov jump process can be calculated as
\begin{align}
\sigma_t&=\sum_{x\in S_+,y\in S_-}j_{xy}(t)\ln\frac{a_{xy}(t)}{a_{yx}(t)}=\phi\sum_{x>y}|j_{xy}(t)|,\\
 m_t&=\sum_{x\in S_+,y\in S_-}\frac{j_{xy}(t)}{\ln[a_{xy}(t)/a_{yx}(t)]}={\phi}^{-1}\sum_{x>y}|j_{xy}(t)|.
\end{align}
In addition, note that 
\begin{align}
\sum_{x>y}|j_{xy}(t)|&=(1-e^{-\phi})\sum_{x\in S_+}\sum_{y\in S_-}z_{xy}\notag\\
&=\sum_{x\in S_+}\dot{p}_x(t).
\end{align}
Consequently, we have
\begin{align}
\sqrt{\Sigma_\tau\mca{M}_\tau}&=\int_0^\tau\sum_{x>y}|j_{xy}(t)|\dd{t}\notag\\
&=\sum_{x\in S_+}\int_0^\tau\dot{p}_x(t)\dd{t}\notag\\
&=\sum_{x\in S_+}[p_x(\tau)-p_x(0)]\notag\\
&=\sum_{x\in S_+}(p_x^B-p_x^A)\notag\\
&=\mca{T}(p^A,p^B)\notag\\
&=\mca{W}_q(\varrho^A,\varrho^B).\label{eq:thm2.equality}
\end{align}

We next construct Lindblad dynamics that transforms $\varrho^A$ into $\varrho^B$ and simultaneously satisfies the equality \eqref{eq:thm2.equality}.
For each pair of positive transition rates $\{w_{xy}(t),w_{yx}(t)\}$, we define the corresponding jump operators $\tilde L_k(t)=\sqrt{w_{xy}(t)}\dyad{x^A}{y^A}$ and $\tilde{L}_{k'}(t)=\sqrt{w_{yx}(t)}\dyad{y^A}{x^A}$.
We consider the following Lindblad equation:
\begin{equation}\label{eq:tilde.Lindblad.eq}
\dot{\tilde\varrho}_t=\sum_kD[\tilde L_k(t)]\tilde\varrho_t.
\end{equation}
As the initial state is diagonal in the eigenbasis $\{\ket{x^A}\}$, Eq.~\eqref{eq:tilde.Lindblad.eq} is equivalent to the classical Markov jump process previously constructed.
Given the initial state $\tilde\varrho_0=\varrho^A$, we can easily see that $\tilde\varrho_t$ is always diagonal in the eigenbasis $\{\ket{x^A}\}$ [i.e., $\tilde\varrho_t=\sum_xp_x(t)\dyad{x^A}$].
Moreover, from Eq.~\eqref{eq:thm2.equality}, it is evident that
\begin{equation}\label{eq:tilde.dyn.equ.sat}
\sqrt{\tilde{\Sigma}_\tau\tilde{\mca{M}}_\tau}=\mca{W}_q(\varrho^A,\varrho^B).
\end{equation}
Now, consider the unitary operator $U_\tau=\sum_x\dyad{x^B}{x^A}$. A Hermitian Hamiltonian $H$ exists such that $U_\tau=e^{-iH\tau}$.
Using this Hamiltonian, we consider the following Lindblad dynamics:
\begin{equation}\label{eq:final.Lindblad.eq}
\dot\varrho_t=-i[H,\varrho_t]+\sum_kD[L_k(t)]\varrho_t,
\end{equation}
where jump operators are given by $L_k(t)=U_t\tilde L_k(t)U_t^\dagger$, and $U_t\coloneqq e^{-iHt}$.
The density matrix $\varrho_t$ is related to that in Eq.~\eqref{eq:tilde.Lindblad.eq} as $\varrho_t=U_t\tilde\varrho_tU_t^\dagger$.
We can confirm that the dynamics \eqref{eq:final.Lindblad.eq} transforms the density matrix $\varrho_0=\varrho^A$ into $\varrho_\tau=\varrho^B$, and irreversible entropy production and dynamical state mobility remain unchanged:
\begin{align}
\Sigma_\tau&=\tilde\Sigma_\tau,\\
\mca{M}_\tau&=\tilde{\mca{M}}_\tau.
\end{align}
Combining this with Eq.~\eqref{eq:tilde.dyn.equ.sat}, we can show that the inequality \eqref{eq:Wq.bound.tmp1} can be saturated as
\begin{equation}
\sqrt{\Sigma_\tau\mca{M}_\tau}=\mca{W}_q(\varrho^A,\varrho^B).
\end{equation}

\subsection{Quantum variational formula in terms of entropy production and dynamical activity}\label{app:qua.cor.proof}
\begin{corollary}\label{cor:qua.dis.Wass.var1}
The quantum Wasserstein distance can be expressed in terms of irreversible entropy production and dynamical activity as
\begin{align}
\mca{W}_q( \varrho^A,\varrho^B )&=\min_{\mca{L}_t}{\int_0^\tau\frac{\sigma_t}{2}\Phi\qty(\frac{\sigma_t}{2a_t})^{-1}\dd{t}}\label{eq:Wq.var.form3}\\
&=\min_{\mca{L}_t}{\frac{\Sigma_\tau}{2}\Phi\qty(\frac{\Sigma_\tau}{2\mca{A}_\tau})^{-1}}.\label{eq:Wq.var.form4}
\end{align}
\end{corollary}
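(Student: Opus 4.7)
The plan is to mirror the proof of Cor.~\ref{cor:cla.dis.Wass.var1} in the quantum setting, using Thm.~\ref{thm:qua.dis.Wass.var} to pin down both ends of a sandwich. The key ingredient is the quantum classical-like representation derived in Sec.~\ref{sec:quan.gen}: the Lindblad dynamics induces a master equation \eqref{eq:eigen.master.eq} for the eigenvalues $\{p_x(t)\}$ of $\varrho_t$, with transition-level entropy production and activity contributions $\sigma_k^{xy}(t)\coloneqq j_k^{xy}(t)\ln[a_k^{xy}(t)/a_{k'}^{yx}(t)]$ and $a_k^{xy}(t)+a_{k'}^{yx}(t)$, respectively, whose sums yield $\sigma_t$ and $a_t$. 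This is precisely the structure exploited in the classical Cor.~\ref{cor:cla.dis.Wass.var1}, so the same analytic inequalities carry over with the sums now also indexed by the jump label $k$.

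First, I would apply Jensen's inequality to the jointly concave function $(x,y)\mapsto x\Phi(x/y)^{-1}$ over the transition-and-jump indices, obtaining, analogously to Eq.~\eqref{eq:ent.llbt},
\begin{equation}
\frac{\sigma_t}{2}\Phi\!\qty(\frac{\sigma_t}{2a_t})^{-1}\ge \frac{1}{2}\sum_k\sum_{x\neq y}\frac{\sigma_k^{xy}(t)}{2}\Phi\!\qty(\frac{\sigma_k^{xy}(t)}{2[a_k^{xy}(t)+a_{k'}^{yx}(t)]})^{-1}= \frac{1}{2}\sum_k\sum_{x\neq y}|j_k^{xy}(t)|,
\end{equation}
where the last equality uses the identity $\tfrac{1}{2}\xi\Phi(\xi/\alpha)^{-1}=|\delta|$ whenever $\xi=\delta\ln[(\alpha+\delta)/(\alpha-\delta)]$, exactly as in the classical case. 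Then by the triangle inequality and the master equation for $\{p_x(t)\}$,
\begin{equation}
\frac{1}{2}\sum_k\sum_{x\neq y}|j_k^{xy}(t)|\ge \frac{1}{2}\sum_x|\dot p_x(t)|,
\end{equation}
which reproduces Eq.~\eqref{eq:proof.pnt.bound}. Time-integrating and invoking Prop.~\ref{prop:arra.ine} to relate $\tfrac{1}{2}\sum_x|p_x(\tau)-p_x(0)|$ to $\mca{T}$ of the appropriately ordered eigenvalue vectors, and hence to $\mca{W}_q(\varrho^A,\varrho^B)$ via Eq.~\eqref{eq:ana.exp.tra.nrm}, gives
\begin{equation}
\int_0^\tau\frac{\sigma_t}{2}\Phi\!\qty(\frac{\sigma_t}{2a_t})^{-1}\dd{t}\ge \mca{W}_q(\varrho^A,\varrho^B).
\end{equation}

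Second, I would use the concavity of $x\Phi(x/y)^{-1}$ now in the time variable (Jensen's inequality in $t$) to get
\begin{equation}
\int_0^\tau\frac{\sigma_t}{2}\Phi\!\qty(\frac{\sigma_t}{2a_t})^{-1}\dd{t}\le \frac{\Sigma_\tau}{2}\Phi\!\qty(\frac{\Sigma_\tau}{2\mca{A}_\tau})^{-1}.
\end{equation}
Combining this with the pointwise bound $\tfrac{x}{2}\Phi(x/2y)^{-1}\le \sqrt{xy/2}$ (equivalently $\Phi(u)\ge\sqrt{u\tanh u}$ absorbed into the quadratic form), I obtain the chain
\begin{equation}
\mca{W}_q(\varrho^A,\varrho^B)\le \int_0^\tau\frac{\sigma_t}{2}\Phi\!\qty(\frac{\sigma_t}{2a_t})^{-1}\dd{t}\le \frac{\Sigma_\tau}{2}\Phi\!\qty(\frac{\Sigma_\tau}{2\mca{A}_\tau})^{-1}\le \sqrt{\Sigma_\tau\mca{M}_\tau}.
\end{equation}
Taking the minimum over all admissible $\{\mca{L}_t\}$ and invoking Thm.~\ref{thm:qua.dis.Wass.var}, the leftmost and rightmost terms coincide, which forces all four quantities to be equal after minimization, proving Eqs.~\eqref{eq:Wq.var.form3} and \eqref{eq:Wq.var.form4}.

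The main obstacle I anticipate is verifying that the sandwich actually collapses at the minimizer, i.e.\ that the dynamics constructed in the proof of Thm.~\ref{thm:qua.dis.Wass.var} simultaneously saturates the intermediate functionals. This should go through because the optimal dynamics there reduces to a classical Markov jump process (after the unitary conjugation by $U_t$) with only one pair of active transitions carrying a constant force $\phi$, for which $\sigma_t=\phi\sum_{x>y}|j_{xy}(t)|$ and $a_t$ take the same proportional form, so the Jensen and Cauchy--Schwarz inequalities above become equalities. The mild technicality is that the quantum $a_t$ includes off-diagonal matrix elements of $\varrho_t$ in the eigenbasis of the jump operators; however, in the constructed dynamics the density matrix remains diagonal in the chosen eigenbasis, collapsing $a_t$ exactly to its classical counterpart. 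Once this saturation is confirmed, the corollary follows.
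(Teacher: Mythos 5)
Your lower-bound half is exactly the paper's argument: Jensen's inequality for the concave, degree-one homogeneous function $x\Phi(x/y)^{-1}$ over the jump-and-transition indices, the identity $\tfrac{1}{4}\sigma_k^{xy}\Phi\qty(\sigma_k^{xy}/2[a_k^{xy}+a_{k'}^{yx}])^{-1}=\tfrac{1}{2}|j_k^{xy}|$, the triangle inequality down to $\tfrac{1}{2}\sum_x|\dot p_x(t)|$, and Prop.~\ref{prop:arra.ine} together with Eq.~\eqref{eq:ana.exp.tra.nrm}. Likewise, your final paragraph correctly identifies how the minimum is attained: the dynamics built in the proof of Thm.~\ref{thm:qua.dis.Wass.var} carries a uniform force $\phi$ on every active transition, keeps $\varrho_t$ diagonal, and gives $\sigma_t=\phi\sum_{x>y}|j_{xy}(t)|$ and $a_t=\coth(\phi/2)\sum_{x>y}|j_{xy}(t)|$, so $\tfrac{\sigma_t}{2}\Phi(\sigma_t/2a_t)^{-1}=\sum_{x>y}|j_{xy}(t)|$ and all inequalities collapse. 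That direct verification is precisely how the paper closes the proof.

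The one step that does not survive scrutiny is the link $\frac{\Sigma_\tau}{2}\Phi\qty(\frac{\Sigma_\tau}{2\mca{A}_\tau})^{-1}\le\sqrt{\Sigma_\tau\mca{M}_\tau}$, which you present as holding for all Lindblad dynamics. Your pointwise bound $\tfrac{x}{2}\Phi(x/2y)^{-1}\le\sqrt{xy/2}$ (i.e.\ $\Phi(u)\ge\sqrt{u}$) applied with $y=\mca{A}_\tau$ only gives $\sqrt{\Sigma_\tau\mca{A}_\tau/2}$, which dominates $\sqrt{\Sigma_\tau\mca{M}_\tau}$ rather than being dominated by it. The correct classical version, Eq.~\eqref{eq:lt.mag.rel}, is not a two-variable function inequality: it rests on the exact transition-level identity $m_{xy}=\tfrac{\sigma_{xy}}{4}\Phi\qty(\sigma_{xy}/2[a_{xy}+a_{yx}])^{-2}$ together with $a_t=\sum_{x>y}[a_{xy}(t)+a_{yx}(t)]$. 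In the quantum case this last equality fails: $a_t$ also contains the diagonal contributions $\sum_k\sum_x|\mel{x_t}{L_k(t)}{x_t}|^2p_x(t)$, which feed the dynamical activity but not $\sigma_t$, $m_t$, or the currents. Adding, say, a Hermitian jump operator proportional to $\dyad{x_t}$ with a large rate sends $\mca{A}_\tau\to\infty$ at fixed $\Sigma_\tau$ and $\mca{M}_\tau$, so the left-hand side behaves as $\sqrt{\Sigma_\tau\mca{A}_\tau/2}\to\infty$ and the inequality is violated. This is presumably why the paper abandons the classical-style squeeze through $\sqrt{\Sigma_\tau\mca{M}_\tau}$ in the quantum corollary and instead verifies attainment directly on the Thm.~\ref{thm:qua.dis.Wass.var} minimizer. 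Your proof still goes through, because the saturation check you sketch supplies the upper bound $\min_{\mca{L}_t}\frac{\Sigma_\tau}{2}\Phi\qty(\frac{\Sigma_\tau}{2\mca{A}_\tau})^{-1}\le\mca{W}_q(\varrho^A,\varrho^B)$ on its own; simply delete the false universal inequality from the chain.
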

\begin{proof}
First, we prove that
\begin{equation}\label{eq:app.qua.cor.tmp1}
\mca{W}_q( \varrho^A,\varrho^B )\le {\int_0^\tau\frac{\sigma_t}{2}\Phi\qty(\frac{\sigma_t}{2a_t})^{-1}\dd{t}}\le {\frac{\Sigma_\tau}{2}\Phi\qty(\frac{\Sigma_\tau}{2\mca{A}_\tau})^{-1}}.
\end{equation}
Noting that $x\Phi(x/y)^{-1}$ is a concave function and
\begin{align}
\sigma_t&=\frac{1}{2}\sum_{k,x,y}j_k^{xy}(t)\ln\frac{a_k^{xy}(t)}{a_{k'}^{yx}(t)}\eqqcolon\frac{1}{2}\sum_{k,x,y}\sigma_k^{xy}(t),\\
a_t&=\frac{1}{2}\sum_{k,x,y}[a_k^{xy}(t)+a_{k'}^{yx}(t)],
\end{align}
we obtain the following result from Jensen's inequality:
\begin{align}
\frac{\sigma_t}{2}\Phi\qty(\frac{\sigma_t}{2a_t})^{-1}&\ge\sum_{k,x,y}\frac{\sigma_k^{xy}(t)}{4}\Phi\qty(\frac{\sigma_k^{xy}(t)}{2[a_k^{xy}(t)+a_{k'}^{yx}(t)]})^{-1}\notag\\
&=\frac{1}{2}\sum_{k,x,y}|j_k^{xy}(t)|\notag\\
&\ge \frac{1}{2}\sum_x|\dot{p}_x(t)|.
\end{align}
Taking the time integration, we can immediately prove Eq.~\eqref{eq:app.qua.cor.tmp1}:
\begin{align}
\mca{W}_q(\varrho^A,\varrho^B)&\le \frac{1}{2}\int_0^\tau\sum_x|\dot{p}_x(t)|\dd{t}\notag\\
&\le \int_0^\tau\frac{\sigma_t}{2}\Phi\qty(\frac{\sigma_t}{2a_t})^{-1} \dd{t}\notag\\
&\le \frac{\Sigma_\tau}{2}\Phi\qty(\frac{\Sigma_\tau}{2\mca{A}_\tau})^{-1}.
\end{align}
Next, we show that the equalities in Eq.~\eqref{eq:app.qua.cor.tmp1} can be attained with the dynamics constructed in the proof of Thm.~\ref{thm:qua.dis.Wass.var}.
Notice that the density matrix $\tilde{\varrho}_t$ of Lindblad dynamics \eqref{eq:tilde.Lindblad.eq} can be expressed as $\tilde{\varrho}_t=\sum_xp_x(t)\dyad{x^A}$.
Therefore, the density matrix $\varrho_t$ of Lindblad dynamics \eqref{eq:final.Lindblad.eq} reads $\varrho_t=\sum_xp_x(t)U_t\dyad{x^A}U_t^\dagger$, the time-dependent eigenvectors of which are $\ket{x_t}=U_t\ket{x^A}$.
The jump operators are given by $L_k(t)=\sqrt{w_{xy}(t)}U_t\dyad{x^A}{y^A}U_t^\dagger$.
Using these quantities, we can calculate
\begin{align}
w_k^{x'y'}(t)&=\delta_{xx'}\delta_{yy'}w_{xy}(t).
\end{align}
In addition, the entropy production and dynamical activity rates can be calculated as
\begin{align}
\sigma_t&=\phi\sum_{x>y}|j_{xy}(t)|,\\
a_t&=\coth(\phi/2)\sum_{x>y}|j_{xy}(t)|,
\end{align}
where $\{j_{xy}(t)\}$ are probability currents in the classical Markov jump process.
Consequently, we obtain the following relations:
\begin{align}
\frac{\sigma_t}{2}\Phi\qty(\frac{\sigma_t}{2a_t})^{-1}&=\frac{1}{2}\phi\sum_{x>y}|j_{xy}(t)|\Phi\qty(\frac{\phi}{2\coth(\phi/2)})^{-1}\notag\\
&=\sum_{x>y}|j_{xy}(t)|,\label{eq:app.qua.cor.tmp2}\\
{\frac{\Sigma_\tau}{2}\Phi\qty(\frac{\Sigma_\tau}{2\mca{A}_\tau})^{-1}}&=\int_0^\tau\sum_{x>y}|j_{xy}(t)|\dd{t}.\label{eq:app.qua.cor.tmp3}
\end{align}
Combining Eqs.~\eqref{eq:app.qua.cor.tmp2}, \eqref{eq:app.qua.cor.tmp3}, and \eqref{eq:thm2.equality} verifies the equalities of Eq.~\eqref{eq:app.qua.cor.tmp1}.
\end{proof}

\section{Derivation of calculations in Sec.~\ref{sec:appl}}

\subsection{Thermodynamic speed limit in terms of the trace distance}\label{app:qsl.trace.dist}
The Wasserstein distance is used as a metric between quantum states in the speed limits in Eq.~\eqref{eq:qsl1}.
Here, we show that another thermodynamic speed limit with a different metric can also be obtained.
Specifically, we derive a speed limit using the trace distance in the following.
Let $\varrho_t=\sum_xp_x(t)\dyad{x_t}$ be the spectral decomposition of the density matrix $\varrho_t$. Then, as previously shown in Ref.~\cite{Funo.2019.NJP}, we have
\begin{equation}
\|\dot{\varrho}_t\|_1\le 2(\Delta H_t+\Delta H_t^D)+\sum_x|\dot{p}_x(t)|,
\end{equation}
where
\begin{align}
(\Delta H_t)^2&=\tr{H_t^2\varrho_t}-(\tr{H_t\varrho_t})^2,\\
(\Delta H_t^D)^2&=\tr{(H_t^D)^2\varrho_t}-(\tr{H_t^D\varrho_t})^2,\\
H_t^D&\coloneqq\sum_{x\neq y}\frac{i\mel{x_t}{\sum_k\mca{D}[L_k]\varrho_t}{y_t}}{p_y(t)-p_x(t)}\dyad{x_t}{y_t}.
\end{align}
In addition, as shown in Eq.~\eqref{eq:proof.pnt.bound}, we can prove that
\begin{equation}\label{eq:pnt.dot.bound}
\sum_x|\dot p_x(t)|\le 2\sqrt{\sigma_t m_t}.
\end{equation}
Taking the time integration and using the triangle inequality for the trace norm, we obtain
\begin{align}
\mca{T}(\varrho_0,\varrho_\tau)&\le\frac{1}{2}\int_0^\tau\|\dot\varrho_t\|_1\dd{t}\notag\\
&\le\tau\ev{\Delta H+\Delta H^D+\sqrt{\sigma m}}_\tau,
\end{align}
which yields the following speed limit:
\begin{align}
\tau&\ge\frac{\mca{T}(\varrho_0,\varrho_\tau)}{\ev{\Delta H+\Delta H^D+\sqrt{\sigma m}}_\tau}\notag\\
&\ge \frac{\mca{T}(\varrho_0,\varrho_\tau)}{\ev{\Delta H}_\tau+\ev{\Delta H^D}_\tau+{\sqrt{\ev{\sigma}_\tau\ev{m}_\tau}}}.\label{eq:trace.dist.qsl}
\end{align}
Since $\ev{m}_\tau\le\ev{a}_\tau/2$, this new speed limit is stronger than the bound reported in Ref.~\cite{Funo.2019.NJP}, which reads
\begin{equation}
\tau\ge \frac{\mca{T}(\varrho_0,\varrho_\tau)}{\ev{\Delta H}_\tau+\ev{\Delta H^D}_\tau+{\sqrt{\ev{\sigma}_\tau\ev{a}_\tau/2}}}
\end{equation}
In the classical limit, the speed limit \eqref{eq:trace.dist.qsl} reduces to the following bound:
\begin{align}
\tau\ge \frac{\mca{T}(p^A,p^B)}{\ev{\sqrt{{\sigma}{m}}}_\tau} \ge\frac{\mca{T}(p^A,p^B)}{\sqrt{\ev{\sigma}_\tau\ev{m}_\tau}}.
\end{align}

\subsection{Proof of Eq.~\eqref{eq:erasure.error.bound}}\label{app:cla.erase.guar}
Equation \eqref{eq:erasure.error.bound} is the consequence of the following lemma.
\begin{lemma}
If an erasure protocol satisfies $\|\Lambda_\tau\overline{p}-{p_*}\|_F\le\delta$, where $\delta>0$ is a sufficiently small number, then for an arbitrary probability distribution ${p_0}$, the following inequality holds:
\begin{equation}
\|\Lambda_\tau{p_0}-{p_*}\|_F\le \sqrt{2d\delta}\xrightarrow{\delta\to 0} 0.
\end{equation}
\end{lemma}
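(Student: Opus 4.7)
The plan is to exploit the linearity of $\Lambda_\tau$ together with the fact that $\Lambda_\tau$ maps probability distributions to probability distributions (nonnegative, normalized). The central object will be the ``erasure residual'' $a_x \coloneqq 1 - (\Lambda_\tau e_x)_1$ associated with each basis distribution $e_x$, i.e., the probability of failing to reach the ground state when starting from the pure state $x$. Since $\Lambda_\tau e_x$ is a probability distribution, each $a_x \in [0,1]$. The rough idea is that the hypothesis gives an averaged control on $\sum_x a_x$, and that even an adversarial initial distribution $p_0$ can only sample these $a_x$'s, so closeness to $p_*$ in the first coordinate is preserved up to a factor of $d$.

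The key steps would be as follows. First, by linearity, $(\Lambda_\tau \overline{p})_1 = 1 - (1/d)\sum_x a_x$. The hypothesis $\|\Lambda_\tau\overline{p} - p_*\|_F \le \delta$ controls the first coordinate (since $p_* = e_1$), giving $(1/d)\sum_x a_x \le \delta$, hence $\sum_x a_x \le d\delta$. Second, for an arbitrary initial distribution $p_0 = \sum_x p_0(x)\, e_x$, linearity yields $(\Lambda_\tau p_0)_1 = 1 - \sum_x p_0(x)\,a_x$. Because $a_x \ge 0$ and $\sum_x a_x \le d\delta$, the maximum of $a_x$ is itself at most $d\delta$; therefore $\sum_x p_0(x)\, a_x \le \max_x a_x \le d\delta$, so $(\Lambda_\tau p_0)_1 \ge 1 - d\delta$. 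Third, I would convert this lower bound on the first coordinate into a Frobenius-norm bound using the elementary inequality
\begin{equation}
\|v - p_*\|_F^2 \;\le\; 2\bigl(1 - v_1\bigr)
\end{equation}
valid for any probability distribution $v$, which follows immediately from $\sum_y v_y^2 \le \sum_y v_y = 1$ (since $v_y \in [0,1]$). Combining, $\|\Lambda_\tau p_0 - p_*\|_F^2 \le 2d\delta$, which is the claim.

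The part that requires attention (though it turns out to be mild) is Step 2: one might worry that an adversarial $p_0$ concentrates on the single state $x_*$ with the largest residual $a_{x_*}$, producing a bound much worse than $\delta$. What saves us is precisely that the $a_x$'s are nonnegative and their sum is at most $d\delta$, so even the worst single $a_x$ is bounded by $d\delta$; the factor $d$ in the final rate $\sqrt{2d\delta}$ is the quantitative expression of this concentration penalty. No deeper structure of $\Lambda_\tau$ beyond linearity and stochasticity is needed, so the argument is self-contained and should transfer verbatim to any linear, trace-preserving, positivity-preserving map on the probability simplex.
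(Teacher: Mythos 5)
Your proof is correct and follows essentially the same route as the paper: both arguments use linearity and positivity to transfer the first-coordinate bound from $\overline{p}$ to an arbitrary $p_0$ with a factor of $d$ (yielding $1-(\Lambda_\tau p_0)_1\le d\delta$), and then convert this into the Frobenius-norm bound via the elementary estimate $\sum_{y}v_y^2\le\sum_y v_y$. The only cosmetic difference is that you decompose $\overline{p}=\frac{1}{d}\sum_x e_x$ into basis residuals $a_x$, whereas the paper writes $\overline{p}=\frac{1}{d}p_0+\frac{d-1}{d}p_0'$ with a complementary distribution $p_0'$; these are two phrasings of the same positivity argument.
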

\begin{proof}
For any probability distribution ${p_0}$, a distribution ${p_0'}$ always exists such that ${p_0}+(d-1){p_0'}=\vb*{1}=d{p_*}$.
Indeed, the distribution ${p_0'}$ can be chosen as ${p_0'}=(\vb*{1}-{p_0})/(d-1)$.
Here, $\vb*{1}=[1,\dots,1]^\top$ is the all-one vector.
We then define ${p_\tau}\coloneqq \Lambda_\tau{p_0}$ and ${p_\tau'}\coloneqq \Lambda_\tau{p_0'}$ and obtain the following relation:
\begin{equation}
\Lambda_\tau\overline{p}=\frac{1}{d}\Lambda_\tau({p_0}+(d-1){p_0'})=\frac{1}{d}{p_\tau}+\frac{d-1}{d}{p_\tau'}.
\end{equation}
Therefore, the condition $\|\Lambda_\tau\overline{p}-{p_*}\|_F\le\delta$ is equivalent to
\begin{equation}
\|({p_\tau}+(d-1){p_\tau'})/d-{p_*}\|_F^2\le\delta^2.\label{eq:fnorm.tmp1}
\end{equation}
It suffices to prove that $\|{p_\tau}-{p_*}\|_F^2\le 2d\delta$.
From Eq.~\eqref{eq:fnorm.tmp1}, we have
\begin{align}
|[p_{\tau,1}+(d-1)p_{\tau,1}']/d-p_{*,1}|^2&\le \|[{p_\tau}+(d-1){p_\tau'}]/d-{p_*}\|_F^2\notag\\
&\le\delta^2.
\end{align}
Consequently,
\begin{align}
\frac{1-p_{\tau,1}}{d}+(d-1)\frac{1-p_{\tau,1}'}{d}\le{\delta}\Rightarrow 1-d{\delta}\le p_{\tau,1}\le 1.\label{eq:fnorm.tmp2}
\end{align}
The last inequality in Eq.~\eqref{eq:fnorm.tmp2} immediately derives $|1-p_{\tau,1}|\le d{\delta}$ and $|p_{\tau,1}|\ge 1-d{\delta}\ge 0$.
From the inequality $\sum_{n=1}^{d}|p_{\tau,n}|^2\le \sum_{n=1}^{d}p_{\tau,n}=1$, the partial sum $\sum_{n=2}^{d}|p_{\tau,n}|^2$ can be upper bounded as
\begin{equation}
\sum_{n=2}^{d}|p_{\tau,n}|^2\le 1-|p_{\tau,1}|^2\le 1-(1-d{\delta})^2=2d{\delta}-d^2\delta^2.
\end{equation}
Combining these inequalities, we obtain
\begin{equation}
\|{p_\tau}-{p_*}\|_F^2=|1-p_{\tau,1}|^2+\sum_{n=2}^{d}|p_{\tau,n}|^2\le d^2\delta^2+2d{\delta}-d^2\delta^2=2d{\delta},
\end{equation}
which completes the proof.
\end{proof}

\end{document}